\newtheoremstyle{thms}     
  {\topsep}                
  {\topsep}                
  {\itshape}               
  {}                       
  {\bfseries}              
  {}                       
  {5pt plus 1pt minus 1pt} 
  {}                       
\newtheoremstyle{defs}     
  {\topsep}                
  {\topsep}                
  {\upshape}               
  {}                       
  {\bfseries}              
  {}                       
  {5pt plus 1pt minus 1pt} 
  {}                       
\theoremstyle{thms}
\newtheorem{theorem}{Theorem}[section]
\newtheorem{lemma}[theorem]{Lemma}
\newtheorem{proposition}[theorem]{Proposition}
\theoremstyle{defs}
\newtheorem{definition}[theorem]{Definition}
\newtheorem{example}[theorem]{Example}
\newtheorem{remark}[theorem]{Remark}
\def\allFormulae{\mathcal{L}}
\def\allAutoepistemicFormulae{\mathcal{L}_{\mathrm{ae}}}
\def\ie{\emph{i.e.}}
\def\eg{\emph{e.g.}}
\def\true{1}
\def\xor{\oplus}
\def\false{0}
\def\id{\mathrm{id}}
\def\imp{\mathrel{\rightarrow}}
\def\nimp{\mathrel{\nrightarrow}}
\def\leqlogm{\leq^{\log}_m}
\def\equivlogm{\equiv^{\log}_m}
\def\N{\mathbb{N}}
\newcommand\dual[1]{\mathrm{dual}(#1)}
\newcommand\theorems[1]{\mathrm{Th}(#1)}
\newcommand\Vars[1]{\mathrm{Vars}(#1)}
\newcommand\Lits[1]{\mathrm{Lits}(#1)}
\newcommand\SFL[1]{\mathrm{SF}^L(#1)}
\def\CloneBF{\mathsf{BF}}
\def\CloneR{\mathsf{R}}
\def\CloneM{\mathsf{M}}
\def\CloneS{\mathsf{S}}
\def\CloneV{\mathsf{V}}
\def\CloneD{\mathsf{D}}
\def\CloneE{\mathsf{E}}
\def\CloneL{\mathsf{L}}
\def\CloneN{\mathsf{N}}
\def\CloneI{\mathsf{I}}
\newcommand\SigmaP[1]{\Sigma^\mathrm{p}_{#1}}
\newcommand\PiP[1]{\Pi^\mathrm{p}_{#1}}
\newcommand\DeltaP[1]{\Delta^\mathrm{p}_{#1}}
\renewcommand\P{\mathrm{P}}
\newcommand\NP{\mathrm{NP}}
\newcommand\NL{\mathrm{NL}}
\renewcommand\L{\mathrm{L}}
\newcommand\ParityL{\oplus\L}
\newcommand\co{\mathrm{co}}
\newcommand\FP{\mathrm{FP}}
\newcommand\SharpP{\#\P}
\newcommand\SharpCoNP{\#\mathord{\cdot}\co\NP}
\title{Complexity of Non-Monotonic Logics}
\author{Michael Thomas and Heribert Vollmer\thanks{
  Leibniz Universit\"at Hannover,
  Institut f\"ur Theoretische Informatik, 
  \{thomas,vollmer\}@thi.uni-hannover.de. 
  Work partially supported by DFG grant VO 630/6-2.}}
\begin{document}

\maketitle

\section{Introduction}

Over the past few decades, non-monotonic reasoning has developed to be one of the 
most important topics in computational logic and artificial intelligence. 
The non-monotonicity here refers to the fact that, while in usual (monotonic) 
reasoning adding more axioms leads to potentially more possible conclusions, 
in non-monotonic reasoning adding new facts to a knowledge base may prevent 
previously valid conclusions. 
Different ways to introduce non-monotonic aspects to classical logic have been considered:
\begin{enumerate}\def\labelenumi{(\theenumi)} \itemsep 0pt
\item The derivation process may be extended using non-monotonic inference rules.
\item The logical language may be extended with non-monotonic belief operators.
\item The definition of semantics may be changed.
\end{enumerate}

In this survey we consider a logical formalism from each of the above possibilities, namely
\begin{itemize} \itemsep 0pt
\item Reiter's \emph{default logic} (as a candidate for possibility (1) above), which introduces default inference rules of the form $\frac{\alpha:\beta}{\gamma}$, where $\frac{\alpha:\beta}{\gamma}$ intuitively expresses that $\gamma$ can be derived from $\alpha$ as long as $\beta$ is consistent with our knowledge;
\item Moore's \emph{autoepistemic logic} (a candidate for (2)), that extends classical logic with a modal operator $L$ to express the beliefs of an ideal rational agent, in the sense that $L\varphi$ expresses that $\varphi$ is provable;
\item McCarthy's \emph{circumscription} (as candidate for (3)), which restricts the semantics to the minimal models of a formula or set of formulae. 
\end{itemize}
Additionally we survey \emph{abduction}, where one is not interested in inferences from a given knowledge base 
but in computing possible explanations for an observation with respect to a given knowledge base.

Complexity results for different reasoning tasks for propositional variants of these logics have been studied already in the nineties. It was shown that in each case, the complexity is higher than for usual propositional logic (typically complete for some level of the polynomial-time hierarchy). In recent years, however, a renewed interest in complexity issues can be observed. 
One current focal approach is to consider parameterized problems and identify reasonable parameters that allow for FPT algorithms.
In another approach, the emphasis lies on identifying fragments, \ie, restriction of the logical language, that allow more efficient algorithms for the most important reasoning tasks. 

In this survey we focus on this second aspect. We describe complexity results for fragments of logical languages obtained by either restricting the allowed set of operators (\eg, forbidding negations one might consider only monotone formulae) or by considering only formulae in conjunctive normal form but with generalized clause types (which are also called Boolean constraint satisfaction problems). 

The algorithmic problems we consider are suitable variants of satisfiability and implication in each of the logics, but also certain \emph{counting problems}, where one is not only interested in the existence of certain objects (\eg, models of a formula) 
but asks for their number.

\section{Post's Lattice}

  In 1941, Post showed that the sets of Boolean functions closed under projections and arbitrary composition, called \emph{clones}, form a lattice containing only countably infinite such closed sets, and he identified a finite base for each of them~\cite{pos41}. 
  The closure operation, denoted by $[\cdot]$, is not arbitrarily chosen but rather captures an intuitive understanding of expressiveness: Given a set $B$ of Boolean functions, $[B]$ denotes the set of Boolean functions expressible using functions from $B$, or equivalently: computable with Boolean circuits with gates performing functions from $B$. 
  Moreover, it is well behaved with respect to computational complexity;
  \eg , if $\Pi(B)$ is a decision problem defined over Boolean circuits with gates corresponding to Boolean functions from $B$, then $\Pi(B) \leqlogm \Pi(B')$ for all finite sets $B'$ of Boolean functions such that all functions from $B$ can be expressed in $B'$, \ie, $B \subseteq [B']$. 
  Similar statements hold for decision problems defined over Boolean formulae (see~\cite{tho10}).
  \emph{Post's lattice} thus holds the key to study and classify the computational complexity of problems parameterized by finite sets of available Boolean functions.
  In this section, we will define the required terms and notation to introduce Post's lattice. 

  Let $\allFormulae$ be the set of propositional formulae, \ie, 
  the set of formulae defined via
  \[
    \varphi ::= a \mid f(\varphi,\ldots,\varphi), 
  \]
  where $a$ is a proposition and $f$ is an $n$-ary Boolean function (we do not distinguish between connectives and their associated functions).
  For a finite set $B$ set of Boolean functions, a \emph{$B$-formula} is a Boolean formula using functions from $B$ only. The set of all $B$-formulae is denoted by $\allFormulae(B)$.

  A \emph{clone} is a set of Boolean functions that is closed under superposition, \emph{i.e.},
  $B$ contains all projections (that is, the functions $\mathrm{I}^n_m(x_1,\ldots,x_n) = x_m$ for $n  \in \N$ and $1 \leq m \leq n$) 
  and is closed under arbitrary composition \cite{pip79}. 
  For a set $B$ of Boolean functions, we denote by $[B]$ the smallest clone containing $B$ and 
  call $B$ a \emph{base} for $[B]$. 
  A $B$-formula $g$ is called \emph{$B$-representation} of $f$ if $f \equiv g$.

  Post showed that the set of all clones ordered by inclusion together with meet $A\wedge B=[A \cap B]$ and join $A\vee B=[A \cup B]$ forms the lattice depicted in Figure~\ref{fig:post's_lattice}.
  To give the list of all the clones, we need the following properties. 
  Say that a set $A \subseteq \{\false,\true\}^n$ is \emph{$c$-separating}, $c \in \{\false,\true\}$, 
  if there exists an $i \in \{1, \ldots , n\}$ such that $(a_1,\ldots,a_n) \in A$ implies $a_i = c$. 
  Let $f$ be an $n$-ary Boolean function and define the dual of $f$ to be the Boolean function 
  $\dual{f}(x_1, \ldots , x_n) := \neg f(\neg x_1, \ldots , \neg x_n)$.
  We say that:
  \begin{itemize} \itemsep 0pt
    \item $f$ is \emph{$c$-reproducing} if $f(c, \ldots , c) = c$, $c \in \{\false,\true\}$;
    \item $f$ is \emph{$c$-separating} if $f^{-1}(c)$ is $c$-separating, $c \in \{\false,\true\}$;
    \item $f$ is \emph{$c$-separating of degree $m$} if all $A \subseteq f^{-1}(c)$ with $|A|=m$ are $c$-separating;
    \item $f$ is \emph{monotone} if $a_1 \leq b_1, \ldots, a_n \leq b_n$ implies $f(a_1, \ldots , a_n) \leq f(b_1, \ldots , b_n)$;
    \item $f$ is \emph{self-dual} if $f \equiv \dual{f}$; here, $\dual{f}(x_{1},\dots,x_{n})=\neg f(\neg x_{1},\dots,\neg x_{n})$;
    \item $f$ is \emph{affine} if $f(x_1,\ldots,x_n) \equiv x_1 \xor \cdots \xor x_n \xor c$ with $c \in \{\false,\true\}$;
    \item $f$ is \emph{essentially unary} if $f$ depends on at most one variable.
  \end{itemize}
  The above properties canonically extend to sets of Boolean functions. 
  The list of all clones is given in Table~\ref{tab:clones}, 
  where $\id$ denotes the identity $\mathrm{I}^{1}_{1}$ and 
  $\mathrm{T}^{n+1}_n$ is the $(n+1)$-ary threshold function with threshold $n$ defined as $\mathrm{T}^{n+1}_n(x_0,\ldots,x_{n}) := \bigvee_{i=0}^{n} (x_0\land \cdots \land x_{i-1} \land x_{i+1} \land \cdots \land x_n)$, \emph{i.e.}, $\mathrm{T}^{n+1}_n(x_0,\ldots,x_n)$ is $1$ if at least $n$ of its inputs are $1$. 

  \begin{table}
    \fontsize{9.5}{12}\selectfont
    \setlength{\aboverulesep}{0pt}
    \setlength{\belowrulesep}{0pt}
    \rowcolors{2}{gray!10}{}
    \centering
    \begin{tabular}{lll}
    \toprule
    Clone & Definition & Base \\
    \midrule
    $\CloneBF$    & All Boolean functions & $\{x \land y ,\neg x\}$ \\
    $\CloneR_0$   & $\{f \in \CloneBF \mid f \text{ is $\false$-reproducing}\}$ & $\{ x \land y , x \xor y \}$ \\
    $\CloneR_1$   & $\{f \in \CloneBF \mid f \text{ is $\true$-reproducing}\}$ & $\{x \lor y, x \leftrightarrow y \}$ \\
    $\CloneR_2$   & $\CloneR_0 \cap \CloneR_1$ & $\{x \lor y, x \land (y \leftrightarrow z) \}$ \\
    $\CloneM$     & $\{f \in \CloneBF \mid f \text{ is monotone}\}$ & $\{x \land y, x\lor y,\false,\true\}$ \\
    $\CloneM_0$   & $\CloneM \cap \CloneR_0$ & $\{x \land y,x\lor y,\false\}$ \\
    $\CloneM_1$   & $\CloneM \cap \CloneR_1$ & $\{x \land y,x\lor y,\true\}$ \\  
    $\CloneM_2$   & $\CloneM \cap \CloneR_2$ & $\{x \land y,x\lor y\}$ \\
    $\CloneS_0$   & $\{f \in \CloneBF \mid f \text{ is $\false$-separating}\}$ & $\{x \imp y\}$ \\
    $\CloneS_0^n$ & $\{f \in \CloneBF \mid f \text{ is $\false$-separating of degree $n$}\}$ & $\{x \imp y,\dual{\mathrm{T}^{n+1}_n}\}$ \\
    $\CloneS_1$   & $\{f \in \CloneBF \mid f \text{ is $\true$-separating}\}$ & $\{x \nimp y\}$ \\
    $\CloneS_1^n$ & $\{f \in \CloneBF \mid f \text{ is $\true$-separating of degree $n$}\}$ & $\{x \nimp y,\mathrm{T}^{n+1}_n\}$ \\
    $\CloneS_{02}^n$& $\CloneS_0^n \cap \CloneR_2$ & $\{x \lor (y \land \neg z), \dual{\mathrm{T}^{n+1}_n}\}$ \\
    $\CloneS_{02}$  & $\CloneS_0 \cap \CloneR_2$ & $\{x \lor (y \land \neg z)\}$ \\
    $\CloneS_{01}^n$& $\CloneS_0^n \cap \CloneM$ & $\{\dual{\mathrm{T}^{n+1}_n},\true\}$ \\
    $\CloneS_{01}$  & $\CloneS_0 \cap \CloneM$ & $\{x \lor (y \land z),\true\}$ \\
    $\CloneS_{00}^n$& $\CloneS_0^n \cap \CloneR_2 \cap \CloneM$ & $\{x \lor (y \land z),\dual{\mathrm{T}^{n+1}_n}\} $ \\
    $\CloneS_{00}$  & $\CloneS_0 \cap \CloneR_2 \cap \CloneM$ & $\{x \lor (y \land z)\}$ \\
    $\CloneS_{12}^n$& $\CloneS_1^n \cap \CloneR_2$ & $\{x\land (y\lor \neg z),\mathrm{T}^{n+1}_n\}$ \\
    $\CloneS_{12}$  & $\CloneS_1 \cap \CloneR_2$ & $\{x\land (y\lor \neg z)\}$ \\
    $\CloneS_{11}^n$& $\CloneS_1^n \cap \CloneM$ & $\{\mathrm{T}^{n+1}_n,\false\}$ \\
    $\CloneS_{11}$  & $\CloneS_1 \cap \CloneM$ & $\{x\land (y\lor z),\false\}$ \\
    $\CloneS_{10}^n$& $\CloneS_1^n \cap \CloneR_2 \cap \CloneM$ & $\{x\land (y\lor z),\mathrm{T}^{n+1}_n\}$ \\
    $\CloneS_{10}$  & $\CloneS_1 \cap \CloneR_2 \cap \CloneM$ & $\{x\land (y\lor z)\}$ \\
    
    $\CloneD$     & $\{f \in \CloneBF \mid f \text{ is self-dual}\}$ & $\{(x \land y) \lor (x \land \neg z) \lor (\neg y \land \neg z)\}$ \\
    $\CloneD_1$   & $\CloneD \cap \CloneR_2$ & $\{(x \land y) \lor (x \land \neg z) \lor (y \land \neg z)\}$ \\
    $\CloneD_2$   & $\CloneD \cap \CloneM$ & $\{(x \land y) \lor (x \land z) \lor (y \land z)\}$ \\  
    
    $\CloneL$     & $\{f \in \CloneBF \mid f \text{ is affine}\}$ & $\{x \xor y,\true\}$ \\
    $\CloneL_0$   & $\CloneL \cap \CloneR_0$ & $\{x \xor y\}$ \\
    $\CloneL_1$   & $\CloneL \cap \CloneR_1$ & $\{x \leftrightarrow y\}$ \\
    $\CloneL_2$   & $\CloneL \cap \CloneR_2$ & $\{x \xor y \xor z\}$ \\
    $\CloneL_3$   & $\CloneL \cap \CloneD$ & $\{x \xor y \xor z \xor \true \}$ \\
    
    $\CloneE$     & $\{f \in \CloneBF \mid f \text{ is constant or a conjunction}\}$ & $\{x \land y,\false, \true\}$ \\
    $\CloneE_0$   & $\CloneE \cap \CloneR_0$ & $\{x \land y,\false\}$ \\
    $\CloneE_1$   & $\CloneE \cap \CloneR_1$ & $\{x \land y,\true\}$ \\
    $\CloneE_2$   & $\CloneE \cap \CloneR_2$ & $\{x \land y\}$ \\
    
    $\CloneV$     & $\{f \in \CloneBF \mid f \text{ is constant or a disjunction}\}$ & $\{x \lor y,\false,\true\}$ \\
    $\CloneV_0$   & $\CloneV \cap \CloneR_0$ & $\{x \lor y,\false\}$ \\
    $\CloneV_1$   & $\CloneV \cap \CloneR_1$ & $\{x \lor y,\true\}$ \\
    $\CloneV_2$   & $\CloneV \cap \CloneR_2$ & $\{x \lor y\}$ \\
    
    $\CloneN$     & $\{f \in \CloneBF \mid f \text{ is essentially unary}\}$ & $\{\neg x,\false,\true\}$ \\ 
    $\CloneN_2$   & $\CloneN \cap \CloneD$ & $\{\neg x\}$ \\

    $\CloneI$     & $\{f \in \CloneBF \mid f \text{ is constant or a projection}\}$ & $\{\id,\false,\true\}$ \\ 
    $\CloneI_0$   & $\CloneI \cap \CloneR_0$ & $\{\id,\false\}$ \\
    $\CloneI_1$   & $\CloneI \cap \CloneR_1$ & $\{\id,\true\}$ \\
    $\CloneI_2$   & $\CloneI \cap \CloneR_2$ & $\{\id\}$ \\
    \bottomrule
    \end{tabular}
    \caption{List of all clones with definition and bases}
    \label{tab:clones}
  \end{table}

  \begin{figure}
  \centering
  \includegraphics[width=\textwidth]{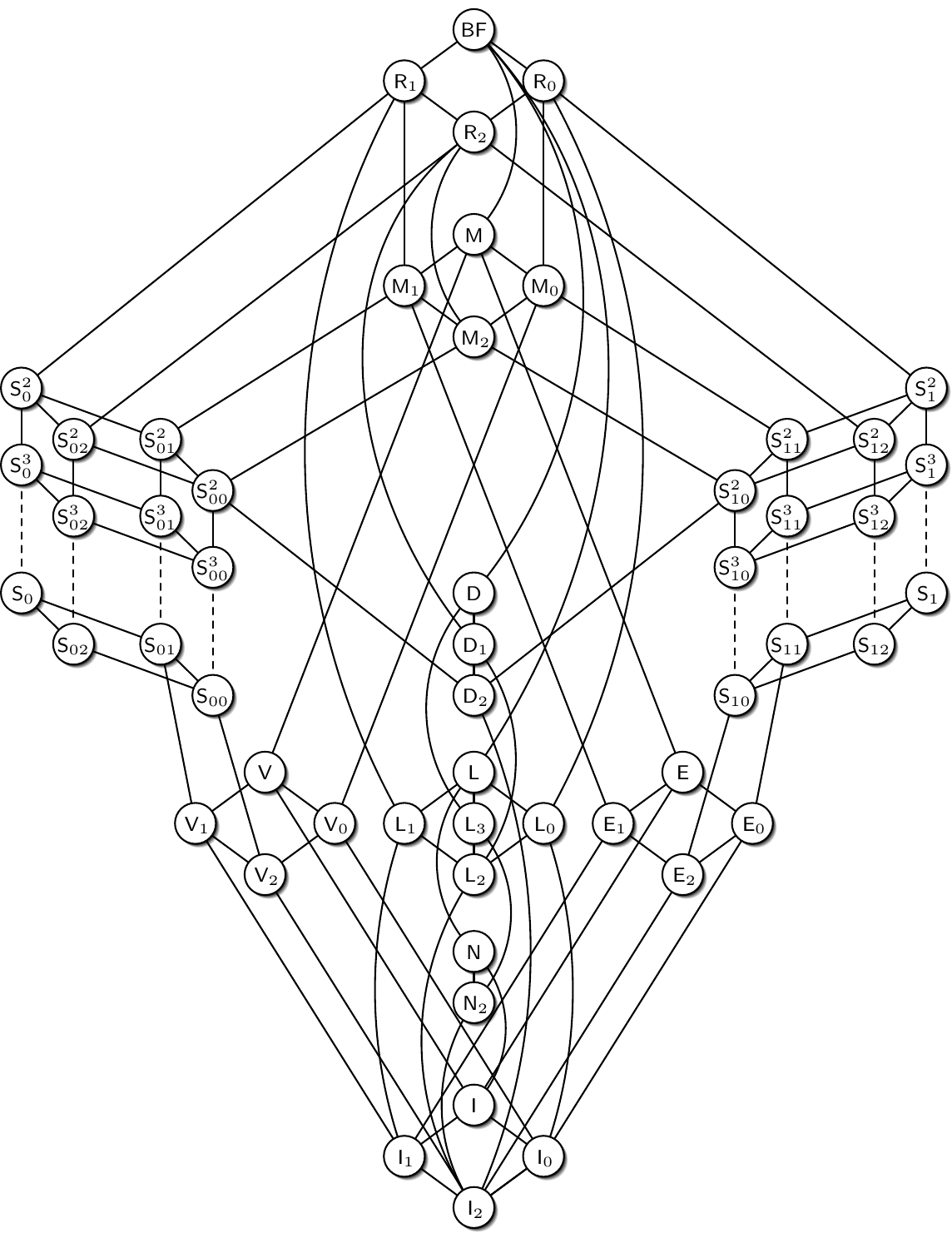}
  \caption{\label{fig:post's_lattice}Post's lattice}
  \end{figure}

  The following easy observation, which will be useful in the subsequent sections, gives a 
  a first example of the relationship between function-restricted sets of Boolean formulae.

  \begin{lemma} \label{lem:discard_true}
    Let $B$ be a finite set of Boolean functions and let $L$ be a set of $(B \cup \{\true\})$-formulae.
    Then $L$ can be transformed in logspace into a set $L'$ of $B$-formulae such that the number of satisfying assignments of $L$ and $L'$ coincide.
  \end{lemma}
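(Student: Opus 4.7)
The plan is to eliminate the constant $\true$ by introducing a single fresh propositional variable $t$ that plays its role, and then forcing $t$ to hold by asserting it as its own separate formula. The key observation enabling this approach is that, under the grammar $\varphi ::= a \mid f(\varphi,\ldots,\varphi)$, a bare atom is a legitimate $B$-formula for every choice of $B$, so the singleton formula $t$ is available regardless of what functions lie in $B$.

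First, I would select a propositional variable $t$ that does not occur in any formula of $L$; such a name is easy to produce in logspace, for instance by taking a symbol indexed by a counter one larger than any variable index appearing in the input. Next, I would process the input $L$ formula by formula, emitting for each $\varphi \in L$ the formula $\varphi'$ obtained by textually substituting $t$ for every occurrence of the $0$-ary symbol $\true$. All other function symbols of $\varphi$ are retained, hence $\varphi'$ is a $B$-formula. Finally, I would append the singleton formula $t$ to the output, yielding $L' := \{\varphi' : \varphi \in L\} \cup \{t\}$.

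To verify correctness I would argue a bijection between satisfying assignments. Any assignment satisfying $L'$ must satisfy the atom $t$ and so sets $t \mapsto \true$; in that case each $\varphi'$ takes the same truth value as the corresponding $\varphi$, so the restriction to the original variables satisfies $L$. Conversely, every satisfying assignment of $L$ extends uniquely to a satisfying assignment of $L'$ by taking $t \mapsto \true$. Thus the number of models agrees exactly.

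There is no real conceptual obstacle; the whole content is the observation above that an atom is a $B$-formula for free. The only item that needs a moment of care is confirming that the construction fits in logspace: producing a fresh variable name and performing the one-to-one textual replacement of $\true$ by $t$ are both doable with a deterministic logspace transducer using only position counters, and appending the extra formula $t$ costs nothing.
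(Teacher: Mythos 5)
Your proposal is correct and matches the paper's own argument: both introduce a fresh proposition $t$, substitute it for every occurrence of the constant $\true$, and assert $t$ as an additional formula of the set, which yields the stated bijection on satisfying assignments. The extra care you take in verifying the model count and the logspace implementation is welcome but does not change the approach.
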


  \begin{proof}[Proof sketch]
    The idea is to add a fresh proposition $t$ to $L$ and to replace all occurrences of the constant $\true$ with $t$. 
    As a result we obtain that, for problems $\Pi$ defined over sets of Boolean formulae, $\Pi(B \cup \{\true\}) \equivlogm \Pi(B)$.
  \end{proof}

\section{Default Logic}

  Default logic is among the best known and most successful formalisms for non-monotonic reasoning.
  It was proposed by Raymond Reiter in 1980~\cite{reiter:80} and extends classical logic with 
  \emph{default rules}, \ie, defeasible inference rules with an additional justification.
  These capture the process of deriving conclusions based on inferences of the form 
  ``in the absence of contrary information, assume \ldots''.
  As with few exceptions most of our knowledge about the world 
  is almost true rather than an absolute truth,
  Reiter argued that his logic is an adequate formalization of the human reasoning under the
  \emph{closed world assumption}, which allows one to assume the negation of all facts not 
  derivable from the given knowledge base.

  Formally, a default rule is an expression of the form $\frac{\alpha:\beta}{\gamma}$, 
  where $\alpha$, $\beta$, $\gamma$ are formulae; $\alpha$ is called the \emph{premise}, 
  $\beta$ is called \emph{justification}, and $\gamma$ is called \emph{conclusion}. 
  Further, a \emph{default theory} is a pair $(W,D)$, where $W$ is a set of formulae and $D$ is a set of default rules.
  The intended interpretation of default rules is that $\gamma$ holds 
  if $\alpha$ can be derived and $\beta$ is consistent with our knowledge and beliefs about the world. 
  It is this consistency condition that introduces the non-monotonicity: 
  \begin{example}
    Consider the default theory $(W,D)$ with 
    $
      W:=\{x\}, D:=\left\{\frac{x:y}{z}\right\}.
    $
    Then we should be able to derive $z$ from $(W,D)$, as $x$ is a fact from $W$ and $y$ is consistent with the consequences of $W \cup \{z\}$. However, if $W$ is extended with $\{\neg y\}$ then $W$ is no longer consistent with the justification of $\frac{x:y}{z}$ and we have no right to conclude $z$. The addition of $\neg y$ to $W$ thus invalidates the consequence $z$.  
  \end{example}

  As another consequence, the derivable knowledge depends on the set of applied defaults:
  \begin{example}\label{ex:dl-2}
    Consider the default theory $(\emptyset,D)$ with 
    $
      D:=\left\{\frac{\true:x}{\neg y}, \frac{\true:y}{\neg x}\right\}.
    $
    If we apply the left default, then $\neg y$ is derivable while the right default is blocked, as its justifications is inconsistent with the conclusion $\neg y$. On the other hand, if we apply the right default first, then  $\neg x$ is derived and the left default rule gets blocked.
  \end{example}

  Thus, to appropriately represent the knowledge derivable from a default theory, we introduce the notion of stable extensions.

  \begin{definition}[{\cite{reiter:80}}] \label{def:stable-extension}
    Let $(W,D)$ be a default theory and $E$ be a set of formulae.
    Let $E_0:=W$ and
    $
      E_{i+1} := \theorems{E_i} \cup \left\{\gamma \,\middle|\, \frac{\alpha : \beta}{\gamma} \in D, \alpha \in E_i \text{ and } \neg \beta \notin E \right\}.
    $
    Then $E$ is a stable extension of $(W,D)$ if and only if $E= \bigcup_{i \in \N} E_i$.
  \end{definition}

  Stable extensions can alternatively be characterized as the least fixed points of an operator $\Gamma_{W,D}$:
  For a given default theory $(W,D)$ and a set $E$ of formulae, 
  let $\Gamma_{W,D}(E)$ be the smallest set of formulae such that 
  \begin{enumerate} \itemsep 0pt
    \item 
    $W\subseteq \Gamma(E)$,
    \item
    $\Gamma(E)$ is deductively closed (\ie, $\Gamma_{W,D}(E)=\theorems{\Gamma_{W,D}(E)}$), and
    \item
    for all $\frac{\alpha:\beta}{\gamma} \in D$ 
    with $\alpha \in \Gamma_{W,D}(E)$ and $\neg \beta \notin E$, it holds that $\gamma \in \Gamma_{W,D}(E)$\\
    (in this case, we also say that the default $\frac{\alpha:\beta}{\gamma}$ is \emph{applicable}).
  \end{enumerate}

  \begin{proposition}[{\cite{reiter:80}}]
    Let $(W,D)$ be a default theory and a $E$ be a set of formulae. 
    Then $E$ is a stable extension of $(W,D)$ iff $E$ is a fixed point of $\Gamma_{W,D}$.
  \end{proposition}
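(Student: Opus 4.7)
The plan is to reduce both characterizations to the same underlying construction. Fix a set $E$ of formulae and build the sequence $E_0, E_1, \ldots$ from Definition~\ref{def:stable-extension}, where the consistency check $\neg\beta \notin E$ uses the given $E$. Setting $F := \bigcup_{i \in \N} E_i$, I will prove the identity $F = \Gamma_{W,D}(E)$. Once this is established, $E$ being a stable extension (i.e.\ $E = F$) is equivalent to $E$ being a fixed point of $\Gamma_{W,D}$ (i.e.\ $E = \Gamma_{W,D}(E)$), and the proposition follows immediately.

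For the inclusion $F \subseteq \Gamma_{W,D}(E)$, I induct on $i$ to show $E_i \subseteq \Gamma_{W,D}(E)$. The base case $E_0 = W \subseteq \Gamma_{W,D}(E)$ is exactly condition~(1). For the step, condition~(2) absorbs $\theorems{E_i}$ into $\Gamma_{W,D}(E)$ (since $E_i \subseteq \Gamma_{W,D}(E)$ by the inductive hypothesis), while condition~(3) absorbs the conclusions of the applicable defaults. The crucial observation is that the justification condition $\neg\beta \notin E$ is literally the same expression in both characterizations, which is what makes this inclusion go through.

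For the converse $\Gamma_{W,D}(E) \subseteq F$, I show that $F$ itself satisfies properties (1)--(3), so that minimality of $\Gamma_{W,D}(E)$ forces the inclusion. Property~(1) holds since $W = E_0 \subseteq F$, and property~(3) follows because the sequence $(E_i)$ is increasing ($E_i \subseteq \theorems{E_i} \subseteq E_{i+1}$), so any $\alpha \in F$ already lies in some $E_i$, giving $\gamma \in E_{i+1} \subseteq F$ whenever the default is applicable. The only step that is not a direct unfolding of definitions is property~(2), deductive closure of $F$: here I appeal to compactness of classical consequence to reduce any $F \vdash \varphi$ to a derivation from some finite $F_0 \subseteq F$, which by monotonicity of the chain fits inside a single $E_i$, yielding $\varphi \in \theorems{E_i} \subseteq E_{i+1} \subseteq F$. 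The main obstacle is minor and bookkeeping in nature: one must keep the two roles of $E$ — ambient consistency parameter on the one hand, and candidate extension on the other — carefully distinguished, and invoke compactness at precisely the right spot to handle the union of an increasing chain.
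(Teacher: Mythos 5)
Your proof is correct and is essentially the classical argument (Reiter's Theorem~2.1); the paper states this proposition without proof, citing Reiter, and your route is exactly the one taken there. Establishing the identity $\bigcup_{i\in\N} E_i = \Gamma_{W,D}(E)$ by induction in one direction and, in the other, by checking that the union of the increasing chain satisfies conditions (1)--(3) (with compactness supplying deductive closure) so that minimality applies, is the standard and complete way to do it.
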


  We have already observed in Example~\ref{ex:dl-2} that a default theory may possess several stable extensions; 
  indeed, a default theory with $n$ default rules may have any number of different stable extensions between $0$ and $2^n$.
  Thus the following questions naturally arise:
  
  (\emph{Extension existence}) Does a given default theory admit a stable extension? 
  
  (\emph{Credulous reasoning}) Is a given formula contained in at least one stable extension of a given default theory? 
  
  (\emph{Skeptical reasoning}) And, finally, is a given formula contained in all stable extensions of a given default theory?

  The computational complexity of the corresponding decision problems was first explored by Kautz and Selman~\cite{kautz-selman:91}, and Stillman~\cite{stillman:90}, who both presented results for syntactically restricted fragments of disjunction-free default logics.
  In 1992, Gottlob~\cite{gottlob:92} and Stillman~\cite{stillman:92} then independently showed that the computational complexity of these questions is presumably higher than that of the corresponding satisfiability and implication problem in propositional logic:

  \begin{theorem}[\cite{gottlob:92,stillman:92}] \label{thm:default-logic}
    The extension existence problem and the credulous reasoning problem for default logic are $\SigmaP2$-complete, 
    whereas the skeptical reasoning problem for default logic is $\PiP2$-complete.
  \end{theorem}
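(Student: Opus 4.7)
The plan is to handle membership via the fixed-point characterisation of stable extensions and hardness by reduction from $\SigmaP2$-SAT.

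For the upper bounds, I would place extension existence in $\SigmaP2$ by guessing a subset $D' \subseteq D$ representing the defaults that ``fire'' in the candidate extension, letting $G := \{\gamma \mid \frac{\alpha:\beta}{\gamma} \in D'\}$ and taking $E := \theorems{W \cup G}$. A $\P^{\NP}$ computation then verifies stability by checking: for every $d = \frac{\alpha:\beta}{\gamma} \in D'$, that $W \cup G \models \alpha$ and $W \cup G \not\models \neg\beta$; and for every $d \in D \setminus D'$, that $W \cup G \not\models \alpha$ or $W \cup G \models \neg\beta$. Each query is a propositional entailment test, hence in $\NP$ or $\co\NP$, so the whole procedure runs in $\NP^{\NP} = \SigmaP2$. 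Credulous reasoning is handled identically with the additional check $E \models \varphi$; skeptical reasoning is in $\PiP2$ as the complement of ``there exists a stable extension missing $\varphi$''.

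For the lower bounds, I would reduce from $\SigmaP2$-SAT. Given $\Phi = \exists \vec{x}\,\forall \vec{y}\, \varphi(\vec{x}, \vec{y})$, construct $(W, D)$ whose stable extensions biject with $\vec{x}$-assignments satisfying $\forall \vec{y}\, \varphi$. For each $x_i$ I include the pair of defaults $\frac{:x_i}{x_i}, \frac{:\neg x_i}{\neg x_i}$; an easy analysis of $\Gamma_{W,D}$ shows every stable extension must commit to exactly one literal per $x_i$ (picking neither would leave both defaults applicable, hence both fired, contradicting the guess). A final gadget built from a fresh atom $z$, one or two auxiliary defaults and a formula in $W$ then ensures that a stable extension over the chosen $\vec{x}$ exists iff $\neg\varphi(\vec{x}, \vec{y})$ is \emph{inconsistent} with the candidate extension — a condition that, after the $x_i$'s are pinned down, is exactly $\forall \vec{y}\, \varphi(\vec{x}, \vec{y})$. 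Credulous-reasoning hardness then follows by asking about a tautology, and $\PiP2$-hardness of skeptical reasoning is obtained by reducing $\PiP2$-SAT and dualising the construction.

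The main obstacle will be the final gadget: designing the blocking defaults so that (i) the justification's consistency check faithfully captures the implicit $\exists \vec{y}\, \neg\varphi(\vec{x},\vec{y})$ once the $x_i$'s are fixed, and (ii) the mechanism truly eliminates the candidate extension rather than collapsing it into an inconsistent one (which would still qualify as a stable extension). Everything else — the propositional entailment queries, the $\SigmaP2$-algorithm shell, and the dualisation for the skeptical variant — follows standard patterns once this gadget is in place.
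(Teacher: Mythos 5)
The paper itself gives no proof of this theorem (it is imported from Gottlob and Stillman), so your proposal has to stand on its own. The overall architecture — guess the generating defaults, verify with an $\NP$ oracle, and reduce from $\exists\forall$-QBF via the pair $\frac{:x_i}{x_i},\frac{:\neg x_i}{\neg x_i}$ — is indeed the classical route. But your membership argument has a genuine gap: the stability check you describe is not sound, because it omits \emph{groundedness}. Checking ``$W\cup G\models\alpha$ for every $d\in D'$'' only certifies that each fired default's premise follows from the \emph{final} set of conclusions; Definition~\ref{def:stable-extension} requires the premise to be derivable at some finite stage $E_i$, using only conclusions already obtained. Concretely, for $W=\emptyset$ and $D=\{\frac{x:\true}{x}\}$ your procedure accepts $D'=D$, $G=\{x\}$, $E=\theorems{\{x\}}$: we have $\{x\}\models x$ and $\{x\}\not\models\false$, and $D\setminus D'=\emptyset$, yet the iteration of Definition~\ref{def:stable-extension} yields only $\theorems{\emptyset}$, so $\theorems{\{x\}}$ is \emph{not} a stable extension. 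What you have characterised are the fixed points of the non-minimal variant of $\Gamma_{W,D}$ (so-called weak extensions), whose existence problem is a different beast. The repair is standard and stays inside $\SigmaP2$: after guessing $D'$, compute $G_0:=\emptyset$ and $G_{j+1}:=\{\gamma\mid\frac{\alpha:\beta}{\gamma}\in D,\ W\cup G_j\models\alpha,\ W\cup G\not\models\neg\beta\}$ for $j\le|D|$ rounds, each round costing polynomially many $\NP$/$\co\NP$ oracle queries, and accept only if the limit equals $G$; this simultaneously subsumes your second bullet about non-fired defaults.

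On the hardness side your plan is essentially right but the crucial gadget is left as a placeholder. For the record, the standard instantiation is to add, for fresh atoms $t,s$, the defaults $\frac{\;:\,\neg\varphi}{t}$ and $\frac{t\,:\,\neg s}{s}$ to the guessing defaults with $W=\emptyset$: if the chosen $\vec{x}$-literals do not entail $\varphi$ (equivalently, $\forall\vec{y}\,\varphi$ fails for that assignment), then $t$ enters the extension and the second default destroys it by the usual ``odd loop'' argument. Your worry~(ii), that the destroyed candidate might survive as an inconsistent extension, is dispatched by Reiter's result (cited in the proof of Lemma~\ref{lem:dl-only-one-ext}) that $\allFormulae$ is a stable extension only when $W$ is inconsistent, which it is not here. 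Your derivations of credulous hardness (query a tautology) and skeptical hardness (a formula is skeptically entailed by a theory with consistent $W$ and no extension, vacuously, and by no consistent extension otherwise, so extension existence reduces to the complement) then go through.
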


  More recently, 
    Liberatore and Schaerf~\cite{LiberatoreS05} showed that model checking (\ie, the task to decide whether a given assignment is a model of any extension of a given default theory) is $\SigmaP2$-complete, too. 
  And 
    Ben-Eliyahu-Zohary~\cite{Ben-Eliyahu-Zohary02} extended the complexity landscape of default logics with results on disjunction-free fragments dual to those studied by Kautz and Selman, and Stillman.
  The study of these fragments was motivated by embeddings of other formalisms into default logic. 
  However, little was known about the complexity of not-disjunction-free default logics.
  In \cite{bemethvo08}, the authors devise a systematic study of the fragments of default logic obtained by restricting the set of available Boolean functions. 
  The results provide insight into the source of the hardness of default reasoning and 
  reveal the trade-off between expressivity and computational complexity of fragments of default logic.

  To present the results, let $B$ be a finite set of Boolean functions.
  Say that the default theory $(W,D)$ is a \emph{$B$-default theory} if
  $W \cup \left\{\alpha,\beta,\gamma\,\middle|\,\frac{\alpha:\beta}{\gamma}\right\} \subseteq \allFormulae(B)$ and 
  let $B$-default logic denote default logic restricted to $B$-default theories.

  \begin{theorem}[{\cite{bemethvo08}}] \label{thm:extension-existence}
    Let $B$ be a finite set of Boolean functions. 
    Then the extension existence problem for $B$-default logic is 
    \begin{enumerate} \itemsep 0pt
      \item
      \label{thm:extension-existence-SigmaP2} 
      $\SigmaP2$-complete if $\CloneS_{1} \subseteq [B]$ or $\CloneD \subseteq [B]$,
      \item 
      \label{thm:extension-existence-DeltaP2} 
      $\DeltaP2$-complete if $\CloneS_{11} \subseteq [B] \subseteq \CloneM$,
      \item 
      \label{thm:extension-existence-NP} 
      $\NP$-complete if $[B] \in \{\CloneN,\CloneN_2,\CloneL,\CloneL_0,\CloneL_3\}$,
      \item 
      \label{thm:extension-existence-AC0} 
      $\P$-complete if $[B] \in \{\CloneV,\CloneV_0,\CloneE,\CloneE_0\}$,
      \item 
      \label{thm:extension-existence-NL} 
      $\NL$-complete if $[B] \in \{\CloneI,\CloneI_0\}$, and
      \item 
      \label{thm:extension-existence-trivial} 
      trivial in all other cases (that is, if $[B] \subseteq \CloneR_1$),
    \end{enumerate}
    via logspace many-one reductions.
  \end{theorem}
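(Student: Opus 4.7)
The plan is to traverse Post's lattice case by case, matching each clone with an upper bound argument that exploits its structural properties and a matching lower bound obtained by reduction. For the trivial case~(\ref{thm:extension-existence-trivial}), I would first observe that every $B$-formula with $[B] \subseteq \CloneR_1$ evaluates to $\true$ under the all-$\true$ assignment; consequently $\neg \beta$ can never lie in the deductive closure of any $B$-theory, so every default is applicable and an extension is guaranteed to exist. The $\SigmaP{2}$ upper bound for case~(\ref{thm:extension-existence-SigmaP2}) is inherited directly from Theorem~\ref{thm:default-logic}, so only the matching lower bound remains to be established.

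For the upper bounds on the lower levels, I would exhibit the key structural property of each clone. For $\CloneS_{11} \subseteq [B] \subseteq \CloneM$, monotonicity induces a canonical extension that can be computed by a fixed-point iteration in which each application of a default rule requires verifying that $\neg\beta$ is \emph{not} a consequence of the current set --- a single $\NP$-oracle query --- thus placing the problem in $\DeltaP{2}$. For the affine or essentially-unary clones in case~(\ref{thm:extension-existence-NP}), entailment of $B$-formulae is in $\P$, and so the existence of an extension reduces to guessing a set of applicable defaults and verifying the fixed-point condition deterministically, yielding $\NP$. Cases~(\ref{thm:extension-existence-AC0}) and~(\ref{thm:extension-existence-NL}) proceed similarly via a purely deterministic closure, for Horn-like ($\CloneE$, $\CloneV$) or projection-only ($\CloneI$) syntax, which respectively encode the monotone circuit value and reachability problems.

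For the lower bounds, I would start from the canonical $\SigmaP{2}$-hardness construction of Gottlob and Stillman and, using the assumption that $\CloneS_1 \subseteq [B]$ or $\CloneD \subseteq [B]$, translate every connective of the construction into a logspace-computable $B$-representation, with Lemma~\ref{lem:discard_true} eliminating the constant $\true$ whenever it appears in the required base from Table~\ref{tab:clones}. Analogous reductions from canonical $\DeltaP{2}$-, $\NP$-, $\P$-, and $\NL$-complete problems (iterated SAT, SAT, monotone circuit value, and $s$-$t$-reachability, respectively) supply the remaining matching lower bounds; these can be packaged into a common framework in which an input formula is rewritten via an appropriate $B$-representation while the combinatorial skeleton of the default theory is left untouched.

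The main obstacle will be the $\DeltaP{2}$ case. Since monotone $B$-formulae preclude any use of negation in premises, justifications, or conclusions, the delicate interplay between derivability and consistency in Definition~\ref{def:stable-extension} must be simulated entirely with positive machinery. For membership one must show that a polynomial number of adaptive $\NP$-oracle queries suffices to pin down which defaults enter the canonical extension, despite the nonlinear dependency on $E$ itself; for hardness one must embed a polynomially bounded sequence of adaptively issued SAT queries into a purely monotone default theory. A secondary obstacle is ensuring that every reduction is logspace rather than merely polynomial time: the $B$-representations derived from the bases of Table~\ref{tab:clones} must blow up at most polynomially and be computable in logspace, which motivates using the specific bases of each clone rather than arbitrary $B$-representations.
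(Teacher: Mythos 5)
Your overall route is the same as the paper's: a case analysis over Post's lattice, with Lemma~\ref{lem:discard_true} to absorb the constant $\true$, the observation that $1$-reproducing theories always have a (unique, consistent) extension for case~(\ref{thm:extension-existence-trivial}), uniqueness of the candidate extension for monotone clones giving the $\DeltaP{2}$ upper bound, tractable $B$-implication giving the $\NP$, $\P$ and $\NL$ upper bounds, and hardness inherited from QBF, sequentially nested (iterated) SAT, SAT, circuit value and reachability respectively. Two of your remarks even match the paper's emphasis: the adaptive-oracle subtlety in the $\DeltaP{2}$ case and the need for logspace-computable $B$-representations.

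The one concrete gap is your ``common framework'' for the lower bounds, in which ``an input formula is rewritten via an appropriate $B$-representation while the combinatorial skeleton of the default theory is left untouched.'' This works for case~(\ref{thm:extension-existence-SigmaP2}) (where $[B\cup\{\true\}]=\CloneBF$) but cannot produce the $\NP$-hardness of case~(\ref{thm:extension-existence-NP}): for $[B]\in\{\CloneN,\CloneN_2,\CloneL,\CloneL_0,\CloneL_3\}$ there is no $B$-representation of the disjunction in a 3CNF clause, so a SAT instance cannot be rewritten formula-by-formula. The paper's reduction instead moves the clause structure into the default rules themselves: each clause $\ell_{1}\lor\ell_{2}\lor\ell_{3}$ becomes the default $\frac{\overline{\ell_{1}}:\overline{\ell_{2}}}{\ell_{3}}$, and the defaults $\frac{\true:x}{x}$, $\frac{\true:\neg x}{\neg x}$ nondeterministically fix a truth assignment; every component is a single literal and hence a $B$-formula whenever $\neg\in[B]$. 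Some additional idea of this kind is indispensable, and your proposal as stated does not contain it. A second, minor, omission: in case~(\ref{thm:extension-existence-NL}) extension existence reduces to the \emph{complement} of graph reachability, so you also need closure of $\NL$ under complement to conclude $\NL$-completeness.
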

  
  A key observation to the proof of this theorem is the following lemma.
  
  \begin{lemma} \label{lem:dl-only-one-ext}
    Let $B$ be a finite set of Boolean functions.
    If $[B] \subseteq \CloneM$ then any $B$-default theory has at most one stable extension;
    if $[B] \subseteq \CloneR_1$ then any $B$-default theory has at exactly one stable extension.
  \end{lemma}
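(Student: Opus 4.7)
The plan is to exploit the all-$\true$ assignment $\tau$ (sending every proposition to $\true$) as a universal model of the theory, which forces the consistency check in Definition~\ref{def:stable-extension} to be decided uniformly, independently of the putative extension $E$. The strategy rests on the classical fact that every monotone Boolean function $f$ satisfies $f(1,\ldots,1)=1$ unless $f\equiv\false$; indeed monotonicity gives $f(x)\leq f(1,\ldots,1)$ for all inputs $x$, so $f(1,\ldots,1)=0$ forces $f$ to be constant $\false$. Applied to a $B$-formula $\beta$ with $[B]\subseteq\CloneM$, this yields a dichotomy: either $\beta\equiv\false$ (so $\neg\beta$ is a tautology, belonging to every deductively closed set) or $\tau\models\beta$ (so $\neg\beta$ is refuted by any theory having $\tau$ as a model).

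Next I would show that every consistent stable extension $E$ of a $B$-default theory $(W,D)$ has $\tau$ as a model, proceeding by induction on the stages $E_i$ of Definition~\ref{def:stable-extension}. For $E_0=W$: were some $\varphi\in W$ equivalent to $\false$, then $\theorems{E_0}$ and hence $E$ would be inconsistent, contradicting the assumption; so every $\varphi\in W$ is a non-constant-$\false$ monotone formula, whence $\tau\models W$. For the inductive step, $\tau\models E_i$ yields $\tau\models\theorems{E_i}$, and any conclusion $\gamma$ added at stage $i+1$ must satisfy $\gamma\not\equiv\false$ (otherwise $E$ contains $\false$ and is inconsistent), so $\tau\models\gamma$. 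Hence $\tau\models E=\bigcup_{i\in\N}E_i$.

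Combining the two observations, the applicability test ``$\neg\beta\notin E$'' collapses to the $E$-independent test ``$\beta\not\equiv\false$''. The sequence $(E_i)_{i\in\N}$ is then determined entirely by $(W,D)$, so at most one consistent stable extension exists. The only alternative is the inconsistent extension (the set of all formulae), which can only arise when $W$ itself is inconsistent; since the two cases are mutually exclusive, at most one stable extension exists, proving the first claim.

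For the second claim, $[B]\subseteq\CloneR_1$ makes every $B$-formula $\true$-reproducing, so $\tau$ satisfies every $B$-formula outright. All the constant-$\false$ caveats vanish, $W$ is automatically consistent, and the deterministic iteration produces a well-defined fixed point, yielding \emph{exactly} one stable extension. The main delicacy I anticipate is treating defaults whose conclusion $\gamma$ is $\equiv\false$: one must observe that applying such a default immediately destroys the consistency of $E$, so in the consistent-extension case such defaults cannot fire and the induction for $\tau\models E$ goes through cleanly; everything else is bookkeeping with the monotonicity of $\theorems{\cdot}$.
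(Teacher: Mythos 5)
Your proof is correct and follows essentially the same route as the paper's: the dichotomy that a monotone function is either $\true$-reproducing or equivalent to $\false$, the observation that the all-$\true$ assignment satisfies every consistent stable extension, and the resulting collapse of the applicability test $\neg\beta\notin E$ to an $E$-independent condition. Your version merely spells out the induction on the stages $E_i$ and the handling of the inconsistent extension in more detail than the paper's sketch does.
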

  
  \begin{proof}
    Suppose first that $[B] \subseteq \CloneM$. 
    Then each function $f \in B$ is either $\true$-reproducing or equivalent to $\false$.
    Default rules with a justification equivalent to $\false$ are not applicable unless $W$ is inconsistent.
    As in this case $\allFormulae$ is the only stable extension of the default theory~\cite[Corollary~2.3]{reiter:80}, 
    we w.l.o.g.\ suppose that all justifications are $\true$-reproducing, \ie, $[B] \subseteq \CloneR_1$.
    Now observe that the negation of a $\true$-reproducing function is not $\true$-reproducing 
    while all consequents of $\true$-reproducing functions are. 
    Indeed, if $[B] \subseteq \CloneR_1$ then any stable extension of a default theory is consistent and satisfied 
    by the assignment setting to $\true$ all propositions, which also satisfies the justification of each default rule.
    Thus any monotone default theory may possess at most one stable extension
    while any $\true$-reproducing default theory possesses exactly one stable extension.
  \end{proof}
  
  We will sketch the proof of Theorem~\ref{thm:extension-existence}.
  
  \begin{proof}[Proof sketch]
    For $\CloneS_{1} \subseteq [B]$ or $\CloneD \subseteq [B]$, the $\SigmaP2$-hardness follows from Theorem~\ref{thm:default-logic} and Lemma~\ref{lem:discard_true}, as $[\CloneS_1 \cup \{\true\}]=[\CloneD \cup \{\true\}]=\CloneBF$ and the upper bound easily generalizes from $\{\land,\lor,\neg\}$ to arbitrary sets of Boolean functions.
    
    The case $[B] \subseteq \CloneR_1$ follows directly from Lemma~\ref{lem:dl-only-one-ext}.
    It hence remains to consider those sets $B$ such that $[B \cup \{\true\}]$ contains the constant $\false$.
    These are all included in either the clone $\CloneM$ or the clone $\CloneL$ (or both).

    For $\CloneS_{11} \subseteq [B] \subseteq \CloneM$, membership in $\DeltaP2$ follows similarly from Lemmas~\ref{lem:discard_true} and~\ref{lem:dl-only-one-ext}:
    the only way for a monotone default theory not to possess a stable extension is to contain a default rule $\frac{\alpha:\beta}{\gamma}$ such that $\gamma \equiv \false$. It thus suffices to compute the set of applicable defaults using subsequent calls to a $\co\NP$-oracle for $B$-formula implication and to verify that their conclusions are satisfied by the assignment setting to $\true$ all propositions.
    It is straightforward to implement this as a $\DeltaP2$-algorithm.
    The $\DeltaP2$-hardness on the other hand is established using a reduction from the \emph{sequentially nested satisfiability problem}, which was first identified to be $\DeltaP2$-complete in~\cite[Theorem~3.4]{got95b} (see also \cite{laroussinie-markey-schnoebelen:01}). 
    
    If one further restricts the set $B$ such that $[B] \in \{\CloneV,\CloneV_0,\CloneE,\CloneE_0\}$ (\ie, $[B \cup \{\true\}]$ does contain the Boolean constants and either conjunctions or disjunctions), then formula implication and the the extension existence problem become tractable~\cite{beyersdorff-meier-thomas-vollmer:09a}. Indeed, the problem becomes $\P$-complete, as can be shown by a reduction from
    a variant of the circuit value problem.
    
    Further restricting the set $B$ such that $[B] \in \{\CloneI,\CloneI_0\}$ (\ie, $[B \cup \{\true\}]$ does only contain the Boolean constants) leads to default theories with rules whose premise and conclusion are a single proposition. 
    As a result, the existence of a stable extension reduces to the complement of the reachability problem in directed graphs. Similarly, reachability in such graphs can be transformed into the question whether a stable extension does not have a stable extension. As this problem is $\NL$-complete and $\NL$ is closed under complement, the extension existence problem for such default theories is $\NL$-complete.
    
    Finally, for $[B] \in \{\CloneL,\CloneL_0,\CloneL_3,\CloneN,\CloneN_2\}$ (\ie, $\neg \in [B \cup \{\true\}]$ and $[B]$ is affine), we observe a different situation. There may exist exponentially many different stable extensions; yet, the verification of a candidate is tractable because implication and satisfiability of $B$-formulae are~\cite{beyersdorff-meier-thomas-vollmer:09a}. Hence, the extension existence problem becomes solvable in $\NP$. The $\NP$-hardness, on the other hand, is obtained by reducing from the satisfiability problem for 3CNF formulae: given a formula $\varphi \equiv \bigwedge_{i=1}^n c_i$ with $c_i \equiv \ell_{i1} \lor \ell_{i2} \lor \ell_{i3}$, we construct the default theory $(\emptyset,D)$ with
    \[
      D:= \bigg\{ \frac{\true : x_i}{x_i},\frac{\true : \neg x_i}{\neg x_i} \,\bigg|\, x_i \in \Vars{\varphi} \bigg\} \cup 
          \bigg\{ \frac{\overline{\ell_{i1}} : \overline{\ell_{i2}}}{\ell_{i3}} \,\bigg|\, 1 \leq i \leq n \bigg\},
    \]
    where, for a literal $\ell$, $\overline{\ell}$ denotes the literal of opposite polarity, and for a formula $\varphi$, $\Vars{\varphi}$ denotes the set of all variables in $\varphi$. 
    It is easy to verify the correctness of this reduction.
    As the above default theory can easily be written as a $B$-default theory for all $B$ such that $\neg \in [B]$, the proof is complete.
  \end{proof}

  \begin{remark}
    As default rules require the justification $\beta$ to be consistent with a stable extension $E$ (\ie, $\neg \beta \notin E$),
    another conceivable formalization of $B$-default logic would be to require $\alpha$ and $\gamma$ to be $B$-formulae and 
    $\beta$ to be the negation of a $B$-formula. For this formalization, the extension existence problem for $B$-default logic is 
      $\SigmaP2$-complete if $\CloneS_{00} \subseteq [B]$ or $\CloneS_{10} \subseteq [B]$ or $\CloneD_{2} \subseteq [B]$, and 
      tractable otherwise (with this case splitting into $\ParityL$-complete cases and logspace-solvable cases).
  \end{remark}

  Given the upper and lower bounds for the stable extension problem it is easy to settle the complexity of the credulous and skeptical reasoning problem. Define the \emph{credulous (resp.\ skeptical) reasoning problem for $B$-default logic} as the problem to decide, given a $B$-default theory $(W,D)$ and a $B$-formula $\varphi$, whether $\varphi$ is contained in a stable extension (resp.\ all stable extensions)  of $(W,D)$.
  
  \begin{theorem}[{\cite{bemethvo08}}] \label{thm:credulous-reasoning-dl}
    Let $B$ be a finite set of Boolean functions. 
    Then the credulous (resp.\ skeptical) reasoning problem for $B$-default logic is 
    \begin{enumerate} \itemsep 0pt
      \item
      $\SigmaP2$-complete (resp.\ $\PiP2$-complete) if $\CloneS_{1} \subseteq [B]$ or $\CloneD \subseteq [B]$,
      \item
      $\DeltaP2$-complete if $\CloneS_{11} \subseteq [B] \subseteq \CloneM$, 
      \item
      $\co\NP$-complete if $\CloneS_{00} \subseteq [B] \subseteq \CloneR_1$ or $\CloneS_{10} \subseteq [B] \subseteq \CloneR_1$ or $\CloneD_{2} \subseteq [B] \subseteq \CloneR_1$, 
      \item
      $\NP$-complete (resp.\ $\co\NP$-complete) if $[B] \in \{\CloneN,\CloneN_2,\CloneL,\CloneL_0,\CloneL_3\}$,
      \item
      $\P$-complete if $\CloneV_2 \subseteq [B] \subseteq \CloneV$ or $\CloneE_2 \subseteq  [B] \subseteq \CloneE$ or $[B]\in\{\CloneL_1,\CloneL_2\}$,
      and
      \item
      $\NL$-complete in all other cases (that is, if $[B] \subseteq \CloneI$),
    \end{enumerate}
    via logspace many-one reductions.
  \end{theorem}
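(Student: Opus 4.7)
The plan is to derive Theorem~\ref{thm:credulous-reasoning-dl} as a consequence of Theorem~\ref{thm:extension-existence}, using two elementary observations as glue. First, the credulous reasoning problem instantiated with query $\varphi = \true$ is equivalent to extension existence, because every stable extension is deductively closed and hence contains $\true$; this transports lower bounds of Theorem~\ref{thm:extension-existence} to credulous reasoning, with complementary bounds for skeptical reasoning via the usual duality. Second, whenever $[B] \subseteq \CloneM$, Lemma~\ref{lem:dl-only-one-ext} forces at most one stable extension, collapsing the two reasoning problems to a single implication check on the unique candidate extension. Lemma~\ref{lem:discard_true} is invoked throughout to lift results stated for bases containing $\true$ to arbitrary $B$.

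For the upper bounds, case~(1) uses the standard ``guess-and-verify'' $\SigmaP2$ algorithm augmented with a final $\co\NP$ oracle call deciding $\varphi \in E$ through $B$-formula implication. Case~(2) follows from Lemma~\ref{lem:dl-only-one-ext} together with the $\DeltaP2$ fixed-point construction from the sketch of Theorem~\ref{thm:extension-existence}, extended by a final implication query for $\varphi$. Case~(3) uses the $\CloneR_1$ half of Lemma~\ref{lem:dl-only-one-ext} to compute the unique extension in polynomial time, leaving a single $\co\NP$ formula-implication test. In case~(4), affine tractability of $B$-formula implication and satisfiability~\cite{beyersdorff-meier-thomas-vollmer:09a} makes verification of a guessed extension polynomial, yielding an $\NP$ algorithm for credulous reasoning and a $\co\NP$ algorithm for skeptical reasoning. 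In case~(5), polynomial-time $B$-formula implication combined with the polynomial-time computation of the unique extension (again via Lemma~\ref{lem:dl-only-one-ext}) gives a polynomial-time algorithm. In case~(6) the atomic shape of defaults—premise, justification and conclusion are literals—reduces both reasoning problems to (co-)reachability in a directed graph over propositions, placing them in $\NL$.

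For the lower bounds, credulous reasoning inherits its hardness from extension existence via the trivial reduction setting $\varphi = \true$, and skeptical reasoning inherits analogous bounds through duality; the $\PiP2$- and $\DeltaP2$-hardness results in cases~(1) and~(2) thus follow directly from Theorem~\ref{thm:extension-existence}, while the $\NP$/$\co\NP$ and $\P$ hardness results in cases~(4) and~(5) come from adapting the corresponding reductions in the sketch of Theorem~\ref{thm:extension-existence} (from 3\textsc{Sat} and from a variant of the circuit value problem, respectively), attaching the query formula to the constructed theory. The $\NL$-hardness in case~(6) is by the same reduction from directed reachability used for extension existence.

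The principal obstacle is case~(3): one must prove $\co\NP$-hardness of reasoning for $\CloneR_1$-restricted default theories by embedding $B$-formula implication into a default theory whose unique extension reflects the implication. The difficulty is that $\true$-reproducing bases forbid arbitrary negations and constants, so the gadgets encoding the implication instance must be designed entirely from $\CloneR_1$-expressible connectives while still faithfully representing the negated hypothesis needed for $\co\NP$-hardness; verifying that the distinct sub-cases $\CloneS_{00}, \CloneS_{10}, \CloneD_{2}\subseteq[B]\subseteq\CloneR_1$ each admit such a gadget is where I expect most of the work to concentrate.
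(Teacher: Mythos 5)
Your overall strategy---deriving the reasoning problems from Theorem~\ref{thm:extension-existence} via a query trick plus the uniqueness of extensions from Lemma~\ref{lem:dl-only-one-ext}---matches the route the survey itself gestures at. However, case~(3) does not go through as you have written it, on either side. For the upper bound you claim that the $\CloneR_1$ half of Lemma~\ref{lem:dl-only-one-ext} lets you ``compute the unique extension in polynomial time, leaving a single $\co\NP$ formula-implication test.'' That is false: determining which defaults fire requires deciding $\alpha \in E_i$ at each stage, and for $\CloneS_{00}, \CloneS_{10}, \CloneD_2 \subseteq [B]$ the implication problem for $B$-formulae is itself $\co\NP$-complete, so the iterative computation only yields membership in $\DeltaP2$, not $\co\NP$. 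The correct $\co\NP$ argument exploits that the set of generating defaults is the least fixed point of a monotone operator: $\varphi$ is \emph{not} in the unique extension iff there exist a set $D' \subseteq D$, countermodels $\sigma_d \models W \cup \{\gamma_{d'} \mid d' \in D'\}$ with $\sigma_d \not\models \alpha_d$ for each $d \notin D'$ (certifying that $D'$ is a prefixed point and hence contains every generating default), and a countermodel $\sigma \models W \cup \{\gamma_{d'} \mid d' \in D'\}$ with $\sigma \not\models \varphi$. This is an $\NP$ predicate, which gives the $\co\NP$ upper bound; your version skips it entirely.

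Conversely, the part of case~(3) you flag as ``the principal obstacle''---designing $\CloneR_1$-expressible gadgets for the hardness---is not an obstacle at all: taking $D = \emptyset$, the unique stable extension of $(W,\emptyset)$ is $\theorems{W}$, so credulous (equivalently skeptical) reasoning restricted to empty $D$ \emph{is} the $B$-formula implication problem, which is $\co\NP$-complete for exactly the clones listed in case~(3) by~\cite{beyersdorff-meier-thomas-vollmer:09a}. No gadgetry is needed, but since you explicitly leave this step unexecuted, your treatment of case~(3) is incomplete on both sides. Two smaller repairs: the query $\varphi = \true$ is not a $B$-formula for most of the clones considered, so the reduction from extension existence should instead add a fresh proposition $t$ to $W$ and query $t$; and the ``usual duality'' for skeptical reasoning should be made explicit by querying a fresh proposition occurring nowhere in $(W,D)$, which is skeptically entailed iff no (consistent) stable extension exists. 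The remaining cases (1), (2), (4), (5) and (6) are argued correctly.
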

  
  Another possibility to study fragments of default logic, aside restricting the available Boolean functions, is \emph{Schaefer's framework}~\cite{sch78}. This framework is motivated by constraint satisfaction problem, where a set of conditions represented as logical relations has to be simultaneously satisfied. 
  Hence, the set $W$ and the formulae occurring in $D$ are assumed to be a set of applications of relations from $S$ to the variables in $\Vars{W} \cup \Vars{D}$, where $\Vars{D}$ is a shorthand for $\mathrm{Vars}\Big(\big\{\alpha,\beta,\gamma \,\big|\, \frac{\alpha:\beta}\gamma \in D\big\}\Big)$. That is, $W$ and the formulae occurring in $D$ are of the form 
  $\{R_1(x_{11},\ldots,x_{1m_1}),\ldots, R_n(x_{n1},\ldots,x_{nm_n})\}$,
  where the $R_i$'s are relations of arity $m_{i}$ from a fixed set $S$ of available relations over the domain $\{\false,\true\}$ and the variables $x_{11},\ldots,x_{nm_n}$ are from $\Vars{W} \cup \Vars{D}$. Such a set of applications of relations is correspondingly said to be satisfied by an assignment $\sigma$ if  $(\sigma(x_{i1}),\ldots,\sigma(x_{im_i})) \in R_i$ for all $1 \leq i \leq n$.
  Call a relation $R$ \emph{Schaefer} if it is either
  \begin{itemize} \itemsep 0pt
    \item \emph{affine} (coincides with the set of models of an $\{\xor\}$-formula),
    \item \emph{bijunctive} (coincides with the set of models of a 2CNF formula),
    \item \emph{Horn} (coincides with the set of models of a Horn formula), or
    \item \emph{dual Horn} (coincides with the set of models of a dual Horn formula).
  \end{itemize}
  And say that a set of relations is Schaefer if there is one of the above four properties that is satisfied by all relations in $S$.
  Call a default theory $(W,D)$ such that 
  $W \cup \left\{\alpha,\beta,\gamma\,\middle|\,\frac{\alpha:\beta}{\gamma}\right\}$ is a set of applications of relations from $S$
  a \emph{default theory over relations from $S$}.
  
  We define the \emph{extension existence problem for default logic over relations from $S$} as the problem to decide,
  given a default theory $(W,D)$ over relations from $S$, whether $(W,D)$ has a stable extension.
  Further, define the \emph{credulous} (resp.\ \emph{skeptical}) \emph{reasoning problem for default logic over relations from $S$} as the problem to decide,
  given a default theory $(W,D)$ over relations from $S$ and a set $\varphi$ of applications of relations from $S$, whether $\varphi$ is contained in at least one (resp.\ any) stable extension of $(W,D)$.
  In \cite{chhesc07}, Chapdelaine \emph{et al.}\  study the complexity of these problems and establish the following trichotomies:
  
  \begin{theorem}[\cite{chhesc07,schnoor07}]
    Let $S$ be a set of relations. Then the 
    extension existence problem for default logic over relations from $S$ is
    \begin{enumerate} \itemsep 0pt
      \item $\SigmaP2$-complete if $S$ is not Schaefer,
      \item $\NP$-complete if $S$ is Schaefer but neither $\false$-valid or $\true$-valid,
      \item in $\P$ in all other cases.
    \end{enumerate}
  \end{theorem}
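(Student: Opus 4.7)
The plan is to derive the three cases by combining Schaefer's classical dichotomy for propositional satisfiability with the general upper bound from Theorem~\ref{thm:default-logic} and the structural observation underlying Lemma~\ref{lem:dl-only-one-ext}. Throughout, I would translate sets of applications of relations from $S$ into Boolean formulae by unfolding each relation into its DNF, which is of size polynomial in the arity of the relation (a constant for fixed~$S$).

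For case~(1), the $\SigmaP2$ upper bound is inherited from Theorem~\ref{thm:default-logic} via this translation. For $\SigmaP2$-hardness, I would use the fact that a non-Schaefer $S$ can define, via primitive positive formulae using auxiliary variables and equality, every Boolean relation; this is the relational counterpart of Post's classification. Hence the Stockmeyer-style $\SigmaP2$-hardness reduction underlying Theorem~\ref{thm:default-logic} transfers to $S$-default theories by replacing each connective with its $S$-gadget and introducing fresh propositions for the existentially quantified auxiliary variables. The translation preserves stable extensions because these auxiliaries are forced to the intended values by the $S$-constraints.

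For case~(3), suppose that $S$ is Schaefer and $c$-valid for some $c \in \{\false,\true\}$. Then the constant assignment that maps every variable to $c$ satisfies every conjunction of $S$-applications, so in particular $W$ is satisfiable and remains so after the conclusion of any default is added. Consequently, no justification can ever be inconsistent with a partial candidate extension, so every default whose premise is implied fires; the fixed-point iteration of Definition~\ref{def:stable-extension} therefore stabilises after polynomially many rounds. Since implication between sets of $S$-applications is in $\P$ by Schaefer's theorem, the whole construction runs in polynomial time. The $\true$-valid case recovers the uniqueness of the extension already seen in Lemma~\ref{lem:dl-only-one-ext}, and the $\false$-valid case is dual.

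Case~(2) is the main obstacle. For the $\NP$ upper bound, a stable extension of a polynomial-size $S$-default theory is determined by the set of defaults actually applied; I would guess this set and verify its admissibility by a polynomial number of satisfiability and implication checks on $S$-formulae, each in $\P$ by Schaefer's algorithms. For $\NP$-hardness, I would exploit the assumption that $S$ is Schaefer but neither $\false$-valid nor $\true$-valid: this non-validity permits primitive positive definitions of both Boolean constants, and together with the Schaefer connectives it suffices to adapt the 3SAT reduction sketched for $[B] \in \{\CloneL, \CloneL_0, \CloneL_3, \CloneN, \CloneN_2\}$ in the proof of Theorem~\ref{thm:extension-existence}. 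The delicate step is to carry out this adaptation uniformly across the four Schaefer sub-families (Horn, dual Horn, bijunctive, affine), in each case constructing $S$-gadgets that mimic the guessing defaults $\frac{\true:x_i}{x_i}$ and $\frac{\true:\neg x_i}{\neg x_i}$ and the clause-checking defaults, while only using the connectives expressible in that sub-family; the affine case, where disjunction must be simulated via $\xor$ and auxiliary variables, is technically the most awkward and will be the crux of the argument.
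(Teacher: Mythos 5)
Your upper bounds and your triviality argument for $c$-valid $S$ are sound, but both hardness arguments rest on a false premise. A non-Schaefer set $S$ does \emph{not} in general define every Boolean relation by primitive positive formulae: that holds only when the polymorphism clone of $S$ is the clone $\CloneI_2$ of projections, \ie, when $S$ is additionally neither $\false$-valid, nor $\true$-valid, nor complementive (closed under complementation of tuples, $\mathrm{Pol}(S)\supseteq\CloneN_2$). For the remaining non-Schaefer sets your case~(1) reduction has nothing to substitute for the connectives; the usual remedy of adjoining the constant relations and then simulating them needs a problem-specific argument (compare the $\Gamma[\false/f,\true/t]$ trick used for circumscriptive model checking), which you do not supply. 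Moreover, even where pp-definability of all relations does hold, the existentially quantified auxiliary variables are in general \emph{not} uniquely determined by the original ones, so the claim that stable extensions are preserved is exactly the point that fails naively: a justification may become consistent or inconsistent in unintended ways once auxiliaries are introduced. This is why the cited proofs (Chapdelaine \emph{et al.}, and Schnoor's thesis) abandon pp-definability in favour of quantifier-free implementations via \emph{weak bases} and argue by a case distinction over the possible polymorphism clones of $S$. Note also an internal tension you leave unresolved: your case~(3) argument shows triviality for \emph{every} $\false$- or $\true$-valid $S$, Schaefer or not, which contradicts your claimed $\SigmaP2$-hardness for \emph{all} non-Schaefer $S$; the cases must be read with the valid sets excluded from case~(1), and the hardness proof must then still cover the complementive non-Schaefer sets, where your method is unavailable.

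Case~(2) hardness has the same defect in concrete form. Your plan presupposes that the literals $x_i$ and $\neg x_i$ serving as justifications and conclusions of the guessing defaults can be written as applications of relations from $S$, but a Schaefer set that is neither $\false$-valid nor $\true$-valid need not express any unit relation: take $S=\{x\oplus y\}$, which is affine and neither $\false$- nor $\true$-valid, yet is preserved by negation, so no unary constant relation is definable from it even with existential quantification. Hence the 3CNF reduction cannot be transported gadget by gadget as you describe, and the step you yourself flag as the crux is precisely where the known proofs require the implementation/weak-base machinery rather than Post-style substitution. On the positive side, your $\NP$ upper bound is correct: guessing the set of generating defaults and verifying Reiter's conditions needs only polynomially many satisfiability and implication tests, and implication reduces to constantly many satisfiability tests with fixed variables per relation application, all tractable for Schaefer~$S$.
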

  
  \begin{theorem}[\cite{chhesc07,schnoor07}]
    Let $S$ be a set of relations. Then the 
    credulous (resp.\ skeptical) reasoning problem for default logic over relations from $S$ is
    \begin{enumerate} \itemsep 0pt
      \item $\SigmaP2$-complete if $S$ is not Schaefer,
      \item $\NP$-complete (resp.\ $\co\NP$-complete) if $S$ is Schaefer but neither $\false$-valid nor $\true$-valid,
      \item $\co\NP$-complete if $S$ $\false$-valid or $\true$-valid but not Schaefer,
      \item in $\P$ in all other cases.
    \end{enumerate}
  \end{theorem}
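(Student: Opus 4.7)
My plan is to derive upper and lower bounds case by case, reusing the extension-existence trichotomy stated just above together with Schaefer's classical dichotomy for $S$-satisfiability and $S$-implication. The unifying template is: guess which defaults become applicable, verify the guess by propositional reasoning over $S$-formulas, and test whether $\varphi$ is in (some or every) resulting closure. Case~1 ($\SigmaP2$/$\PiP2$) is immediate from the general default-logic upper bound of Theorem~\ref{thm:default-logic}. For case~2, Schaeferness makes $S$-satisfiability and $S$-implication polynomial-time; so for credulous reasoning an $\NP$ machine guesses a candidate set $G\subseteq D$, computes the deductive closure of $W$ together with the conclusions of $G$ in $\P$, checks in $\P$ that exactly the rules in $G$ are applicable against this closure, and tests $\varphi$-entailment in $\P$. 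Skeptical reasoning is the complement. For case~3, $\false$- or $\true$-validity forces a canonical stable extension to exist and to be tractably computable, so the reasoning task collapses to $S$-implication, which is $\co\NP$ when $S$ is not Schaefer. In case~4 both implication and existence are polynomial-time, so the whole problem is in $\P$.

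For the lower bounds the main device is that credulous reasoning subsumes extension existence: asking whether a fixed, trivially-true $S$-atom lies in some extension is equivalent to asking whether any extension exists. Thus the $\SigmaP2$-hardness from the preceding extension-existence trichotomy transfers to case~1, and the $\PiP2$-hardness for the skeptical side follows dually via the observation that an inconsistent $W$ makes the all-formulas set the unique stable extension. For case~2 credulous $\NP$-hardness I would reduce from $S$-satisfiability, which is $\NP$-complete under the non-validity assumptions, using defaults of the form $\frac{\,:\,\lambda}{\lambda}$ per literal $\lambda$ to pick an assignment and $W$ to enforce the constraints; both Boolean constants are recovered by primitive-positive definitions since $S$ is neither $\false$- nor $\true$-valid. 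The skeptical $\co\NP$-hardness follows by complementation, and the case~3 $\co\NP$-hardness is inherited directly from the $\co\NP$-hardness of $S$-implication when $S$ is not Schaefer.

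The main obstacle I anticipate is the case~2 hardness reduction: encoding a SAT-like problem using only $S$-applications in all three components of each default rule requires realizing the ``guess assignment via defaults, verify via $W$'' gadget purely relationally for each of the four Schaefer families (Horn, dual Horn, bijunctive, affine). Careful use of primitive-positive definitions, combined with the non-validity assumptions, is what provides the required literal and constant expressibility. A minor secondary issue is that each default's justification must itself be a single $S$-atom, which may force splitting a rule whose natural justification is a conjunction into several rules together with a small bookkeeping argument to preserve semantics.
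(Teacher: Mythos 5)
The survey gives no proof of this theorem (it defers to Schnoor's thesis for details), so I can only judge your proposal on its own merits; as written it contains one factual error and two genuine gaps. The factual error is in your case~2 lower bound: you propose to reduce from ``$S$-satisfiability, which is $\NP$-complete under the non-validity assumptions.'' By Schaefer's dichotomy this is false --- if $S$ is Schaefer (Horn, dual Horn, bijunctive or affine), then $S$-satisfiability is in $\P$ regardless of $\false$- or $\true$-validity; the validity conditions only rescue tractability for languages that are \emph{not} Schaefer. So there is no $\NP$-hard source problem of the kind you describe, and the gadget ``guess an assignment with defaults $\frac{\,:\,\lambda}{\lambda}$, let $W$ enforce the constraints'' cannot work, since $W$ consists of $S$-applications and can only express tractable constraints. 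The $\NP$-hardness must instead come from the default machinery itself: reduce from unrestricted 3SAT and encode the clauses in the premise/justification interplay of the rules (as in the reduction $\frac{\overline{\ell_{i1}}:\overline{\ell_{i2}}}{\ell_{i3}}$ used for the affine clones in the proof of Theorem~\ref{thm:extension-existence}), using the non-validity of $S$ only to implement the unary constant relations needed to write literals as $S$-atoms.

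Second, your case~3 membership argument asserts that $\false$- or $\true$-validity makes the unique stable extension ``tractably computable.'' Validity does force uniqueness (every justification is automatically consistent, so the construction is monotone), but computing the set of generating defaults requires iterated implication tests over a non-Schaefer $S$, each of which is $\co\NP$-hard; the bound this yields is $\DeltaP2$, not $\co\NP$. To land in $\co\NP$ you need a polynomial certificate for non-membership: a set $G$ of defaults together with, for each $d\notin G$, an assignment witnessing that $W$ plus the conclusions of $G$ does not entail the premise of $d$ (so that $G$ is a pre-fixed point containing all generating defaults), plus one further assignment satisfying $W$ and the conclusions of $G$ but falsifying $\varphi$. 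Two smaller issues: ``skeptical hardness follows by complementation'' is not an argument --- the skeptical problem is not the complement of the credulous one, and its $\co\NP$- and $\PiP2$-hardness need their own reductions; and your case~2 verifier must use the grounded, iterative characterization of Definition~\ref{def:stable-extension}, since merely checking that ``exactly the rules in $G$ are applicable against the closure'' accepts ungrounded fixed points such as the theory $\bigl(\emptyset,\bigl\{\frac{p:p}{p}\bigr\}\bigr)$ with candidate $\theorems{\{p\}}$. Finally, note that cases~1 and~3 of the statement overlap (a non-Schaefer $S$ may be $\false$-valid), and your treatments of them are mutually inconsistent unless the hardness transfer in case~1 is restricted to sets $S$ that are additionally neither $\false$- nor $\true$-valid.
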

  For detailed proofs of these results, see \cite{schnoor07}. 
  
  We like to remark that also results like these about Boolean constraint satisfaction problems are proved using Post's lattice. This is because the classes of Boolean relations above (affince, bijunctive, Horn, dual Horn) can all be defined using so called \emph{polymorphism}, a kind of closure property, of Boolean relations. These sets of polymorphism are always clones, \ie, they appear somewhere in the lattice. To state only one example, a relation is Horn iff the set of its polymorphisms is the class $\CloneE_{2}$. The structure of the lattice is then used in the proof to make a case distinction on all possible sets of polymorphisms of $S$ and determine the complexity in each case. For more details, we refer the reader to \cite{crvo08}.

  Having settled the complexity of these decision problems, 
  mind that these results do only speak about the existence of objects, \eg, stable extensions.
  But what about the complexity of counting them? 
  We will conclude this survey of the complexity of default logic with a treatment of the problem to count the number stable extensions.
  
  Let us introduce the relevant notions and counting complexity classes first. 
  For alphabets $\Sigma$ and $\Pi$, 
  let $A \subseteq \Sigma^\star \times \Pi^\star$ be a binary relation such that the set 
  $A(x) := \{y \in \Pi^\star \mid (x,y) \in A\}$ is finite for all $x \in \Sigma^\star$. 
  We write $\#A$ to denote the following \emph{counting problem}:
  Given $x \in \Sigma^\star$, compute $|A(x)|$.
  The class of counting problems computable in polynomial time is denoted by $\FP$.
  To characterize the complexity of counting problems that are not known to be in $\FP$, we follow \cite{hevo95} 
  and define an operator $\#\mathord{\cdot}\mathcal{C}$ on classes $\mathcal{C}$ of decision problems:
  $\#A \in \#\mathord{\cdot}\mathcal{C}$ if (a) there exists a polynomial $p$, such that for all $x$ and all $y \in A(x)$,
  $|y| \leq p(|x|)$ and (b) the problem to decide, given $x$ and $y$, whether $y \in A(x)$ is in $\mathcal{C}$.
  Clearly, $\#\mathord{\cdot}\P$ coincides with $\SharpP$, the class of functions counting the number of accepting path of nondeterministic polynomial-time Turing machines---the natural analogue of $\NP$ in the counting complexity context~\cite{val79b}. Applying $\#\cdot$ to the classes of the polynomial hierarchy, we now obtain a linearly ordered hierarchy of counting complexity classes~\cite{tod91b,hevo95}:
  $
    \SharpP \subseteq \#\mathord{\cdot}\NP \subseteq \#\mathord{\cdot}\co\NP = \#\mathord{\cdot}\P^{\NP} \subseteq \#\mathord{\cdot}\SigmaP2 \subseteq \#\mathord{\cdot}\PiP2 = \#\mathord{\cdot}\P^{\SigmaP2} \subseteq \cdots.
  $
  
  The counting complexity of default logic has, to the authors' best knowledge, first been considered in \cite{thomas:10:phd}.
  There, it was shown that counting the number of stable extensions is 
  complete for the second level of the counting polynomial hierarchy, $\SharpCoNP$, whenever $[B \cup \{\true\}]=\CloneBF$;
  becomes $\DeltaP2$-complete for all sets $B$ such that $[B \cup \{\true\}]=\CloneM$;
  complete for the first level of the counting hierarchy for all affine sets $B$ such that $\neg$ can be implemented from $B \cup \{\true\}$; 
  and becomes efficiently computable in all other cases.
  The counting complexity thus decreases analogously to the complexity of the extension existence problem.
  However, observe that we blur over the distinction between decision problems and their characteristic functions:
  By Lemma~\ref{lem:dl-only-one-ext} any monotone $B$-default theory has at most one stable extension.
  The problem to count the number stable extensions thus coincides with the characteristic function of the extension existence problem, 
  which $\DeltaP2$-complete.
  
  \begin{theorem}[\cite{thomas:10:phd}] \label{thm:num-extensions}
    Let $B$ be a finite set of Boolean functions. 
    Then the problem to count the number of stable extensions in $B$-default logic is 
    \begin{enumerate} \itemsep 0pt
      \item 
      $\SharpCoNP$-complete if $\CloneS_{1} \subseteq [B]$ or $\CloneD \subseteq [B]$,
      \item 
      $\DeltaP2$-complete if $\CloneS_{11} \subseteq [B] \subseteq \CloneM$,
      \item 
      $\SharpP$-complete if $[B] \in \{\CloneN,\CloneN_2,\CloneL,\CloneL_0,\CloneL_3\}$,
      \item 
      in $\FP$ in all other cases (that is, if $[B]  \subseteq \CloneV$ or $[B] \subseteq \CloneE$ or $[B] \subseteq \CloneR_1$)
    \end{enumerate}
    via parsimonious reductions.
  \end{theorem}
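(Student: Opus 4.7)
The plan is to proceed case by case, mirroring the structure of Theorem~\ref{thm:extension-existence} and refining each reduction to be parsimonious where hardness is required.

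For case (1), I would first observe that any candidate stable extension $E$ of a $B$-default theory $(W,D)$ is completely determined by the subset $D' \subseteq D$ of default rules that are applied in its construction via the operator $\Gamma_{W,D}$. Verifying that a guessed $D'$ yields a valid stable extension amounts to a polynomial number of implication checks between $B$-formulas, each of which lies in $\co\NP$. This places the counting problem in $\SharpCoNP$. For the lower bound I would revisit the Gottlob--Stillman reduction underlying Theorem~\ref{thm:default-logic} and, after possibly introducing auxiliary determinizing defaults, obtain a parsimonious reduction from the canonical $\SharpCoNP$-complete problem of counting $x$ such that $\forall y\,\varphi(x,y)$ holds. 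Lemma~\ref{lem:discard_true} then extends the hardness to all $B$ with $\CloneS_{1} \subseteq [B]$ or $\CloneD \subseteq [B]$, since the substitution it performs preserves the number of satisfying assignments and therefore also preserves the number of stable extensions.

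For case (2), Lemma~\ref{lem:dl-only-one-ext} implies that any monotone $B$-default theory has at most one stable extension, so the number-of-extensions function is just the characteristic function of the extension existence problem, which Theorem~\ref{thm:extension-existence}(2) classifies as $\DeltaP2$-complete; under parsimonious reductions, characteristic functions inherit the complexity of their underlying decision problem. For case (3), since $B$-formula implication and satisfiability are both in $\P$ when $[B]$ is affine, verifying a guessed $D' \subseteq D$ can be performed in polynomial time, giving the upper bound $\SharpP$. For $\SharpP$-hardness I would verify that the 3-SAT reduction sketched in the proof of Theorem~\ref{thm:extension-existence}(3) is already parsimonious: the ``choice'' defaults $\frac{\true : x_i}{x_i}$ and $\frac{\true : \neg x_i}{\neg x_i}$ force each stable extension to commit to a unique truth value for every variable, while the clause-encoding defaults ensure that such a commitment yields a stable extension only when the induced assignment satisfies $\varphi$. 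Consequently, stable extensions are in bijection with satisfying assignments, giving a parsimonious reduction from $\#3\mathrm{SAT}$.

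For case (4), the remaining clones split in two. If $[B] \subseteq \CloneR_1$, then by Lemma~\ref{lem:dl-only-one-ext} the count is always $1$ and the problem is trivially in $\FP$. If $[B] \in \{\CloneV,\CloneV_0,\CloneE,\CloneE_0\}$, I would give an explicit polynomial-time counting algorithm by computing the iterated closure under $\Gamma_{W,D}$ directly; the restricted expressive power of these clones keeps the branching structure of possible extensions under control and admits direct enumeration. The main obstacle I foresee is case (1): the published $\SigmaP2$-hardness reductions produce stable extensions that correspond to existential witnesses rather than to the universally quantified objects being counted, so they are not immediately parsimonious toward $\SharpCoNP$. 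A gadget that determinizes the existential layer---forcing each valid universal assignment to induce a \emph{unique} stable extension---will be needed, and verifying that no spurious extensions are introduced is the most delicate technical step. A secondary concern is case (4) for $\CloneV$ and $\CloneE$: one must confirm that the apparent exponential freedom in choosing which defaults to apply actually collapses to a polynomial-time countable structure.
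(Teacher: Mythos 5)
Your proposal follows essentially the same route as the paper's: case (2) via Lemma~\ref{lem:dl-only-one-ext}, by which a monotone $B$-default theory has at most one stable extension so that counting collapses to the characteristic function of the $\DeltaP2$-complete extension existence problem; case (3) by observing that the 3CNF reduction from the proof of Theorem~\ref{thm:extension-existence} puts stable extensions in bijection with satisfying assignments and is therefore parsimonious (this is correct, and is exactly why the paper can claim parsimonious reductions suffice); and the $[B]\subseteq\CloneR_1$ part of case (4) via the ``exactly one extension'' half of the lemma. Two corrections to your outline. First, your ``secondary concern'' about $\CloneV$ and $\CloneE$ is unfounded and the direct-enumeration algorithm you propose is unnecessary: $\CloneV\subseteq\CloneM$ and $\CloneE\subseteq\CloneM$, so Lemma~\ref{lem:dl-only-one-ext} already rules out more than one stable extension, and the count is again the characteristic function of the extension existence problem, which is in $\P$ for these clones by Theorem~\ref{thm:extension-existence}; membership in $\FP$ is immediate. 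Second, in the upper bound for case (1), verifying a candidate set of generating defaults requires both implication tests (in $\co\NP$) and non-implication tests (in $\NP$), so the verification predicate is not literally in $\co\NP$ but in $\P^{\NP}$; this still yields the claimed bound only because $\SharpCoNP=\#\mathord{\cdot}\P^{\NP}$, which you should invoke explicitly.

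The one genuine gap is the $\SharpCoNP$-hardness in case (1). You correctly diagnose that the $\SigmaP2$-hardness reduction behind Theorem~\ref{thm:default-logic} is not obviously parsimonious towards counting the witnesses of a $\forall$-formula, and that a gadget forcing each such witness to induce exactly one stable extension (and no spurious ones) is required---but you do not construct it or verify the bijection, and this is precisely the technical content of the cited result from \cite{thomas:10:phd}. Since the survey itself defers this step to the thesis, your writeup matches the paper's level of detail everywhere except that a self-contained proof would have to supply this construction, together with the routine transfer via Lemma~\ref{lem:discard_true} from $\{\land,\lor,\neg\}$ to all $B$ with $\CloneS_1\subseteq[B]$ or $\CloneD\subseteq[B]$.
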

  
  Note that for the classification in Theorem~\ref{thm:num-extensions} the conceptually simple 
  and well-behaved parsimonious reductions are sufficient (a counting problem $\#A$ parsimoniously reduces to a counting problem $\#B$ if there is a polynomial-time computable function $f$ such that for all inputs $x$, $|A(x)|=|B(f(x))|$ \cite{val79b}), while for related classifications 
  in the literature less restrictive and more complicated reductions had to be used 
  (see, \eg, \cite{duheko05,hermann-pichler:07, babocrrescvo10}).

\section{Autoepistemic Logic}

  Autoepistemic logic was introduced by Moore~\cite{moore:85} to overcome some of the peculiarities of the 
  non-monotonic logics devised by McDermott and Doyle~\cite{mcdermott-doyle:80} and McDermott~\cite{mcdermott:82}.
  While Moore defined autoepistemic logic without referring to any particular modal system, 
  it turned out that his logic coincides with the non-monotonic modal logic based on KD45~\cite{shvarts:90}.
  Therefore, autoepistemic logic can be considered a popular representative among the 
  non-monotonic modal logics. 
  The connection of these logics and particularly autoepistemic logic to default logic has been extensively studied. 
  The first major approach in this direction was taken by 
  Konolige~\cite{konolige:88}, who showed that default logic can be embedded into autoepistemic logic using 
  slightly different semantics for the latter. Subsequently, Marek and Truszczynski~\cite{marek-truszczynski:89,marek-truszczynski:90} 
  showed that, using strengthened notions in autoepistemic logic or weakened notions in default logic, the two logics coincide in terms of expressivity. Finally, Gottlob~\cite{gottlob:95} showed that default logic can be embedded into standard autoepistemic logic, while the converse direction was shown to hold by Janhunen~\cite{janhunen:99}.

  The intention of Moore was to create a logic modelling the beliefs of an ideally rational agent, \ie, an agent that believes all things he can deduce and refutes belief in everything else. To this end, autoepistemic logic extends propositional logic with the modal operator $L$ stating that its argument ``is believed''. The set of all autoepistemic formulae $\allAutoepistemicFormulae$ is defined as
  \[
    \varphi ::= a \mid f(\varphi,\ldots,\varphi) \mid L \varphi,
  \]
  where $a$ is a proposition and $f$ is a Boolean function, 
  and the relation $\models$ is extended to simply treat formulae starting with an $L$ as atomic. 
  Similarly to default logic, the semantics of autoepistemic logic are defined in terms of fixed points, which in the context of autoepistemic logic are called stable expansions:

  \begin{definition}[\cite{moore:85}]
    Let $\Sigma \subseteq \allAutoepistemicFormulae$ be a set of autoepistemic formulae. 
    A set $\Delta \subseteq \allAutoepistemicFormulae$ is a \emph{stable expansion} of $\Sigma$ if it satisfies the equation
    \[
      \Delta = \theorems{\Sigma \cup L(\Delta) \cup \neg L(\allAutoepistemicFormulae\setminus\Delta)},
    \]
    where $L(\Delta) := \{ L\varphi \mid \varphi \in \Delta\}$ and $\neg L(\allAutoepistemicFormulae\setminus\Delta) := \{ \neg L\varphi \mid \varphi \not\in \Delta\}$.
  \end{definition}

  \begin{example} \label{ex:ael-1}
    Consider the set $\Sigma=\{ Lx \lor y, x \lor Ly, L(x \lor y) \to z\}$ of autoepistemic formulae.
    We claim that $\Sigma$ has two stable expansions, each of which containing $z$.
    Sticking with our informal interpretation of autoepistemic logic as the logic of an ideally rational agent's beliefs, 
    observe that we cannot deduce $x$ from $\Sigma$.
    Hence, we would assign $Lx$ the value $\false$ and consequently be able to derive $y$ from $Lx \lor y$.
    This in turn allows us to conclude $z$ from $L(x \lor y) \to z$, as $x \lor y$ is derivable from $y$.
    Indeed, 
    the set $\bigcup_{i \in \N} \Delta_i$ of formulae recursively defined via $\Delta_0:=\{x\}$ and $\Delta_i := \theorems{\Sigma \cup L(\Delta_{i-1}) \cup \neg L(\allAutoepistemicFormulae\setminus\Delta_{i-1})}$ is a stable expansion of $\Sigma$ that contains $z$.

    On the other hand, we are not able to deduce $y$ from $\Sigma$ either. 
    Hence, we could also continue to assign to $Ly$ the value $\false$ and be therefore able to derive $x$.
    Again, we may conclude $z$ from $L(x \lor y) \to z$.
    And just as above, $\bigcup_{i \in \N} \Delta'_i$ with $\Delta'_0:=\{y\}$ and $\Delta'_i$ defined as $\Delta_i$ is a stable expansion of $\Sigma$ that contains $z$.
  \end{example}

  There is an important difference to default logic as stable expansions need not be minimal fixed points:

  \begin{example}
    Consider $\Sigma':= \{Lp \to p\}$. 
    The set $\Sigma'$ has two stable expansions,
    one stable expansion containing $\neg Lp$ and the other one containing $Lp$.
    As an iterative construction as in Example~\ref{ex:ael-1} is deemed to fail for the latter, 
    it may be considered \emph{ungrounded} in the set of premises $\Sigma$.
  \end{example}

  Clearly, sets of autoepistemic formulae can also posses no or a single stable expansion. Hence,
  the \emph{expansion existence problem}, the \emph{credulous reasoning problem} and the \emph{skeptical reasoning problem} arise just as in default logic.
  The first treatment of the complexity of these problems has been performed by Niemel\"a~\cite{niemela:90}. 
  In his paper, he gave a finite characterization of stable expansions in terms of \emph{full sets}:
  Let $\SFL{\Sigma}$ denote the set of $L$-prefixed subformulae of formulae in $\Sigma$.
  \begin{definition}[\cite{niemela:90}] \label{def:ae-full-set}
    Let $\Sigma \subseteq \allAutoepistemicFormulae$ be a set of autoepistemic formulae.
    A set $\Lambda \subseteq \SFL{\Sigma} \cup \{\neg L\varphi \mid L\varphi \in \SFL{\Sigma}\}$ is \emph{$\Sigma$-full} if for all $L\psi \in \SFL{\Sigma}$:
    \begin{itemize} \itemsep 0pt
      \item $\Sigma \cup \Lambda \models \psi$ iff $L\psi \in \Lambda$.
      \item $\Sigma \cup \Lambda \not\models \psi$ iff $\neg L\psi \in \Lambda$.
    \end{itemize}
  \end{definition}

  \begin{proposition}[\cite{niemela:90}]
    Let $\Sigma \subseteq \allAutoepistemicFormulae$ be a set of autoepistemic formulae.
    If $\Lambda$ is a $\Sigma$-full set, then 
    there exists exactly one stable expansion of $\Delta$ such that $\Lambda \subseteq L(\Delta) \cup \neg L(\allAutoepistemicFormulae\setminus\Delta)$.
    Vice versa, if $\Delta$ is a stable expansion of $\Sigma$, then 
    there exists exactly one $\Sigma$-full set $\Lambda$ such that 
    such that $\Lambda \subseteq L(\Delta) \cup \neg L(\allAutoepistemicFormulae\setminus\Delta)$.
  \end{proposition}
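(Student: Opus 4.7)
The plan is to exhibit mutually inverse maps between the collection of $\Sigma$-full sets and the collection of stable expansions of $\Sigma$, and then verify the uniqueness claim in each direction.

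First I would handle the direction from stable expansion to full set. Given a stable expansion $\Delta$, the natural candidate is
\[
  \Lambda_\Delta := \{L\psi : L\psi \in \SFL{\Sigma},\ \psi \in \Delta\} \cup \{\neg L\psi : L\psi \in \SFL{\Sigma},\ \psi \notin \Delta\},
\]
which by construction lies within $L(\Delta) \cup \neg L(\allAutoepistemicFormulae \setminus \Delta)$. To verify the two fullness conditions, I would rewrite $\psi \in \Delta$ using the stable-expansion equation and then reduce the premise set from $\Sigma \cup L(\Delta) \cup \neg L(\allAutoepistemicFormulae \setminus \Delta)$ to $\Sigma \cup \Lambda_\Delta$ by observing that classical entailment treats $L$-prefixed formulas as propositional atoms, so only $L$-atoms occurring in $\SFL{\Sigma}$ (which is closed under $L$-subformulas, and hence contains every $L$-subformula of any relevant $\psi$) are relevant for the derivation.

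Conversely, given a $\Sigma$-full set $\Lambda$, I would define $\Delta$ as the unique set satisfying $\Delta = \theorems{\Sigma \cup \Lambda \cup L(\Delta) \cup \neg L(\allAutoepistemicFormulae \setminus \Delta)}$, constructed by first forming $T := \theorems{\Sigma \cup \Lambda}$ and then consistently extending by the rule ``$L\varphi \in \Delta$ iff $\varphi \in \Delta$'' for $L\varphi \notin \SFL{\Sigma}$. The containment $\Lambda \subseteq L(\Delta) \cup \neg L(\allAutoepistemicFormulae \setminus \Delta)$ then follows from the fullness biconditionals applied to $T$, and to check that $\Delta$ is a stable expansion I invoke once more the ``only $L$-subformulas of $\Sigma$ matter'' observation to show that augmenting $\Lambda$ to the full $L(\Delta) \cup \neg L(\allAutoepistemicFormulae \setminus \Delta)$ does not enlarge the deductive closure on formulas whose $L$-subformulas lie in $\SFL{\Sigma}$; the remaining formulas are pinned down by the stability rule above.

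For uniqueness, in each direction I would argue that any other candidate satisfying the required containment must agree with the canonical one on $\SFL{\Sigma}$: the fullness biconditionals force exactly one of $L\psi,\neg L\psi$ to belong to $\Lambda$ for each $L\psi \in \SFL{\Sigma}$, and the stable-expansion equation similarly forces membership of $\psi$ in $\Delta$ for such $L\psi$; these determine the object uniquely via deductive closure. The main technical obstacle I expect is the formalization of the ``irrelevance of extra $L$-atoms'' step: one needs a rigorous atomic-substitution lemma stating that for any $\psi$ whose $L$-subformulas all lie in $\SFL{\Sigma}$, propositional entailment from $\Sigma$ together with any total truth-assignment to the $L$-atoms of $\allAutoepistemicFormulae$ depends only on the restriction of that assignment to $\SFL{\Sigma}$. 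This is where the otherwise smooth translation between syntactic ($\Lambda$) and semantic ($\Delta$) data has its real content.
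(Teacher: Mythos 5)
Your plan is correct and is essentially the argument from Niemel\"a's paper; note that the survey itself states this proposition without proof, simply citing \cite{niemela:90}, so there is no in-paper proof to diverge from. The two mutually inverse maps ($\Delta \mapsto \Lambda_\Delta$ restricted to $\SFL{\Sigma}$, and the fixed-point construction of $\Delta$ from a full set $\Lambda$), together with the atomic-substitution lemma saying that entailment of a formula all of whose $L$-subformulae lie in $\SFL{\Sigma}$ depends only on the truth values assigned to the $L$-atoms in $\SFL{\Sigma}$, are exactly the content of Niemel\"a's $\models_L$ relation, and you have correctly located the real work in that lemma. The only point I would flag for a full write-up is the inconsistent stable expansion $\Delta = \allAutoepistemicFormulae$ (where $\neg L(\allAutoepistemicFormulae\setminus\Delta)=\emptyset$ and the corresponding $\Lambda$ must make $\Sigma\cup\Lambda$ inconsistent), which deserves a separate line rather than being absorbed into the generic case.
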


  Using full sets as finite representations for stable expansions, Niemel\"a obtained a $\SigmaP2$ upper bound for the expansion existence problem and the credulous reasoning problem, and a $\PiP2$ upper bound for the skeptical reasoning problem: for the expansion existence problem, simply guess a candidate for a  full set and verify the conditions given in Definition~\ref{def:ae-full-set} using an oracle for formula implication. To extend this idea to the credulous and skeptical reasoning problem, one still needs to define a consequence relation $\models_L$ that, given $\Sigma \subseteq \allAutoepistemicFormulae$ and a $\Sigma$-full set $\Lambda$ implies exactly those formulae contained in the stable expansion corresponding to $\Lambda$. Niemel\"a shows that $\models_L$ can be defined such that the problem of deciding the relation Turing reduces to the implication problem. From this it is easy to see that the credulous and skeptical reasoning problem are contained in $\SigmaP2$ and $\PiP2$ respectively.
  The matching lower bounds were later established by Gottlob in~\cite{gottlob:92}.

  \begin{theorem}[\cite{niemela:90,gottlob:92}] \label{thm:ael-gottlob}
    The expansion existence problem and the credulous reasoning problem for autoepistemic logic are $\SigmaP2$-complete, 
    whereas the skeptical reasoning problem for autoepistemic logic is $\PiP2$-complete.
  \end{theorem}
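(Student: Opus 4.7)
The plan is to derive the upper bounds from Niemel\"a's characterisation of stable expansions via $\Sigma$-full sets, and the lower bounds from a Gottlob-style reduction from $\SigmaP{2}$-QBF.

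For the upper bound, note that for any input $\Sigma$ the set $\SFL{\Sigma}$ has polynomial size, so any candidate full set $\Lambda \subseteq \SFL{\Sigma} \cup \{\neg L\varphi \mid L\varphi \in \SFL{\Sigma}\}$ can be guessed in polynomial time. To certify that the guess is $\Sigma$-full, I would check, for each of the polynomially many $L\psi \in \SFL{\Sigma}$, whether $\Sigma \cup \Lambda \models \psi$ with $L$-prefixed subformulae treated as fresh propositional atoms. Each such check is a propositional implication, hence in $\co\NP$, so expansion existence is in $\NP^{\co\NP} = \SigmaP{2}$. For credulous reasoning with queried formula $\varphi$, after guessing $\Lambda$ one additionally verifies $\Sigma \cup \Lambda \models_{L} \varphi$, where $\models_{L}$ is the consequence relation capturing membership in the stable expansion determined by $\Lambda$; since $\models_{L}$ Turing-reduces to propositional implication, this too is a $\co\NP$ test and credulous reasoning is in $\SigmaP{2}$. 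Skeptical reasoning is the complement of asking whether some stable expansion omits $\varphi$, and thus lies in $\PiP{2}$.

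For the lower bound, I reduce from the $\SigmaP{2}$-complete problem of deciding whether $\exists X \forall Y \, \varphi(X,Y)$ is valid. I construct $\Sigma$ containing, for each $x_{i} \in X$, a choice gadget such as $L x_{i} \lor L \neg x_{i}$ together with $L x_{i} \to x_{i}$ and $L \neg x_{i} \to \neg x_{i}$, so that every stable expansion of $\Sigma$ must commit to a total assignment $\sigma$ of $X$; alongside these I include $\varphi(X, Y)$ (possibly wrapped in a modal gadget) arranged so that the commitment $\sigma$ yields a stable expansion iff $\varphi(\sigma, Y)$ is entailed by the committed $X$-literals regardless of $Y$, which by propositional logic is equivalent to $\forall Y \, \varphi(\sigma, Y)$ being true. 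Hardness of credulous and skeptical reasoning follows from hardness of expansion existence by a routine padding: introduce a fresh proposition $q$ enforced to belong to every surviving expansion, and query $q$ (respectively $\neg q$).

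The main obstacle is the precise design of the choice gadget. Niemel\"a's notion of $\Sigma$-fullness is permissive enough to admit ungrounded self-supporting commitments (witness the $\{L p \to p\}$ expansion containing $L p$), so one must add formulae that force every $\Sigma$-full set into bijection with an $X$-assignment $\sigma$ for which $\forall Y\, \varphi(\sigma,Y)$ holds and that block any spurious fixed point. This is where Gottlob's original argument invests most of its effort, through a careful syntactic layout that couples the $L$-content of the expansion tightly to the intended commitment and shows that the correspondence to $\Sigma_{2}$-QBF is exact.
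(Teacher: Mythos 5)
Your upper bound is exactly the paper's argument: guess a candidate full set $\Lambda\subseteq\SFL{\Sigma}\cup\{\neg L\varphi\mid L\varphi\in\SFL{\Sigma}\}$, verify Niemel\"a's fullness conditions with polynomially many $\co\NP$/$\NP$ queries, and handle the reasoning variants via the relation $\models_L$, which Turing-reduces to propositional implication. The lower bound, however, has a genuine gap, in two respects. First, the concrete choice gadget you propose --- $Lx_i\lor L\neg x_i$ together with $Lx_i\imp x_i$ and $L\neg x_i\imp\neg x_i$ --- fails for the expansion existence problem: take $\Lambda$ to consist of \emph{all} formulae in $\SFL{\Sigma}$ taken positively. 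Then $\Sigma\cup\Lambda$ is propositionally inconsistent (it entails $x_i\land\neg x_i$), hence $\Sigma\cup\Lambda\models\psi$ for every $\psi$, so $\Lambda$ vacuously satisfies both fullness conditions and corresponds to the inconsistent stable expansion $\allAutoepistemicFormulae=\theorems{\Sigma\cup L(\allAutoepistemicFormulae)}$. Every instance your reduction produces therefore has a stable expansion, and the map cannot establish $\SigmaP{2}$-hardness of expansion existence. Second, you explicitly defer the heart of the matter (``this is where Gottlob's original argument invests most of its effort'') rather than supplying it; a proof that points at the literature for the one step that actually needs proving has not proved the theorem.

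For comparison, the reduction in the paper is $\Sigma:=\{Lx_i\leftrightarrow x_i\mid 1\leq i\leq n\}\cup\{L\psi\}$, and it \emph{exploits} the ungroundedness you were trying to engineer away rather than blocking it: $Lx_i\leftrightarrow x_i$ makes belief in $x_i$ exactly self-supporting, so each of the $2^n$ choices of which $Lx_i$ to place in $\Lambda$ is internally coherent, and the only nontrivial fullness condition that remains is $\Sigma\cup\Lambda\models\psi$, which under the committed $x$-literals is precisely the truth of $\forall y_1\cdots\forall y_m\,\psi$ (note that $\psi$ enters only inside $L\psi$, never as an asserted fact). Crucially, asserting all $L$-atoms is \emph{consistent} with this $\Sigma$ --- it merely forces every $x_i$ to $\true$ --- so the inconsistent expansion never arises. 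If you want to keep separate beliefs $Lx_i$ and $L\neg x_i$, you must at minimum verify that the all-positive full set is excluded; with a single biconditional per variable that verification is immediate, which is why the published gadget looks the way it does.
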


  The hardness is obtained using a surprisingly simple reduction form the validity problem for quantified Boolean formulae of the form $\exists x_1 \cdots \exists x_n \forall y_1 \cdots \forall y_m \psi$, where $\psi$ is a propositional formula with $\Vars{\psi}=\{x_i \mid 1 \leq i \leq n\} \cup \{y_i \mid 1 \leq i \leq m\}$. 
  Given a formula of the above form, we transform it into a set $\Sigma$ of autoepistemic formulae defined as 
  \[
    \Sigma := \{Lx_i \leftrightarrow x_i \mid 1 \leq i \leq n\} \cup \{L\psi\}.
  \]
  The idea behind the reduction is to mimic existential quantification using different sets of beliefs such that
  an assignment satisfying $\forall y_1 \cdots \forall y_m \psi$ results in a stable expansion:
  If $\sigma$ is an assignment that satisfies $\forall y_1 \cdots \forall y_m \psi$, 
  then the $\Sigma \cup \Lambda$ with $\Lambda = \{L\psi\} \cup \{Lx_i \mid 1 \leq i \leq n\}\cup\{\neg Lx_1 \mid 1 \leq i \leq n\}$ entails
  $\psi$; therefore, $\Lambda$ is a $\Sigma$-full set. On the other hand, if $\Lambda$ is a full set, then we can reconstruct from it an assignment satisfying $\forall y_1 \cdots \forall y_m \psi$.

  Beyond this, the complexity of these problems for fragments of autoepistemic logic has seemingly only been studied 
  in \cite{creignou-meier-thomas-vollmer:10}. There, it was shown that already autoepistemic logic using only $\land$ and $\lor$ is $\SigmaP2$-complete and that tractable fragments occur only for affine sets of Boolean functions.
  To present the results, say that, for a finite set $B$ of Boolean functions, an \emph{autoepistemic $B$-formula} is an autoepistemic formula using Boolean functions from a given finite set $B$ only, and denote by $\allAutoepistemicFormulae(B)$ the set of all autoepistemic $B$-formulae. 
  Further, let \emph{$B$-autoepistemic logic} denote autoepistemic logic restricted to autoepistemic $B$-formulae
  and define the \emph{credulous (resp.\ skeptical) reasoning problem for $B$-autoepistemic logic} as the problem to decide, given a set $\Sigma$ of autoepistemic $B$-formulae and an autoepistemic $B$-formula $\varphi$, whether $\varphi$ is contained in a stable expansion (resp.\ all stable expansions)  of $\Sigma$ .

  \begin{theorem}[\cite{creignou-meier-thomas-vollmer:10}] \label{thm:ael-exp}
    Let $B$ be a finite set of Boolean functions.
    Then the expansion existence problem and the credulous (resp.\ skeptical) reasoning problem for $B$-autoepistemic logic are 
    \begin{enumerate} \itemsep 0pt
      \item 
      $\SigmaP2$-complete (resp.\ $\PiP2$-complete) if $\CloneD_2\subseteq [B]$ or $\CloneS_{00}\subseteq [B]$ or $\CloneS_{10}\subseteq [B]$,
      \item 
      $\NP$-complete (resp.\ $\co\NP$-complete) if $\CloneV_2\subseteq [B]\subseteq \CloneV$,
      \item 
      $\ParityL$-hard and contained in $\P$ if $\CloneL_2\subseteq [B]\subseteq \CloneL$, 
      \item 
      in $\L$ in all other cases (that is, if $[B]\subseteq \CloneE$),
      \end{enumerate}
      via logspace many-one reductions.
  \end{theorem}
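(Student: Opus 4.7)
The plan is to mirror the pattern of Theorem~\ref{thm:extension-existence} for default logic: a case analysis driven by Post's lattice, using Niemel\"a's full-set characterization (Definition~\ref{def:ae-full-set}) to reduce the expansion and reasoning problems to implication problems over $B$-formulae whose complexity is already classified in the literature cited in the excerpt.

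For the membership parts I first transfer the $\SigmaP2$/$\PiP2$ upper bound of Theorem~\ref{thm:ael-gottlob} to Part~(1), the general argument being insensitive to the choice of connectives. For Parts~(2)--(4) I use a single scheme: nondeterministically guess a candidate $\Sigma$-full set $\Lambda \subseteq \SFL{\Sigma} \cup \{\neg L\varphi \mid L\varphi \in \SFL{\Sigma}\}$ and check the two conditions of Definition~\ref{def:ae-full-set} using an implication oracle for $B$-formulae. When $[B]\subseteq\CloneV$ implication is polynomial, giving $\NP$ (resp.\ $\co\NP$); when $[B]\subseteq\CloneL$ implication reduces to Gaussian elimination over $\mathbb{F}_2$ and Niemel\"a's $\models_L$ relation can be evaluated in $\P$; when $[B]\subseteq\CloneE$ the formulae are essentially conjunctions, so the full set is uniquely determined and can be computed in $\L$.

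For the hardness parts I adapt Gottlob's reduction from QBF validity used in Theorem~\ref{thm:ael-gottlob}: the crucial idea, reusable across clones, is that $L$-prefixed subformulae behave like fresh atomic propositions under $\models$, so negations that are missing from the clone can be delegated to the $\neg L$-atoms chosen in the full set. Concretely, for $\CloneD_2 \subseteq [B]$ I would use the majority function together with $L$-atoms to encode the matrix of the QBF, and for $\CloneS_{00}\subseteq[B]$ and $\CloneS_{10}\subseteq[B]$ analogous encodings built from the bases $\{x\lor(y\land z)\}$ and $\{x\land(y\lor z)\}$ listed in Table~\ref{tab:clones}. The $\NP$-hardness in Part~(2) follows from a reduction from SAT in which literal polarities are simulated by the choice between $Lx_i$ and $\neg Lx_i$ in the full set, and the $\ParityL$-hardness of Part~(3) follows by reducing from solvability of linear systems over $\mathbb{F}_2$, which can be encoded directly using $\xor$.

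The main obstacle I anticipate is Part~(1): verifying that the QBF reduction really can be pushed through inside each of the three restricted clones $\CloneD_2$, $\CloneS_{00}$, $\CloneS_{10}$. None of them contains explicit negation or the Boolean constants, so the entire encoding of the QBF matrix must be expressed using only the monotone or partially monotone generators from Table~\ref{tab:clones}, with all bookkeeping for the quantifier alternation carried on the modal layer via full-set atoms. Establishing this correspondence cleanly will require a separate, clone-specific gadget construction and is the delicate technical core of the argument.
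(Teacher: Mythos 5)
Your overall architecture (Niemel\"a's full sets plus the complexity of $B$-implication for the upper bounds, an adaptation of Gottlob's QBF reduction for the lower bounds) matches the paper's, and your treatment of Parts (2)--(4) is essentially what is done there. But for Part (1) you have correctly identified the delicate core --- pushing the $\SigmaP2$-hardness through the negation-free, constant-free clones $\CloneD_2$, $\CloneS_{00}$, $\CloneS_{10}$ --- and then left it unresolved, proposing three separate clone-specific gadget constructions. That is precisely the step where your plan has a gap, and the paper closes it with two ideas you do not supply. First, a constant-elimination lemma: the constant $\true$ is removed as in Lemma~\ref{lem:discard_true}, and the constant $\false$ is simulated by the formula $Lf$ for a fresh proposition $f$ --- since $f$ is underivable, every consistent stable expansion contains $\neg Lf$, so $Lf$ behaves semantically as $\false$. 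This means hardness only has to be proved relative to $[B\cup\{\false,\true\}]$, which collapses all three cases of Part (1) into the single case where $\land$ and $\lor$ are available; no majority-function or $\CloneS_{00}$/$\CloneS_{10}$-specific gadgets are needed. Second, the actual negation-free QBF encoding: put $\psi$ in negation normal form, replace each negative literal $\neg x_i$, $\neg y_i$ by a fresh proposition $x_i'$, $y_i'$ to obtain $\varphi'$, and take
\[
  \Sigma:=\{L\varphi'\}\cup\{Lx_i\lor x_i',\;x_i\lor Lx_i'\mid 1\leq i\leq n\}\cup\{y_i\lor y_i'\mid 1\leq i\leq m\}.
\]
The paired formulae $Lx_i\lor x_i'$ and $x_i\lor Lx_i'$ force every stable expansion to contain exactly one of $x_i,x_i'$ (mimicking existential choice), while $y_i\lor y_i'$ forces at least one of $y_i,y_i'$ in every model (mimicking universal quantification). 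Your intuition that ``negations can be delegated to the $\neg L$-atoms of the full set'' is pointing in the right direction, but without the renaming of negative literals into primed propositions and this specific linking gadget, the claim that the encoding goes through in each clone is unsubstantiated; as written, the hardest part of the theorem remains an announced intention rather than a proof.

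A smaller remark: the same renamed construction, specialized to $\exists$-only prefixes and CNF matrices, is what gives the $\NP$-hardness in Part (2), so once the Part (1) gadget is in place your Part (2) reduction comes for free rather than needing a separate simulation of literal polarities.
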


  Note that the complexity classification of these problems substantially differs from their analogues in default logic, 
  which can be credited to the different approach to modelling non-monotonicity: while default logic is limited to consistency 
  testing in the justification of a default rule, autoepistemic logic is capable of both positive and negative introspection.
  As another result, in general the intertranslatability of autoepistemic logic and default logic does not hold for fragments of these logics
  (unless collapses considered unlikely occur).

  We will briefly present the ideas behind Theorem~\ref{thm:ael-exp}.
  To start with, the proof relies on the following lemma, which significantly reduces the number of clones to be considered.

  \begin{lemma} \label{lem:ael-constants}
    Let $B$ be a finite set of Boolean functions and $\Sigma \subseteq \allAutoepistemicFormulae(B \cup \{\false,\true\})$.
    Then we can construct in logspace a set $\Sigma' \subseteq \allAutoepistemicFormulae(B)$ such that the stable expansions 
    of $\Sigma$ and $\Sigma'$ coincide on all autoepistemic formulae over $\Vars{\Sigma}$.
  \end{lemma}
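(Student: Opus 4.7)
The plan is to extend Lemma~\ref{lem:discard_true} by introducing fresh propositions and exploiting the modal operator $L$, which is available in $\allAutoepistemicFormulae(B)$ regardless of $B$. Let $t,f \notin \Vars{\Sigma}$ be fresh atoms and let $\sigma$ be the uniform substitution sending the constant $\true$ to the atom $t$ and the constant $\false$ to the autoepistemic formula $Lf$. I would then set $\Sigma' := \sigma(\Sigma) \cup \{t\}$. Since atoms and $L$-prefixed formulae lie in $\allAutoepistemicFormulae(B)$ independently of $B$, we have $\Sigma' \subseteq \allAutoepistemicFormulae(B)$, and the construction is obviously computable in logarithmic space.

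For correctness I would exhibit a bijection between stable expansions of $\Sigma$ and those of $\Sigma'$ that preserves membership of every formula $\varphi$ with $\Vars{\varphi}\subseteq\Vars{\Sigma}$. Niemel\"a's full-set characterization (Definition~\ref{def:ae-full-set}) serves as the convenient intermediary: given a $\Sigma$-full set $\Lambda$ corresponding to a stable expansion $\Delta$ of $\Sigma$, the candidate $\Sigma'$-full set is $\Lambda' := \sigma(\Lambda) \cup \{Lt, \neg Lf\}$. The added modal atoms reflect that $t \in \Sigma'$ forces $Lt \in \Delta'$ in every stable expansion, while the fresh proposition $f$ occurs in $\Sigma'$ only inside the $L$-prefixed, and hence classically atomic, subformula $Lf$, so $f$ is not positively derivable and $\neg Lf \in \Delta'$ in every consistent expansion. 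I would then verify that $\Lambda'$ is indeed $\Sigma'$-full and that the unique stable expansion it determines agrees with $\Delta$ on all formulae over $\Vars{\Sigma}$; the reverse mapping applies $\sigma^{-1}$ to $\Lambda' \cap \SFL{\sigma(\Sigma)}$.

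The main technical step, and where I expect the main difficulty, is to show that $\sigma$ commutes with classical entailment (under which $L$-prefixed formulae are treated as atoms) on the fragment over $\Vars{\Sigma}$, once the atoms $Lt$ and $\neg Lf$ have been introduced into the generating set of the closure defining $\Delta'$. Two observations drive this: $Lt \in \Delta'$ makes $t$ interchangeable with $\true$ inside the underlying classical deductive closure, while $\neg Lf \in \Delta'$ renders every occurrence of $Lf$ classically refutable, exactly as $\false$ would be. The most delicate subcase is the degenerate one in which $\Sigma$ admits the inconsistent stable expansion $\allAutoepistemicFormulae$: the substitution must be shown to transport this inconsistency faithfully, and the interaction between the substitution image of a derivation of $\false$ from $\Sigma$ (which yields a classical derivation of $Lf$ from $\Sigma'$) and the $\neg Lf$ contributed by the stable-expansion machinery is what collapses any candidate $\Delta'$ to $\allAutoepistemicFormulae(B)$. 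Carrying out the syntactic commutation together with this degenerate case is the heart of the proof.
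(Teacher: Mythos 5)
Your construction is exactly the one the paper uses: eliminate $\true$ by substituting a fresh proposition $t$ and adding $t$ to the set (Lemma~\ref{lem:discard_true}), and eliminate $\false$ by substituting $Lf$ for a fresh proposition $f$, which behaves like $\false$ because $f$ is underivable and hence $\neg Lf$ belongs to every consistent stable expansion. The paper only sketches the correctness argument, while you additionally outline how to verify it via Niemel\"a's full-set characterization, but the approach is essentially the same.
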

  \begin{proof}
    Let $\Sigma \subseteq \allAutoepistemicFormulae(B \cup \{\false,\true\})$ be given. 
    We first eliminate the constant $\true$ using Lemma~\ref{lem:discard_true} and transform the resulting set $\Sigma'$ to 
    $\Sigma''$ by substituting all occurrences of $\false$ with the formula $Lf$, where $f$ is a fresh proposition.
    Suppose that $\Delta$ is a consistent stable expansion of $\Sigma''$. 
    As $f$ cannot be derived from $\Sigma''$, $\Delta$ has to contain the $\neg Lf$. 
    Hence any satisfying assignment of $\Delta$ sets $Lf$ to $\false$.
  \end{proof}

  It thus suffices to consider the complexity of the expansion existence problem for sets $B$ such that $[B] \in \{\CloneBF,\CloneM,\CloneE,\CloneV,\CloneL,\CloneN,\CloneI\}$. The key observation in the proof of Theorem~\ref{thm:ael-exp} is that the reductions from the validity problem for quantified Boolean formulae of the form $\exists x_1\cdots \exists x_n\forall y_1\cdots \forall y_m \psi$ with $\psi$ in negation normal form does not requires negations:
  Given a formula $\varphi$ in the above form, replace all negative literals $\neg x_i$ and $\neg y_i$ in  $\psi$ with new propositions $x_i'$ and $y_i'$. Call the resulting formula $\varphi'$. We then construct the set of autoepistemic $\{\land,\lor\}$-formulae as 
  \[
    \Sigma:=\{\L\varphi'\} \cup \{Lx_i \lor x_i',x_i \lor Lx_i' \mid 1 \leq i \leq n\} \cup \{y_i \lor y_i' \mid 1 \leq i \leq m\}.
  \]
  Due to the formulae $Lx_i \lor x_i',x_i \lor Lx_i'$, $1 \leq i \leq n$, any stable expansion $\Delta$ of $\Sigma$ contains either $x_i$ or $x_i'$ (but not both), while the formulae $y_i \lor y_i'$, $1 \leq i \leq m$, guarantee that either $y_i$ or $y_i'$ is set to $\true$ in any satisfying assignment of $\Delta$. From this, it is easy to see that $\varphi$ is valid iff $\Sigma$ has a stable expansion.
  Hence, the expansion existence problem is $\SigmaP2$-complete for $\CloneS_{00}\subseteq [B]$, $\CloneS_{10}\subseteq [B]$, or $\CloneD_2\subseteq [B]$.

  This construction can be generalized to also work for quantified Boolean formulae of the form $\exists x_1\cdots \exists x_n \psi$ with $\psi$ in conjunctive normal form. Hence, we obtain $\NP$-hardness for $\CloneV_2 \subseteq [B]$. The corresponding upper bound follows from the fact the formula implication of $B$-formulae with $[B] \subseteq \CloneV$ is tractable.

	For an argument to obtain a polynomial-time upper bound for the remaining cases, we refer the reader to the original paper.
	This concludes the discussion of Theorem~\ref{thm:ael-exp}.
  
  Turning to the problem of counting the number of stable expansions, the reader may easily convince himself that the reductions used to establish the $\NP$-hardness and $\SigmaP2$-hardness above are parsimonious. Hence, the complexity classification of the counting problem is analogous to that of the expansion existence problem:
  
  \begin{theorem}[\cite{creignou-meier-thomas-vollmer:10}] \label{thm:num-expansions}
    Let $B$ be a finite set of Boolean functions. 
    Then the problem to count the number of stable expansions in $B$-autoepistemic logic is
    \begin{enumerate} \itemsep 0pt
    \item 
    $\SharpCoNP$-complete if $\CloneD_2\subseteq [B]$ or $\CloneS_{00}\subseteq [B]$ or $\CloneS_{10}\subseteq [B]$,
    \item 
    $\SharpP$-complete if $\CloneV_2\subseteq [B]\subseteq \CloneV$,
    \item 
    in $\FP$ in all other cases (that is, if $[B]\subseteq \CloneL$ or $[B]\subseteq \CloneE$),
    \end{enumerate}
    via parsimonious reductions.
  \end{theorem}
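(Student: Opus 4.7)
The plan is to mirror the structure of Theorem~\ref{thm:ael-exp}, exploiting the fact that stable expansions are in bijection with $\Sigma$-full sets, and showing that the reductions establishing hardness of the decision problem are in fact parsimonious.

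For the lower bounds, I would first inspect the two reductions sketched just after Theorem~\ref{thm:ael-exp}. In the reduction from $\exists \bar x\,\forall \bar y\,\psi$ to the set $\Sigma=\{L\varphi'\}\cup\{Lx_i\lor x_i',x_i\lor Lx_i'\}\cup\{y_i\lor y_i'\}$, the clauses $Lx_i\lor x_i'$ and $x_i\lor Lx_i'$ force each stable expansion of $\Sigma$ to fix exactly one of $x_i, x_i'$, while the universally quantified part is absorbed into the implication test $\Sigma\cup\Lambda\models\varphi'$. This yields a bijection between assignments $\sigma$ to $\bar x$ that make $\forall \bar y\,\psi$ true and stable expansions of $\Sigma$; hence the reduction is parsimonious from $\#\exists\forall\mathrm{SAT}$, which is the canonical $\SharpCoNP$-complete problem. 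The analogous construction with $\psi$ in CNF and no universal block similarly yields a parsimonious reduction from $\#\mathrm{SAT}$, giving $\SharpP$-hardness for $\CloneV_2\subseteq[B]\subseteq\CloneV$. Via Lemma~\ref{lem:ael-constants} these hardness results transfer to all clones listed in cases (1) and (2).

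For the upper bounds, I would use Niemel\"a's characterization of stable expansions by $\Sigma$-full sets (Definition~\ref{def:ae-full-set}). Each $\Sigma$-full set $\Lambda$ is a subset of $\SFL{\Sigma}\cup\{\neg L\varphi\mid L\varphi\in \SFL{\Sigma}\}$ and thus of polynomial size, and by the accompanying proposition it determines a unique stable expansion. So counting stable expansions amounts to counting $\Sigma$-full sets. Verifying that a guessed $\Lambda$ is $\Sigma$-full reduces to polynomially many queries of the form ``$\Sigma\cup\Lambda\models\psi$?'' on $B$-formulae. For arbitrary $B$ this check lies in $\co\NP$, placing the counting problem in $\#\mathord{\cdot}\co\NP=\SharpCoNP$; when $[B]\subseteq\CloneV$, $B$-implication is in $\P$ by the results cited in Theorem~\ref{thm:ael-exp}, so the check runs in polynomial time, placing the count in $\#\mathord{\cdot}\P=\SharpP$.

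For the $\FP$ cases $[B]\subseteq\CloneE$ and $[B]\subseteq\CloneL$ I would argue that one can not only verify but also enumerate (or directly count) the candidate full sets in polynomial time. For $[B]\subseteq\CloneE$ (conjunctive fragments), after applying Lemma~\ref{lem:ael-constants} the structure of $\Sigma$ is so restricted that the $L$-subformulae can be processed in a Horn-like propagation, yielding a unique or easily counted full set. For $[B]\subseteq\CloneL$ (affine) the implication problem is decidable by Gaussian elimination; combined with the fact that full sets decompose into independent choices on the $L$-subformulae once the linear-algebraic constraints from $\Sigma$ are solved, this gives a closed-form count. The main obstacle I anticipate is the affine case: unlike the extension-existence situation where one merely wants a yes/no answer, here one has to count potentially exponentially many full sets, so the argument must express the count as a product over dimensions of affine solution spaces rather than by naive enumeration. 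Once this is in hand, the parsimony of the reductions makes all the classifications precise and tight.
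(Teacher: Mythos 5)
Your proposal is correct and follows essentially the same route as the paper: the paper likewise obtains the hardness results by observing that the $\SigmaP2$- and $\NP$-hardness reductions for the expansion existence problem are parsimonious (via the bijection between full sets and stable expansions), and the upper bounds come from counting $\Sigma$-full sets, whose verification is a Boolean combination of $B$-implication queries and hence lands in $\#\mathord{\cdot}\co\NP$, $\SharpP$, or $\FP$ depending on $[B]$. Your remark that the affine case requires computing the count via dimensions of affine solution spaces rather than enumeration correctly identifies the one place where the counting argument genuinely goes beyond the decision-problem argument.
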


  Note that again parsimonious reductions are sufficient to obtain the completeness results in Theorem~\ref{thm:num-expansions}.

\section{Circumscription}

  The third non-monotonic logic we will turn to is circumscription, which instead of extending classical logic with 
  default rules or introspection restricts the attention to minimal models.
  Circumscription was introduced by McCarthy~\cite{mccarthy:80} in 1980 to overcome the \emph{qualification problem}, \ie, 
  the problem of listing all preconditions required for an action to have its intended effect. 
  His approach was to allow for the conclusion that the objects that can be shown to have a certain property by reasoning 
  are all objects that satisfy this property. Following 
  \cite{lifschitz:85}, this is achieved by considering only those models that are minimal with respect to a preorder on the set of assignments.
  For ease of notation, we will identify assignments $\sigma$ with the set $\{ p \mid \sigma(p)=\true\}$.
    
  \begin{definition}
    Let $P$, $Q$, $Z$ partition the set of propositions and 
    let $\sigma,\sigma' \colon P \cup Q \cup Z \to \{\false,\true\}$ be assignments. 
    Define $\leq_{(P,Z)}$ as the preorder defined by 
    \[
      \sigma \leq_{(P,Z)} \sigma' \iff \sigma \cap P \subseteq \sigma' \cap P  \mbox{ and } \sigma \cap Q = \sigma' \cap Q.
    \]
  \end{definition}

  Using $\leq_{(P,Z)}$, we define a consequence relation $\models_{(P,Z)}$ such that for an assignment $\sigma \colon P \cup Q \cup Z \to \{\false,\true\}$ and a set of formulae $\Gamma$ with $\Vars{\Gamma}\subseteq P\cup Q\cup Z$, $\sigma \models_{(P,Z)} \Gamma$ if
  $\sigma$ is minimal w.r.t.\ $\leq_{(P,Z)}$ among all models of $\Gamma$.. In this case, $\sigma$ is also  called a \emph{circumscriptive model} of $\Gamma$.
  Accordingly, we can define the notion of (circumscriptive) implication:
  $\Gamma \models_{(P,Z)} \varphi$ if $\varphi$ is satisfied in all circumscriptive models of $\Gamma$.
  It is not hard to see that circumscription coincides with reasoning under the \emph{extended closed world assumption}, 
  in which all formulae involving only propositions from $P$ that cannot be derived from $\Gamma$ are assumed to be false \cite{geprpr89}.
  
  \begin{example}
    Let $P:=\{x\}$, $Q:=\emptyset$, $Z:=\{y,z\}$ and $\Gamma:=\{(x \land \neg y) \to z\}$.
    The models of $\Gamma$ are $\emptyset$, $\{y\}$, $\{z\}$, $\{x,y\}$, $\{y,z\}$, $\{x,z\}$, $\{y\}$, and $\{x,y,z\}$.
    Of these, only $\emptyset$, $\{z\}$, and $\{y,z\}$ are minimal with respect to $\leq_{(P,Z)}$.
    Hence, $\Gamma \cup \{y\} \models_{(P,Z)} z$, while $\Gamma \cup \{y\} \not\models z$.
  \end{example}
  
  The notions of circumscriptive models and circumscriptive inference naturally lead to the following decision problems, 
  that received extensive study in the literature:
  
  (\emph{Circumscriptive model checking}) Given a set of formulae $\Gamma$, a preorder $\leq_{(P,Z)}$ on the set of its propositions and an assignment $\sigma$, 
  does $\sigma \models_{(P,Z)} \Gamma$ hold?
  
  (\emph{Circumscriptive inference}) Given a set of formulae $\Gamma$ and a formula $\varphi$, a preorder $\leq_{(P,Z)}$ on the set of their propositions, 
  does $\Gamma \models_{(P,Z)} \varphi$ hold?

  The circumscriptive model checking is dual to a generalization of the minimal satisfiability problem, 
  \ie, the question whether a given formula has a model that is strictly smaller than a given assignment with respect to a given preorder $\leq_{(P,Z)}$, and known to be  $\co\NP$-complete in general~\cite{cadoli:92}, whereas the circumscriptive inference problem 
  was shown to be $\PiP2$-complete by Eiter and Gottlob in~\cite{eiter-gottlob:93}. These results reveal that, alike default logic and autoepistemic logic, circumscription exhibits an increase in the complexity of model checking and reasoning as compared to traditional propositional logic. This increase in the complexity raises the question for restrictions that lower the complexity of these tasks.
  Accordingly, the complexity of these problems has been studied for both restricted sets of Boolean functions and in Schaefers framework.
  We will consider the restrictions obtained from \emph{Schaefer's framework} first.
  
  Define the \emph{circumscriptive model checking problem for sets of relations from $S$} as the problem to decide,
  given a
    a set $\Gamma$ of applications of relations from $S$,
    an assignment $\sigma\colon \Vars{\Gamma} \to \{\false,\true\}$ and 
    a partition $(P,Q,Z)$ of $\Vars{\Gamma}$,
  whether $\sigma$ is a minimal model of $\Gamma$ with respect to $\leq_{(P,Z)}$.
  In~\cite{kiko01a}, Kirousis and Kolaitis showed that using Schaefer's framework, the circumscriptive model checking problem restricted to $Q=Z=\emptyset$ is dichotomic, a result which was later generalized to the general case in~\cite{kiko03}:
  
  \begin{theorem}[\cite{kiko03}]
    Let $S$ be a set of relations.
    Then the circumscriptive model checking problem for sets of relations from $S$ is 
    \begin{enumerate} \itemsep 0pt
      \item $\co\NP$-complete if $S$ is not Schaefer and
      \item in $\P$ in all other cases.
    \end{enumerate}
  \end{theorem}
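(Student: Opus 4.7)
The plan is to split the argument into a universal $\co\NP$ upper bound and then a dichotomy proof that turns on the structural properties guaranteed by Schaefer's classes. For the upper bound, observe that $\sigma$ is a minimal model of $\Gamma$ with respect to $\leq_{(P,Z)}$ exactly when (a) $\sigma$ satisfies every constraint in $\Gamma$, which can be checked in polynomial time by evaluating each relation, and (b) no assignment $\sigma'$ with $\sigma' <_{(P,Z)} \sigma$ satisfies $\Gamma$. Since a witness to the failure of (b) is a polynomial-size $\sigma'$ verifiable in polynomial time, the problem lies in $\co\NP$. This step costs nothing beyond the definition.

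For the hardness direction when $S$ is not Schaefer, the plan is to reduce from the complement of the satisfiability problem $\mathrm{SAT}(S')$ for a standard non-Schaefer template $S'$ (for example, $3\mathrm{SAT}$), which is $\NP$-hard by Schaefer's dichotomy. The reduction uses the polymorphism characterization of non-Schaefer relation sets: if $S$ admits only ``trivial'' polymorphisms (essentially projections plus possibly negation), then every Boolean relation is expressible by primitive positive formulae over $S$ up to existential quantification, and such existentials can be simulated by placing the auxiliary variables in $Z$. Concretely, given an $\mathrm{SAT}(S')$ instance $\varphi$ on variables $X$, I would introduce a single fresh variable $p$, place $X \cup \{p\}$ into $P$, and build $\Gamma$ over $S$ that expresses ``either $p$ is true and the $X$-variables take an arbitrary value, or $p$ is false and $X$ satisfies $\varphi$.'' Then the assignment that sets $p=\true$ and $X$ arbitrarily (say all-true) is a minimal model iff no strictly smaller assignment with $p=\false$ satisfies $\varphi$, \emph{i.e.}, iff $\varphi$ is unsatisfiable. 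The auxiliary $Z$-variables required by the primitive positive encodings do not disturb minimality since they are invisible to $\leq_{(P,Z)}$. Making this template-reduction work uniformly across all non-Schaefer $S$ is where the polymorphism machinery (and \cite{kiko03}'s preceding work \cite{kiko01a}) does the heavy lifting.

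For membership in $\P$ when $S$ is Schaefer, I would handle the four cases separately, exploiting that in each case satisfiability is tractable and the models form a well-understood structure. For Horn $S$, the models are closed under componentwise $\land$, so for each $P$-variable $p$ set to $\true$ by $\sigma$, it suffices to test whether flipping $p$ to $\false$ (keeping $Q$ fixed and leaving $Z$ free) still leaves $\Gamma$ satisfiable; if yes for any such $p$, $\sigma$ is not minimal, otherwise it is. This amounts to $|P|$ calls to Horn-SAT, each polynomial. The dual-Horn case is symmetric using the closure under $\lor$ to reduce to checking whether one can flip some $\false$ in $\sigma \cap P$ up to $\true$ while preserving satisfaction (after appropriate dualization of the minimality test). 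The affine case reduces to linear algebra over $\mathbb{F}_2$: the set of models is an affine subspace, and one tests whether there is a vector in this subspace that is strictly below $\sigma$ on $P$-coordinates and agrees on $Q$, a linear-feasibility question solvable by Gaussian elimination. The bijunctive case uses the implication-graph structure of 2-CNF: model minimality with respect to $\leq_{(P,Z)}$ can be read off from reachability in the implication digraph after fixing $Q$ to $\sigma$'s values.

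The step I expect to be hardest is the hardness reduction: Schaefer's dichotomy gives $\NP$-hardness of satisfiability for non-Schaefer $S$, but I need to bootstrap this into $\co\NP$-hardness of \emph{minimality}, which requires simulating an arbitrary Boolean relation by a primitive positive formula whose existential variables are safely placed in $Z$, while ensuring that the chosen candidate assignment is genuinely minimal exactly when the underlying formula is unsatisfiable. Getting the interaction between the existential ``hiding'' variables in $Z$, the fixed variables in $Q$, and the minimized variables in $P$ right in full generality, rather than for a fixed representative like $3\mathrm{SAT}$, is the delicate part of the argument, and is precisely where the generalization from the $Q=Z=\emptyset$ case of \cite{kiko01a} to the full statement of \cite{kiko03} required substantial new work.
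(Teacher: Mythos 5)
Your $\co\NP$ upper bound and your tractability argument are essentially sound and, modulo presentation, coincide with what the paper does: for Schaefer $S$ the problem Turing-reduces to satisfiability (for each $p \in \sigma \cap P$ ask whether $\Gamma$ has a model with $p$ false, all $P$-variables false under $\sigma$ still false, and $Q$ fixed as in $\sigma$), and each such query stays inside the same tractable Schaefer class. Note, though, that your Horn test as literally stated (``flip $p$ to false, keep $Q$ fixed, leave $Z$ free'') forgets to pin the \emph{other} $P$-variables: a witness to non-minimality must set a strict \emph{subset} of $\sigma \cap P$ to true, so the $P$-variables outside $\sigma$ must also be forced to false; with that fix the four separate case analyses are unnecessary and the single uniform reduction to SAT suffices.

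The hardness direction, however, has a genuine gap. Your reduction rests on the claim that every non-Schaefer $S$ expresses all Boolean relations by primitive positive formulae. This is false: $S$ fails to be Schaefer exactly when $\mathrm{Pol}(S)$ is one of $\CloneI_2,\CloneI_0,\CloneI_1,\CloneI,\CloneN_2,\CloneN$, and only in the case $\mathrm{Pol}(S)=\CloneI_2$ does $S$ pp-define every relation. If $S$ is complementive (e.g.\ $S=\{\mathrm{NAE}\}$, with $\mathrm{Pol}(S)\supseteq\CloneN_2$), or $\false$-valid, or $\true$-valid, then only complementive (resp.\ $\false$-valid, $\true$-valid) relations are pp-definable, so you cannot encode the clauses of an arbitrary $3\mathrm{SAT}$ instance, and your ``$p$ true or $X$ satisfies $\varphi$'' gadget cannot be built. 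These residual cases are precisely where the real work lies; the paper does not supply that argument either but points to the involved three-step reduction from $1$-in-$3$ SAT of Kirousis and Kolaitis, which handles them by gadgetry rather than by generic expressibility. As written, your proof covers only the non-Schaefer sets with trivial polymorphism clone $\CloneI_2$.
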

  
  The tractability if $S$ is Schaefer is easy to verify.
  In this case, the circumscriptive model checking problem Turing reduces to 
  the satisfiability problem, which in this case is tractable by~\cite{sch78}.
  To show the $\co\NP$-hardness in all remaining cases, Kirousis and Kolaitis give an involved  three step reduction from $\mathrm{1}$-$\mathrm{in}$-$\mathrm{3\ SAT}$.
  
  In addition to that, Kirousis and Kolaitis also classified for possible sets of available Boolean functions in an unpublished note.
  Define the \emph{circumscriptive model checking problem for sets of $B$-formulae} as the problem to decide, 
  given
    a set $\Gamma \subseteq \allFormulae(B)$, 
    an assignment $\sigma\colon \Vars{\Gamma} \to \{\false,\true\}$ and 
    a partition $(P,Q,Z)$ of $\Vars{\Gamma}$,
  whether $\sigma$ is a minimal model of $\Gamma$ with respect to $\leq_{(P,Z)}$.
  
  \begin{theorem}[\cite{kiko01b}]
    Let $B$ be a finite set of Boolean functions.
    Then the circumscriptive model checking problem for sets of $B$-formulae is 
    \begin{enumerate} \itemsep 0pt
      \item $\co\NP$-complete if $\CloneS_{02} \subseteq [B]$ or $\CloneS_{12} \subseteq [B]$ or $\CloneD_1 \subseteq [B]$, and
      \item in $\P$ in all other cases (that is, if $[B] \subseteq \CloneM$).
    \end{enumerate}
  \end{theorem}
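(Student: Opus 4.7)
The plan is to establish three things: a uniform $\co\NP$ upper bound valid for every finite $B$, a polynomial-time algorithm covering the monotone case, and $\co\NP$-hardness for each of the three minimal non-monotone clones $\CloneS_{02}$, $\CloneS_{12}$ and $\CloneD_{1}$.

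For the upper bound, I will argue that the complementary problem lies in $\NP$: $\sigma$ fails to be a circumscriptive model of $\Gamma$ precisely when either $\sigma\not\models\Gamma$ (checkable in polynomial time) or some strictly smaller model $\sigma'<_{(P,Z)}\sigma$ exists, which can be guessed and verified in polynomial time. This immediately yields $\co\NP$-membership for arbitrary $B$.

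For tractability when $[B]\subseteq\CloneM$, the first step is to observe that the set of models of a set $\Gamma$ of monotone formulae is upward closed under pointwise comparison. The main claim is then that $\sigma$ is not minimal iff there exists $p\in\sigma\cap P$ such that the assignment $\tau_p$ defined by $\tau_p\cap P=(\sigma\cap P)\setminus\{p\}$, $\tau_p\cap Q=\sigma\cap Q$, and $\tau_p\cap Z=Z$ satisfies $\Gamma$. The ``if'' direction is immediate since $\tau_p<_{(P,Z)}\sigma$; for the converse, if $\sigma'<_{(P,Z)}\sigma$ is any strictly smaller model and $p\in(\sigma\cap P)\setminus(\sigma'\cap P)$, then $\tau_p\geq\sigma'$ pointwise, so monotonicity yields $\tau_p\models\Gamma$. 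This reduces the problem to performing $|\sigma\cap P|$ many polynomial-time satisfaction tests.

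The main obstacle is establishing $\co\NP$-hardness for the three clones $\CloneS_{02}$, $\CloneS_{12}$ and $\CloneD_{1}$. The natural strategy is to reduce from the unrestricted circumscriptive model checking problem, which is $\co\NP$-complete by~\cite{cadoli:92}. Since none of these three clones contains $\neg$ as a derived function, negated literals cannot be written syntactically, and the crux of the reduction is to simulate negation. The standard device is to introduce, for each proposition $x$ occurring negatively in the source instance, a fresh complementary proposition $\overline{x}$ placed into the $Z$-part of the partition, together with a $B$-expressible gadget forcing $\overline{x}\leftrightarrow\neg x$ in every circumscriptive model. The delicate part is to design such a gadget for each base separately: for $\CloneD_{1}$ one exploits the self-dual ternary majority-like function $(x\land y)\lor(x\land\neg z)\lor(y\land\neg z)$; for $\CloneS_{02}$ and $\CloneS_{12}$ one uses the implicational content of the bases $x\lor(y\land\neg z)$ and $x\land(y\lor\neg z)$, combined with a careful placement of original and auxiliary variables into $(P,Q,Z)$ so that the minimization preference on $P$ forces $\overline{x}$ to take the opposite value of $x$ in any minimal model. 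The reduction's correctness must then be verified by constructing an explicit bijection between circumscriptive models of the source and target instances.
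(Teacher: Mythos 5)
Your $\co\NP$ upper bound and your polynomial-time algorithm for $[B]\subseteq\CloneM$ are correct and essentially identical to the paper's: the paper also tests, for each $p\in\sigma\cap P$, the assignment obtained by dropping $p$ from $P$ and setting all of $Z$ to $\true$, and your justification via upward-closure of the models of monotone formulae is exactly the missing correctness argument for that test. So far, so good.

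The hardness part, however, has a genuine gap, and the route you chose is harder than it needs to be. You reduce from unrestricted circumscriptive model checking and propose to simulate negation by complementary propositions $\overline{x}$ together with a $B$-expressible gadget ``forcing $\overline{x}\leftrightarrow\neg x$'' --- but you never construct these gadgets, and they are the entire content of the proof. Worse, the obvious candidates do not exist in these clones: all three of $\CloneS_{02}$, $\CloneS_{12}$, $\CloneD_1$ lie in $\CloneR_2$, so neither $\neg$, nor $\leftrightarrow$, nor $\neg(x\land\overline{x})$ is expressible, and while $x\lor\overline{x}$ is available, mutual exclusion of $x$ and $\overline{x}$ can only come from the minimization order itself. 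Making that work for \emph{model checking} (where the candidate assignment is part of the input and the correspondence between minimal models of source and target must be exact) requires a careful argument that you only gesture at. As written, the reduction is a plan, not a proof.

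The paper avoids all of this with a cleaner observation: every $B$ with $\CloneS_{02}\subseteq[B]$, $\CloneS_{12}\subseteq[B]$, or $\CloneD_1\subseteq[B]$ satisfies $[B\cup\{\false,\true\}]=\CloneBF$, so it suffices to simulate the two \emph{constants} rather than negation. This is done uniformly for all three clones with two fresh propositions $t$ and $f$: replace $\true$ by $t$ and $\false$ by $f$, and add the formulae $t$ and $f\to\bigwedge\Vars{\Gamma}$, so that every model sets $t$ to $\true$ and the only model with $f$ true is the all-ones assignment; one then checks this does not disturb minimality of the given candidate. Rewriting each original connective as a $(B\cup\{\false,\true\})$-formula then completes the reduction from the general ($\co\NP$-hard) problem. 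I would recommend you abandon the per-clone negation gadgets and adopt this constant-simulation argument, or else actually exhibit your gadgets and verify the claimed bijection between minimal models.
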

  
  Key to the classification is that if the set $B$ of all available Boolean functions is monotone,
  then the circumscriptive model checking problem Turing-reduces to the 
  the model checking problem for monotone Boolean formulae.
  Given $\Gamma$ and $\sigma$, denote by $\sigma^i$ the assignment obtained by setting all propositions in $Z$ to $\true$ and 
  the $i$th proposition in $P$, which is set to $\true$ under $\sigma$ to $\false$. 
  Then $\sigma \models_{(P,Z)} \Gamma$ iff for all $\sigma^i$ obtained in this way, $\sigma^i \not\models_{(P,Z)} \Gamma$.
  Thus circumscriptive model checking problem for sets of $B$-formulae is tractable if $[B] \subseteq \CloneM$.
  
  On the other hand, the $\co\NP$-hardness follows from the fact that all remaining sets $B$ satisfy $[B \cup \{\false,\true\}]=\CloneBF$, 
  while we can simulate the Boolean constants:
  If 
  $B\subseteq \CloneR_1$, 
  then appealing to Lemma~\ref{lem:discard_true} suffices. 
  If 
  $B \subseteq \CloneR_1$, 
  we use the mapping
  $\Gamma \mapsto \Gamma':= \Gamma[\false/f,\true/t] \cup \big\{t,f \to \bigwedge \Vars{\Gamma}\big\}$,
  where $\Gamma[\false/f,\true/t]$ denotes the set obtained from $\Gamma$ by replacing all occurrences of $\false$ by the fresh proposition $f$ and all occurrences of $\true$ by the fresh proposition $t$.
  Notice that 
  any model of $\Gamma'$ sets $t$ to $\true$ and that $\Gamma'$ is satisfied by exactly those assignments that satisfy $\Gamma$ and additionally the assignment setting all propositions to $\true$. 
  Thus, if 
    $B\subseteq \CloneR_1$, 
  we may replace any Boolean function in the original $\Gamma$ with an equivalent $(B \cup \{0,1\})$-formula.
  
  We point out that the original proof in \cite{kiko01b} builds on results from~\cite{kiko01a}, the sketch above provides an alternative proof.
  
  As for the inference problem, 
  the computational complexity of circumscriptive inference was first studied in 1990 by Cadoli and Lenzerini~\cite{cale94}, who analyzed the complexity of reasoning under various closed world assumptions. 
  They showed that circumscription for various restrictions on the premises, conclusions and the order still remains intractable and also
  identified tractable fragments. Yet the exact complexity of circumscriptive inference remained open until Eiter and Gottlob~\cite{eiter-gottlob:93} proved its $\PiP2$-completeness.
  This result was further refined both in the framework of 
    restricted sets of available Boolean functions as well as 
    in Schaefer's framework.
  For the latter, Kirousis and Kolaitis~\cite{kiko01c} proved a dichotomy separating the $\PiP2$-complete cases and from those in $\co\NP$, and conjectured that the cases in $\co\NP$ could be refined into $\co\NP$-complete and tractable ones.
  While the $\co\NP$-completeness for dual Horn relations or bijunctive relations was known from~\cite{cale94} and
  Durand and Hermann showed that this also holds for for affine relations~\cite{duhe03},
  Nordh finally affirmatively settled Kirousis and Kolaitis' conjecture in~\cite{nor05}. 
  For the restricted case of \emph{basic circumscription} that requires $Q$ or $Z$ or both to be empty, the trichotomy was established in~\cite{duheno09}.
  We state here the result from~\cite{nor05}.
  
  Say that a relation $R$ is \emph{negative Horn} if it coincides with the set of models of a Horn formula without positive literals.
  Define the \emph{circumscriptive inference problem for sets of relations from $S$} as the problem to decide, 
  given 
    a set $\Gamma$ of applications of relations from $S$,
    a clause $\varphi$ and
    a partition $(P,Q,Z)$ of $\Vars{\Gamma \cup \{\varphi\}}$,
  whether $\Gamma \models_{(P,Z)} \varphi$.
  
  \begin{theorem}[\cite{nor05}]
    Let $S$ be a set of relations.
    Then the circumscriptive inference problem for sets of relations from $S$ is 
    \begin{enumerate} \itemsep 0pt
      \item $\PiP2$-complete if $S$ is not Schaefer,
      \item $\co\NP$-complete if $S$ is Schaefer but neither negative Horn nor both bijunctive and affine nor both Horn and dual Horn, and
      \item in $\P$ in all other case (that is, if $S$ is negative Horn or both bijunctive and affine or both Horn and dual Horn).
    \end{enumerate}
  \end{theorem}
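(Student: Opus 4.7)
The plan is to combine known general bounds with a careful polymorphism-based case analysis driven by Post's lattice. For the $\PiP2$ upper bound, the general theorem of Eiter and Gottlob~\cite{eiter-gottlob:93} applies to arbitrary propositional formulas and hence, after rewriting each application of $R \in S$ as a fixed Boolean formula, to the input at hand. For the $\co\NP$ upper bound when $S$ is Schaefer, I would argue that $\Gamma \not\models_{(P,Z)} \varphi$ iff there exists a $\leq_{(P,Z)}$-minimal model of $\Gamma$ falsifying $\varphi$. By Schaefer's dichotomy~\cite{sch78}, satisfiability of $S$-instances is in $\P$, so one can nondeterministically guess an assignment $\sigma$ with $\sigma \models \Gamma$ and $\sigma \not\models \varphi$, and then check $\leq_{(P,Z)}$-minimality in polynomial time by testing, for every $p \in P$ with $\sigma(p)=\true$, whether $\Gamma$ admits a satisfying assignment agreeing with $\sigma$ on $Q$ and assigning $\false$ to $p$; this is a polynomial number of $\P$-satisfiability calls.

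For the lower bounds, $\PiP2$-hardness for non-Schaefer $S$ follows from~\cite{eiter-gottlob:93} combined with the polymorphism-based expressibility transfer (see~\cite{crvo08}): a non-Schaefer $S$ has a polymorphism clone small enough that applications of relations from $S$ can simulate arbitrary propositional formulas, with auxiliary variables placed in $Z$ to hide the encoding. The $\co\NP$-hardness in the intermediate tier is assembled from base cases already in the literature: dual Horn and bijunctive from~\cite{cale94}, and affine from~\cite{duhe03}; for every remaining Schaefer subclass one constructs a dedicated reduction from a $\co\NP$-complete problem (typically a minimal-model or unsatisfiability variant), again using polymorphism-driven gadgets to realize the required clausal structure entirely within $S$.

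For tractability, each of the three exceptional classes is treated individually. If $S$ is negative Horn, each relation is the solution set of a Horn formula with only negative literals, so models are closed under taking subsets and the unique $\leq_{(P,Z)}$-minimal model is obtained by propagating the $Q$-values and setting every other variable to $\false$, making inference a straightforward polynomial computation. If $S$ is both bijunctive and affine, the admissible relations reduce to equalities, inequalities, and fixed constants, so the model set is an affine subspace on which $\leq_{(P,Z)}$-minimality is decidable by linear algebra plus a component-wise check. If $S$ is both Horn and dual Horn, the relations collapse to conjunctions of implications $x \imp y$ and constants, and minimality is decidable via a closure computation that propagates the forced $\true$ values in $P$ and leaves every other $P$-variable $\false$.

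The main obstacle will be the middle tier: pinpointing exactly which Schaefer sub-classes escape all three tractable intersections, and furnishing a matching $\co\NP$-hardness reduction for each. Post's lattice together with the polymorphism characterization of Horn, dual Horn, bijunctive, and affine relations provides the scaffolding for a finite case split, but proving that in each such case one can realize enough of a $\co\NP$-hard template inside $S$-applications without leaving the clone is the essential technical content of the argument in~\cite{nor05}.
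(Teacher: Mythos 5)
The paper itself contains no proof of this statement: it is a survey item quoted from Nordh~\cite{nor05}, with the surrounding text only recording that the $\PiP2$-versus-$\co\NP$ split is due to Kirousis and Kolaitis~\cite{kiko01c} and that the $\co\NP$-hardness base cases come from~\cite{cale94} and~\cite{duhe03}. Your outline reconstructs exactly that assembly, so there is no alternative route to compare; the question is only whether the steps you do spell out are sound. One of them is not. In the $\co\NP$ membership argument you certify minimality of a guessed model $\sigma$ by asking, for each $p \in P$ with $\sigma(p)=\true$, whether $\Gamma$ has a model agreeing with $\sigma$ on $Q$ and setting $p$ to $\false$. Such a model need not witness non-minimality: it may set to $\true$ some $P$-variable that $\sigma$ sets to $\false$, making it incomparable to $\sigma$ under $\leq_{(P,Z)}$, so your test can reject genuinely minimal models. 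You must additionally force every $P$-variable that is $\false$ under $\sigma$ to stay $\false$. The repaired test is still a polynomial number of satisfiability-with-constants queries, and these remain tractable for Schaefer $S$ because adding unit constraints preserves each Schaefer class, so the $\co\NP$ upper bound survives — but as written the step fails. A smaller slip of the same kind: for negative Horn there is no ``unique'' minimal model; the minimal models are exactly those whose $P$-part is empty (one family per admissible $Q$- and $Z$-part, using downward closure of the model set), and inference reduces to a Horn satisfiability test of $\Gamma \cup \{\neg p \mid p \in P\} \cup \{\neg\varphi\}$. The tractability conclusion is right, the stated reason is not quite.

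The larger issue is that the substance of the theorem is precisely what you defer. The $\PiP2$ tier and the three tractable intersections were essentially known or routine; Nordh's contribution is the $\co\NP$-hardness for \emph{every} Schaefer $S$ outside the three exceptional intersections, in particular the Horn sets that are neither negative Horn nor dual Horn, where no prior base case applies. Your plan names this as ``dedicated reductions using polymorphism-driven gadgets'' and correctly identifies it as the essential technical content, but supplies none of it — no identification of which relations are implementable (with existentially quantified auxiliary variables placed in $Z$) from a given non-exceptional Schaefer $S$, and no hardness template for the Horn case. As a map of where the proof lives in the literature the proposal is accurate; as a proof it has a hole exactly where the theorem is.
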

  
  For the approach parameterizing by the set available Boolean functions, 
  the circumscriptive inference problem for sets of $B$-formulae was classified in~\cite{thomas09}.
  Define the \emph{circumscriptive inference problem for sets of $B$-formulae} as the problem to decide,
  given
    a set $\Gamma \subseteq \allFormulae(B)$, 
    a clause $\varphi$ and 
    a partition $(P,Q,Z)$ of $\Vars{\Gamma \cup \{\varphi\}}$,
  whether $\Gamma \models_{(P,Z)} \varphi$.

  \begin{theorem}[\cite{thomas09}]
    Let $B$ be a finite set of Boolean functions.
    Then the circumscriptive inference problem for sets of $B$-formulae is 
    \begin{enumerate} \itemsep 0pt
      \item $\PiP2$-complete if $\CloneS_{02} \subseteq [B]$ or $\CloneS_{12} \subseteq [B]$ or $\CloneD_1 \subseteq [B]$, 
      \item $\co\NP$-complete if $X \subseteq [B] \subseteq Y$ for some
      $X \in \{\CloneV_2,\CloneS_{10},\CloneD_2,\CloneL_2\}$ and $Y  \in \{\CloneM,\CloneL\}$, and
      \item in $\P$ in all other case (that is, if $[B] \subseteq \CloneN$ or $[B] \subseteq \CloneE$).
    \end{enumerate}
  \end{theorem}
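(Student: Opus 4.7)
The plan is to prove the three-way classification by aligning it with the classification already obtained for circumscriptive model checking and leveraging that result as a verification subroutine. For the $\PiP{2}$-complete cases, membership is inherited directly from the Eiter--Gottlob upper bound applied to the full-Boolean fragment, since a $B$-circumscriptive inference instance is a fortiori an unrestricted one. Hardness is the real work: I would reduce from validity of $\forall\exists$-quantified Boolean formulas. The clones $\CloneS_{02}$, $\CloneS_{12}$, $\CloneD_1$ are exactly those for which $[B\cup\{\false,\true\}]=\CloneBF$, so the reduction must smuggle in the Boolean constants. The strategy I would adopt mirrors the construction sketched for the circumscriptive model checking theorem: introduce fresh propositions $t,f$, replace $\true$ and $\false$ throughout by $t$ and $f$, and then add constraints (together with careful placement of $t,f$ inside the $(P,Q,Z)$-partition) that force $t=\true$ and $f=\false$ in every circumscriptive model. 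One then pulls the reduction back through the encoding of $\forall\exists$-QBF used by Eiter and Gottlob.

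For the $\co\NP$-complete stratum I would obtain the upper bound uniformly by a guess-and-verify procedure: a witness to $\Gamma\not\models_{(P,Z)}\varphi$ is an assignment $\sigma$ that is a circumscriptive model of $\Gamma$ and falsifies $\varphi$. The circumscriptive model check is in $\P$ both for $[B]\subseteq\CloneM$ (by the preceding theorem) and for $[B]\subseteq\CloneL$, where affine satisfiability and minimality are polynomial-time decidable; this places the complement in $\NP$. For hardness at the four critical lower-bound clones $\CloneV_2,\CloneS_{10},\CloneD_2,\CloneL_2$, I would reduce from $B$-formula unsatisfiability, which is $\co\NP$-hard in each case by the classification of \cite{beyersdorff-meier-thomas-vollmer:09a}. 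Putting all propositions into $Z$ and choosing a suitable target clause converts unsatisfiability of a $B$-formula into the circumscriptive inference of a non-trivial consequence.

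For the tractable tier, if $[B]\subseteq\CloneE$ then every $B$-formula is (equivalent to) a conjunction of literals and constants, so $\Gamma$ determines at most one minimal model per $(P,Q,Z)$-class, computable by closure under the conjunctive constraints; inference is then polynomial. If $[B]\subseteq\CloneN$ every formula depends on at most one variable, so $\Gamma$ decomposes into independent unary constraints that can be minimised variable by variable. The main obstacle I anticipate is the $\PiP{2}$-hardness reduction for $\CloneD_1$: self-duality together with $0$- and $1$-reproduction rules out the standard constant-simulation trick, and one cannot flip polarity freely. Here the reduction must exploit the $(P,Q,Z)$-partition itself, using the ternary self-dual majority function to pin auxiliary propositions to a fixed value across all circumscriptive models while still carrying through the quantifier-alternation encoding faithfully. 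Once that step is in hand, the corresponding reductions for $\CloneS_{02}$ and $\CloneS_{12}$ follow as more direct variations.
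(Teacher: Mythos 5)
The paper itself offers no proof of this theorem (it is imported from \cite{thomas09}), so I can only judge your proposal on its merits. Your upper bounds, your tractable tier, and the overall shape of the $\PiP2$-hardness argument (simulate the constants by fresh propositions $t,f$, force $t=\true$ by adding the unit formula $t$ and $f=\false$ by minimization, exploiting that $\CloneS_{02}$, $\CloneS_{12}$, $\CloneD_1$ are exactly the minimal clones with $[B\cup\{\false,\true\}]=\CloneBF$) are sound and consistent with the constant-elimination sketch the survey gives for the model-checking analogue; you do still owe an argument that no spurious minimal models with $f=\true$ survive, which is what the $f\to\bigwedge\Vars{\Gamma}$-style gadget in that sketch is for.

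The genuine gap is your $\co\NP$-hardness argument for the second tier. You propose to reduce from ``$B$-formula unsatisfiability, which is $\co\NP$-hard in each case.'' It is not. All four base clones $\CloneV_2$, $\CloneS_{10}$, $\CloneD_2$, $\CloneL_2$ lie in $\CloneR_1$, so every set of such formulae is satisfied by the all-ones assignment and unsatisfiability is trivial; moreover, for $\CloneV_2$ and $\CloneL_2$ even the implication problem is polynomial-time decidable (disjunctions, respectively Gaussian elimination), so there is no classical $\co\NP$-hard problem at the level of satisfiability or entailment to reduce from. For these two clones the hardness is created entirely by the minimality condition of circumscription: these are precisely the dual Horn and affine cases, whose $\co\NP$-completeness required the dedicated reductions of Cadoli and Lenzerini~\cite{cale94} and of Durand and Hermann~\cite{duhe03} cited in this survey, not a transfer from classical reasoning. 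Your route would at best cover $\CloneS_{10}$ and $\CloneD_2$, where monotone implication is indeed $\co\NP$-complete, and even there you must deal with the fact that the query in this problem is restricted to a clause, so classical $B$-implication does not embed verbatim.
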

  
  \begin{remark}
    Unlike the reasoning problems defined for $B$-default logic and $B$-autoepistemic logic,
    here an arbitrary formula is to be tested for implication. 
    The complexity of deciding the circumscriptive inference of a $B$-formula from a set of $B$-formulae 
    is the same as above for all sets $B$ except those satisfying $\CloneL_2 \subseteq [B] \subseteq \CloneL$; 
    for these the problem is only known to be $\ParityL$-hard and contained in $\co\NP$.
  \end{remark}

  In the remainder of this section, 
  we will consider the counting complexity of circumscription, namely the problem to count the number of 
  minimal models of a given set $\Gamma$ w.r.t.\ to a given preorder $\leq_{(P,Z)}$. This problem, henceforth referred to as the
  \emph{circumscriptive model counting problem}, has recently gained a lot of interest. 
  Its restriction to $Q=Z=\emptyset$ is equivalent to the \emph{minimal model counting problem}, \ie, 
  the problem of counting the number of minimal models w.r.t.\ the coordinatewise partial order 
  on assignments induced by $\false < \true$. While the problem to count the number of all models of a given formula is 
  well-known to be $\SharpP$-complete via parsimonious reduction~\cite{val79b}, 
  the exact complexity of the minimal model counting problem was open for a long time. 
  The problem is easily seen to belong to $\SharpCoNP$, since deciding whether a given assignment 
  is among the minimal models of a given set of formulae is in $\co\NP$.
  Similarly, its $\SharpP$-hardness via parsimonious reductions is apparent: 
  the mapping $\varphi \mapsto \big\{\varphi \land \bigwedge_{x_i \in \Vars{\varphi}} (x_i \xor y_i)\big\}$
  with $y_i \notin \Vars{\varphi}$ constitutes a  parsimonious reduction from the problem to count the number 
  of all models of a given formula. However, $\SharpCoNP$-hardness could only be established using reductions 
  under which not all of the classes of the counting polynomial hierarchy are closed. 
  
  In 2000, Durand, Hermann and Kolaitis~\cite{duheko05} introduced the notion of \emph{subtractive reductions} 
  which suitably relaxed the notion of parsimonious reductions. 
  Say that a counting problem $\#A$ reduces to a counting problem $\#B$ via strong subtractive reduction 
  if there exists a pair of polynomial-time computable functions 
  $f,g$ such that for all $x$,
  $B(g(x)) \subseteq B(f(x))$ and $|A(x)|=|B(f(x))| - |B(g(x))|$.
  Subtractive reductions are the transitive closure of strong subtractive reductions, \ie,
  $\#A$ reduces to $\#B$ via subtractive reductions
  if there exists a finite sequence $(\#C)_{1 \leq i \leq n}$ such that 
  $\#C_1=\#A$, $\#C_n=\#B$ and $\#C_i$ reduces to $\#C_{i+1}$ via strong subtractive reduction for all $1 \leq i < n$.
  Clearly, each parsimonious reduction is also a subtractive reduction.
  And, more importantly, $\SharpP$ and $\#\mathord{\cdot}\PiP{k}$, for all $k \geq 1$, are closed under subtractive reductions.
  
  \begin{theorem}[\cite{duheko05}]
    The minimal model counting problem and the circumscriptive model counting problem are $\SharpCoNP$-complete via subtractive reductions.
  \end{theorem}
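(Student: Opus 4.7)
The plan is to establish matching upper and lower bounds. For the upper bound, both problems lie in $\SharpCoNP = \#\mathord{\cdot}\co\NP$ essentially by definition: given $\Gamma$, a partition $(P,Q,Z)$, and a candidate $\sigma$, one checks $\sigma \models \Gamma$ in polynomial time and then verifies that no strictly smaller $\sigma' <_{(P,Z)} \sigma$ satisfies $\Gamma$, the latter being a single $\co\NP$ query. The minimal model counting problem is the special case $Q = Z = \emptyset$ and enjoys the same upper bound. A prerequisite for the statement to be meaningful is that $\#\mathord{\cdot}\co\NP$ be closed under subtractive reductions; I would verify this by observing that if $B(g(x)) \subseteq B(f(x))$ then an element of $B(f(x)) \setminus B(g(x))$ is characterised by a single $\co\NP$ predicate, which handles strong subtractive reductions, with the transitive closure dispatched by induction along the reduction chain.

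For the lower bound I would reduce from the canonical $\SharpCoNP$-complete problem $\#\Pi_1\mathrm{SAT}$: given a propositional formula $\varphi(x_1,\ldots,x_n,y_1,\ldots,y_m)$, count the assignments $\sigma_x \in \{\false,\true\}^n$ for which $\forall y\,\varphi(\sigma_x,y)$ holds. Exploiting the identity
\[
  \bigl|\{\sigma_x : \forall y\,\varphi(\sigma_x,y)\}\bigr| = 2^n - \bigl|\{\sigma_x : \exists y\,\neg\varphi(\sigma_x,y)\}\bigr|,
\]
I would realise both cardinalities on the right as minimal-model counts of explicitly constructed instances $f(\varphi)$ and $g(\varphi)$, and ensure that the set of minimal models of $g(\varphi)$ embeds into that of $f(\varphi)$ so the reduction is strongly subtractive. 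For the minuend, the standard trick of introducing complementary copies $x_i, x_i'$ with the constraint $x_i \xor x_i'$ produces a formula whose models have a fixed number of positive literals among these pairs, are pairwise incomparable under the subset order, and are therefore all minimal, yielding exactly $2^n$ minimal models. For the subtrahend I would construct a formula whose minimal models biject with the ``bad'' assignments $\{\sigma_x : \exists y\,\neg\varphi(\sigma_x,y)\}$.

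The principal obstacle is the subtrahend: a naive encoding of the existential statement can produce many minimal models per bad $\sigma_x$, because several minimal $y$-witnesses may coexist (being mutually incomparable in the subset order on $y$). I expect this to be the technical heart of the argument, to be overcome by adding a small number of auxiliary variables and clauses that force a unique canonical $y$-witness for every bad $\sigma_x$, while simultaneously preserving the inclusion of minimal-model sets required by the definition of strong subtractive reduction. Since the entire construction can be carried out with $Q = Z = \emptyset$, it reduces $\#\Pi_1\mathrm{SAT}$ directly to the minimal model counting problem; because the latter is the special case of circumscriptive model counting with empty $Q$ and $Z$, the same hardness transfers immediately, matching the $\SharpCoNP$ upper bound and completing the classification.
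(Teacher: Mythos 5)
First, note that the survey does not actually prove this theorem: it is imported verbatim from [duheko05], and the surrounding text only records the $\#\mathord{\cdot}\co\NP$ upper bound (minimality checking is in $\co\NP$) and the easy $\SharpP$-hardness under parsimonious reductions. Your upper bound argument matches the paper's remark and is correct, and your overall strategy for the lower bound --- reduce from $\#\Pi_1\mathrm{SAT}$ and realise $2^n$ minus the number of ``bad'' $\sigma_x$ as a difference of two minimal-model counts --- is the right architecture and is in the spirit of Durand, Hermann and Kolaitis.

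However, there is a genuine gap exactly where you locate the ``technical heart,'' and your proposed remedy does not obviously work. Two distinct obstacles arise for the subtrahend $g(\varphi)$. First, the definition of a strong subtractive reduction demands the literal set inclusion $B(g(\varphi)) \subseteq B(f(\varphi))$, not merely an injection; since the $y$-witness variables must appear in the models of $g(\varphi)$ (they are what certifies $\exists y\,\neg\varphi$), the minimal models of $g(\varphi)$ carry witness values, and these must coincide with whatever values those variables take in the corresponding minimal models of $f(\varphi)$. Your sketch does not explain how $f(\varphi)$ is arranged over the same variable set so that this holds. Second, and more seriously, ``adding a small number of auxiliary variables and clauses that force a unique canonical $y$-witness'' is not a routine step: the subset-minimality order on assignments does not align with any lexicographic or otherwise canonical choice of witness, and a gadget whose minimal models parsimoniously enumerate $\{\sigma_x : \exists y\,\neg\varphi(\sigma_x,y)\}$ would in effect be the whole difficulty of the theorem. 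The published proof resolves this by exploiting the full flexibility of subtractive reductions (a chain of strong subtractive steps in which the spurious extra minimal witnesses are themselves cancelled by a further subtraction), rather than by engineering uniqueness. Finally, a small but real flaw: your justification that $\#\mathord{\cdot}\co\NP$ is closed under subtractive reductions is incorrect as stated, since membership in $B(f(x)) \setminus B(g(x))$ is the conjunction of a $\co\NP$ condition with an $\NP$ condition, not ``a single $\co\NP$ predicate''; the closure is true but requires the more careful argument given in [duheko05].
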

  
  The counting complexity of the minimal model counting problem was further studied in~\cite{durand-hermann:08}. There the authors show that, using Schaefer's framework, the restriction to relations that are either dual Horn, bijunctive or affine reduces the complexity of the problem to $\SharpP$-completeness, whereas the restriction to Horn relations or relations that are both bijunctive an affine yields efficiently computable counting problems.
  Hence, the counting problem for the case that all relations are both dual Horn and Horn as well as
  the case that all relations are negative Horn are $\SharpP$-complete while the underlying decision problems are still tractable.
    
  The counting complexity of both the minimal model counting problem and the circumscriptive model counting problem was further studied in~\cite{thomas:10:phd}, where the complexity of the fragments obtained by restricting the set of Boolean functions is classified.
  Here, the counting complexity decreases analogously to the complexity of the underlying decision problem. 
  
  \begin{theorem}[\cite{thomas:10:phd}]
    Let $B$ be a finite set of Boolean functions. 
    Then the minimal model counting problem for $B$-formulae and the circumscriptive model counting problem for $B$-formulae is
    \begin{enumerate} \itemsep 0pt
      \item 
      $\SharpCoNP$-complete via subtractive reductions if $\CloneS_{02} \subseteq [B]$ or $\CloneS_{12} \subseteq [B]$ or $\CloneD_1 \subseteq [B]$,
      
      \item 
      $\SharpP$-complete via subtractive reductions if $\CloneS_{00} \subseteq [B] \subseteq \CloneM$ or $\CloneS_{10} \subseteq [B] \subseteq \CloneM$ or $\CloneD_2 \subseteq [B] \subseteq \CloneM$,
      
      \item 
      $\SharpP$-complete via Turing reductions if $\CloneV_{2} \subseteq [B] \subseteq \CloneV$ or $\CloneL_2 \subseteq [B] \subseteq \CloneL$, and

      \item in $\FP$ in all other cases (that is, if $[B] \subseteq \CloneN$ or $[B] \subseteq \CloneE$).
    \end{enumerate} 
  \end{theorem}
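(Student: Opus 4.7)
The plan is to use the structure of Post's lattice together with the framework of subtractive reductions from \cite{duheko05} to lift the classification of the underlying circumscriptive model checking problem to the counting problem. The four cases in the theorem match exactly the clones at which the decision version changes its complexity, which is not an accident: the upper bounds for the counting problem follow from applying the $\#\mathord{\cdot}$ operator to the complexity of model checking, while the lower bounds are obtained by adapting the hardness proofs for the full Boolean case to each fragment.

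First I would establish the upper bounds. Given $\Gamma$, a candidate assignment $\sigma$, and a partition $(P,Q,Z)$, verifying $\sigma \models_{(P,Z)} \Gamma$ lies in $\co\NP$ in general and drops to $\P$ whenever $[B]\subseteq \CloneM$, $[B]\subseteq \CloneV$, $[B]\subseteq \CloneL$, $[B]\subseteq \CloneN$ or $[B]\subseteq \CloneE$ by the preceding model-checking theorem. Since the witnesses are assignments of polynomial size, this immediately places the circumscriptive model counting problem in $\SharpCoNP$ for Case~(1) and in $\SharpP$ for Cases~(2) and~(3). For the tractability in Case~(4), I would argue directly: if $[B]\subseteq\CloneN$ then each formula in $\Gamma$ constrains at most one variable, so a logspace propagation leaves a set of free variables whose minimal assignments under $\leq_{(P,Z)}$ can be counted in closed form; if $[B]\subseteq\CloneE$, then $\Gamma$ is equivalent to a single conjunction of literals and so has at most one model, which is trivially minimal.

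For the matching lower bounds I would start from the $\SharpCoNP$-completeness via subtractive reductions of minimal model counting for $\{\land,\lor,\neg\}$-formulae obtained in \cite{duheko05} and lift it to Case~(1) by combining $B$-representations of $\land$, $\lor$ and $\neg$ provided by Post's lattice (available because $\CloneS_{02}\subseteq[B]$, $\CloneS_{12}\subseteq[B]$ or $\CloneD_1\subseteq[B]$) with the simulation of $\false$ and $\true$ used in the proof of the model-checking theorem for $B$-formulae. For Case~(2) I would reduce subtractively from a $\SharpP$-complete monotone model counting problem, exploiting that $\CloneS_{00}$, $\CloneS_{10}$ and $\CloneD_2$ within $\CloneM$ still admit enough expressivity to encode clause-like structures. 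For Case~(3) the standard trick $\varphi \mapsto \bigl\{\varphi \land \bigwedge_{x_i \in \Vars{\varphi}}(x_i \xor y_i)\bigr\}$ noted in the text turns model counting into minimal model counting parsimoniously as soon as $\xor$ is available, which handles $\CloneL_2 \subseteq [B] \subseteq \CloneL$; for $\CloneV_2 \subseteq [B] \subseteq \CloneV$ one combines an analogous gadget with a Turing reduction from counting models of monotone disjunctive formulae.

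The hardest step I expect is Case~(2): obtaining $\SharpP$-hardness via subtractive reductions within the monotone fragments above $\CloneS_{00}$, $\CloneS_{10}$ or $\CloneD_2$. Subtractive reductions demand two instances $f(x),g(x)$ whose solution sets are nested and whose difference equals $|A(x)|$, which is delicate when neither $\neg$ nor $\false$ is directly expressible. Overcoming this requires gadgets that implement the ``subtractive'' part through the partition structure itself, using the fact that promoting a variable from $Z$ to $P$ or vice versa changes which models count as minimal, rather than through explicit negation. Verifying that each such gadget stays within the stipulated clone in every subcase is the main technical content of the proof.
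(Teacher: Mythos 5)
The paper states this theorem without proof (it is a survey item cited from \cite{thomas:10:phd}), so I can only judge your plan on its own merits and against the contextual remarks in the section. Your overall architecture --- membership by applying the counting operator to the model-checking classification, lower bounds by lifting known hardness results through clone representations and constant simulation --- is the natural one and is consistent with the paper's remark that the counting complexity tracks the underlying decision problem. However, three of your steps would fail as written. Most seriously, your Case~(3) argument for $\CloneL_2\subseteq[B]\subseteq\CloneL$ is vacuous: the gadget $\varphi\mapsto\{\varphi\}\cup\{x_i\xor y_i \mid x_i \in \Vars{\varphi}\}$ only keeps the instance inside $\allFormulae(B)$ if $\varphi$ is itself affine, and counting \emph{all} models of an affine system is in $\FP$ by Gaussian elimination, so you are reducing from a tractable problem and prove no hardness at all. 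The $\SharpP$-hardness of counting minimal models of affine formulae is a substantive standalone result (Durand and Hermann, \cite{durand-hermann:08}, invoked earlier in this section) requiring a genuinely different Turing reduction --- which is exactly why Case~(3) is stated only ``via Turing reductions''.

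Two further points. In Case~(4), a set of $\CloneE$-formulae is equivalent to a conjunction of \emph{positive} literals and in general has $2^{n-k}$ models rather than ``at most one''; the case is still in $\FP$, but via a closed-form count over the unconstrained variables split along $(P,Q,Z)$, not by uniqueness of the model. In Case~(2), the theorem claims $\SharpP$-hardness already for the \emph{minimal} model counting problem, i.e.\ the restriction $Q=Z=\emptyset$; your proposed device of realizing the subtractive structure ``through the partition'', by moving variables between $P$ and $Z$, could at best establish hardness of the circumscriptive variant and leaves the minimal-model case unaddressed, so the subtractive gadgets must work without recourse to auxiliary $Q$ or $Z$ variables. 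Finally, wherever you simulate constants (Cases~(1) and~(2)) you must verify that the gadget $\Gamma\mapsto\Gamma[\false/f,\true/t]\cup\{t,\,f\to\bigwedge\Vars{\Gamma}\}$ does not perturb the number of minimal models --- it introduces the all-true assignment as an additional model --- which is precisely the bookkeeping that parsimonious and subtractive reductions do not forgive.
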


\section{Abduction}

Abduction is a fundamental and important form of non-monotonic reasoning introduced by Peirce~\cite{peirce:55}.
It can be thought of as a form of hypothetical reasoning: 
to ask what can be abduced from an observation $\alpha$ is to ask for an explanation, 
which in conjunction with the given background knowledge accounts for $\alpha$.
The importance of this formalism to artificial intelligence was first emphasized by Morgan~\cite{morgan:71} and 
has been fruitfully used in many areas of computer science such as
medical diagnosis~\cite{byaltajo89}, text analysis~\cite{hostapma93}, system diagnosis~\cite{stwo01},
configuration problems~\cite{amfama02}, temporal knowledge bases~\cite{boli00}, 
and has connections to default reasoning~\cite{selman-levesque:1990}.

Here we will consider logic based abduction in which the background theory is represented by a logical theory, specifically in propositional logic. Hence, the abduction problem can in general be formulated as the problem, given a 
  \emph{knowledge base} $\Gamma \subseteq \allFormulae$, 
  a set $A \subseteq \Vars{\Gamma}$ of propositions called \emph{hypothesis}, and 
  an \emph{observation} $q \in \Vars{\Gamma}$, 
to compute a set $E \subseteq \Lits{A}$ of literals over $A$ such that $\Gamma \cup E$ is consistent and $\Gamma \cup E \models q$.
If such a set $E$ exists, then it is called an \emph{explanation} for the \emph{abduction problem} $\mathcal{P} = (\Gamma,A,q)$. 

The computational complexity of this problem has first been considered by Selman and Levesque~\cite{selman-levesque:1990}, 
who showed that it is $\NP$-hard to compute an explanation if $\Gamma$ is restricted to Horn clauses.
Independently, Friedrich, Gottlob and Nejdl~\cite{fegone90} studied the problem for definite Horn clauses. They show that deciding whether a given proposition is contained in some or all explanations is tractable whereas deciding whether a proposition is contained in a subset-minimal explanation is $\NP$-complete.
It is yet tractable to compute some subset-minimal explanation in this case~\cite{bylander:91}. 
Further results were obtained by Eshghi~\cite{eshghi:93}, who proved that 
finding subset-minimal explanations becomes tractable if $\Gamma$ is acyclic Horn and its pseudo-completion is unit-refutable.
Finally, the complexity of logic-based abduction was settled by Eiter and Gottlob.

\begin{theorem}[\cite{eiter-gottlob:1995}] \label{thm:abd-eiter-gottlob}
  \leavevmode
  \begin{enumerate} \itemsep 0pt
    \item 
    To decide,
    given an abduction problem $\mathcal{P}=(\Gamma,A,q)$ and a set $E \subseteq \Lits{A}$,
    whether $E$ is an explanation for $\mathcal{P}$ is
    $\mathrm{D}^\P$-complete.
    \item 
    To decide,
    given an abduction problem $\mathcal{P}$,
    whether there exists an explanation for $\mathcal{P}$ is $\SigmaP2$-complete.
    \item 
    To decide, 
    given an abduction problem $\mathcal{P}$ and a proposition $p$,
    whether all explanations for $\mathcal{P}$ contain $p$ is $\PiP2$-complete. 
  \end{enumerate}
  These results also hold for subset-minimal explanations.
\end{theorem}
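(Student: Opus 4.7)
The plan is to handle the three parts in turn, with membership following from direct simulations and hardness from reductions from canonical complete problems.

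For part~(1), membership in $\mathrm{D}^{\P}$ is immediate: $E$ is an explanation iff $\Gamma\cup E$ is satisfiable (an $\NP$ condition) and $\Gamma\cup E\cup\{\neg q\}$ is unsatisfiable (a $\co\NP$ condition). For $\mathrm{D}^{\P}$-hardness I reduce from SAT--UNSAT: given $(\varphi_{1}(\vec{u}),\varphi_{2}(\vec{v}))$ on disjoint variable sets, set $A=\{a\}$, $\Gamma=\{a\rightarrow\varphi_{1},\ \varphi_{2}\lor q\}$, and $E=\{a\}$. Then $\Gamma\cup E$ is consistent iff $\varphi_{1}$ is satisfiable; once consistent, a model with $q=0$ forces $\varphi_{2}$ to be satisfied, so $\Gamma\cup E\models q$ iff $\varphi_{2}$ is unsatisfiable.

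For part~(2), membership in $\SigmaP2$ follows by guessing $E\subseteq\Lits{A}$ and verifying it using the $\mathrm{D}^{\P}\subseteq\P^{\NP}$ test of part~(1), yielding $\NP^{\NP}=\SigmaP2$. For hardness, I reduce from the validity of $\Phi=\exists\vec{x}\,\forall\vec{y}\,\psi(\vec{x},\vec{y})$ using $A=\{x_{1},\dots,x_{n}\}$, $\Gamma=\{\psi(\vec{x},\vec{y})\rightarrow q\}$, and observation $q$. When $E$ consistently fixes all $x_{i}$'s, a model of $\Gamma\cup E$ with $q=0$ requires $\psi=0$, so $\Gamma\cup E\models q$ iff $\forall\vec{y}\,\psi(\vec{x}_{E},\vec{y})$. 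For a partial $E$, any completion $E'\supseteq E$ to all of $\vec{x}$ is still consistent and (by monotonicity of entailment) still an explanation, so an explanation exists iff $\Phi$ is valid.

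For part~(3), membership in $\PiP2$ comes from writing the problem as ``for every $E\subseteq\Lits{A}$, either $E$ is not an explanation or $p\in E$'', giving $\co\NP^{\NP}=\PiP2$. For hardness, starting from the part~(2) construction producing $(\Gamma',A',q')$ for an arbitrary $\Phi=\exists\vec{x}\,\forall\vec{y}\,\psi$, I extend to $A=A'\cup\{p\}$ and $\Gamma=\Gamma'\cup\{p\rightarrow q'\}$, keeping $q'$ as the observation. Now $\{p\}$ is always an explanation (consistency is trivial; the added clause yields entailment of $q'$), while any explanation $E$ with $p\notin E$ corresponds exactly to an explanation of the original instance: setting $p$ to false makes $p\to q'$ vacuous, so consistency and entailment of $E$ in the new instance are equivalent to those of $E\setminus\{\neg p\}$ in the original. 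Consequently every explanation contains the positive literal $p$ iff the original instance admits no explanation, which by part~(2) is equivalent to $\neg\Phi$, yielding $\PiP2$-hardness.

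The main obstacle I anticipate is the correctness of the part~(2) reduction: I must verify that a partial-assignment explanation cannot serendipitously witness $\Phi$ without any particular completion doing so, which hinges on the observation that $\Gamma\cup E\models q$ under a partial $E$ is strictly stronger than $\forall\vec{y}\,\psi(\vec{x}_{E'},\vec{y})$ for any single completion $E'$ of $E$, so monotonicity delivers the desired assignment. The extension to subset-minimal explanations in each part then follows by minor modifications, e.g., padding the hypothesis set with auxiliary literals whose presence in any explanation is forced by additional clauses in $\Gamma$.
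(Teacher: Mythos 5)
The survey itself gives no proof of this theorem---it is quoted from Eiter and Gottlob---so your argument has to stand on its own, and for the three statements about arbitrary explanations it does: the $\mathrm{D}^\P$ upper bound via one satisfiability and one entailment test, the SAT--UNSAT reduction with $\Gamma=\{a\to\varphi_1,\ \varphi_2\lor q\}$, the guess-and-check $\SigmaP2$ algorithm, the reduction from $\exists\vec{x}\,\forall\vec{y}\,\psi$ via $\Gamma=\{\psi\to q\}$ (your worry about partial explanations is exactly the point that needs checking, and completing $E$ monotonically settles it), and the necessity problem handled by adjoining a free hypothesis $p$ with the clause $p\to q'$, so that explanations avoiding $p$ correspond exactly to explanations of the original instance. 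All of this is correct and is essentially the standard argument.

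The one genuine gap is the closing sentence on subset-minimal explanations, which both understates and overstates what must change. Parts (2) and (3) in fact need \emph{no} modification: every nonempty family of explanations has a $\subseteq$-minimal member, so an explanation exists iff a minimal one does, and an explanation avoiding $p$ exists iff a minimal one avoiding $p$ does (shrink it; the positive literal $p$ does not reappear). Part (1) is where real work is needed and where your instance actually fails: whenever $\varphi_2$ is unsatisfiable, the clause $\varphi_2\lor q$ already forces $q$, so $\emptyset$ is an explanation and $E=\{a\}$ is never \emph{minimal} on yes-instances. A small repair such as $\Gamma=\{a\to\varphi_1,\ \neg a\lor\varphi_2\lor q\}$ makes $a$ indispensable for entailing $q$ and restores minimality of $\{a\}$. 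Moreover, membership of minimality checking in $\mathrm{D}^\P$ is not automatic: you need the observation that $E$ is a minimal explanation iff $E$ is an explanation and $\Gamma\cup(E\setminus\{e\})\not\models q$ for each $e\in E$ (subsets of a consistent set remain consistent and entailment is monotone, so deleting single literals suffices), which keeps the test a conjunction of one $\NP$ and one $\co\NP$ predicate. Your proposed ``padding with forced auxiliary literals'' addresses neither of these points.
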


Eiter and Gottlob~\cite{eiter-gottlob:1995} also studied the complexity of abduction for preference relations other than subset-minimality, 
but to include these results here would go beyond the scope of this survey. 
As in most cases the complexity of the first and the third problem in Theorem~\ref{thm:abd-eiter-gottlob} can be derived 
from the question whether an explanation exists, we will henceforth focus on the complexity of deciding the existence of an explanation.

Due to its applications to knowledge-based systems, 
it is natural to consider the complexity of this problem in Schaefer's framework.
Using this approach Creignou and Zanuttini~\cite{crza06} showed that the complexity of 
deciding the existence of an explanation forms a trichotomy.
Define the \emph{explanation existence problem for sets of relations from $S$} as the problem to decide, 
  given a set $\Gamma$ of applications of relations from $S$,
  a set $A \subseteq \Vars{\Gamma}$ of propositions, and 
  a proposition $q \in \Vars{\Gamma} \setminus {A}$, 
whether there exists a set $E \subseteq \Lits{A}$ such that $\Gamma \cup E$ is consistent and $\Gamma \cup E \models q$.
Say that a relation $R$ is \emph{IHS-B$-$} if if it coincides with the models of a CNF formula whose clauses are all of one of the following types: $(x_i)$, $(\neg x_{i_1} \lor x_{i_2})$, $(\neg x_{i_1} \lor \cdots \lor \neg x_{i_k})$ for some $k > 0$.
Analogously say that a relation $R$ is \emph{IHS-B$+$} if if it coincides with the models of a CNF formula whose clauses are all of one of the following types: $(\neg x_i)$, $(\neg x_{i_1} \lor x_{i_2})$, $(x_{i_1} \lor \cdots \lor x_{i_k})$ for some $k > 0$.
Clearly, any IHS-B$-$ formula (resp.\ IHS-B$+$) formula is Horn (resp.\ dual Horn).

\begin{theorem}[\cite{crza06}] \label{thm:abduction-crza}
  Let $S$ be a set of relations.
  Then the explanation existence problem for sets of relations from $S$ is 
  \begin{enumerate} \itemsep 0pt
    \item $\SigmaP2$-complete if $S$ is not Schaefer,
    \item $\NP$-complete if $S$ is either Horn or dual Horn but neither bijunctive nor affine nor definite Horn nor IHS-B$+$ nor IHS-B$-$,
    \item in $\P$ if $S$ is bijunctive or affine or definite Horn or IHS-B$+$ or IHS-B$-$.
  \end{enumerate}
\end{theorem}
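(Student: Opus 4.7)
The plan is to classify the problem by matching each regime of $S$ to the appropriate upper and lower bounds, leaning on Schaefer's dichotomy for structural information about $S$ and on the known bounds of Theorem~\ref{thm:abd-eiter-gottlob} as the ceiling case. For the $\SigmaP2$ upper bound, which is uniform across all cases, I would observe that the explanation existence problem for sets of relations from $S$ is a syntactic restriction of the unrestricted problem, so one can guess $E \subseteq \Lits{A}$ in nondeterministic polynomial time and then invoke a $\co\NP$-oracle to check that $\Gamma \cup E$ is consistent and that $\Gamma \cup E \models q$, giving $\SigmaP2$-membership verbatim as in the proof of Theorem~\ref{thm:abd-eiter-gottlob}.

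For the hardness half of case~(1), I would argue through the polymorphisms/expressibility approach: when $S$ is not Schaefer, standard Post-style implementations (using conjunctions of applications of relations from $S$ together with existential quantification over auxiliary variables) let us define the relations $x \land y$, $x \lor y$, and $\neg x$ up to logspace reductions, so one can simulate arbitrary propositional formulas in the knowledge base $\Gamma$. Composing this with the $\SigmaP2$-hardness reduction of Eiter and Gottlob (the one that underlies Theorem~\ref{thm:abd-eiter-gottlob}) yields $\SigmaP2$-hardness for $S$-relations. For case~(2), when $S$ is Horn (the dual~Horn case is symmetric), the implication problem $\Gamma \cup E \models q$ is decidable in polynomial time by unit propagation, so the guess-and-check argument places the problem in $\NP$; for the matching lower bound I would reduce from $\mathrm{SAT}$ restricted to the remaining Horn families, exploiting that $S$ fails all five tractable properties so that it can express clauses whose propagation structure is rich enough to encode guessing a satisfying assignment into the choice of $E \subseteq \Lits{A}$.

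For case~(3), I would design a case-by-case polynomial-time algorithm. For bijunctive $S$, use resolution closure of the $2$-clauses in $\Gamma$ under the hypothesis literals and test derivability of $q$; for affine $S$, work in the vector space of satisfying assignments and test whether $q$ is forced by a consistent choice of hypothesis values using Gaussian elimination; for definite Horn $S$, unit propagation from $A$ pinpoints the set of forced variables and thereby a canonical candidate $E$ in linear time. The IHS-B$+$ and IHS-B$-$ cases, which generalize definite Horn in a restricted way, would be handled by first greedily fixing the ``forced'' side of the hypothesis and then reducing to reachability in an implication graph on the remaining literals.

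The main obstacle I expect is case~(2), specifically isolating those Horn $S$ that are not already definite Horn or IHS-B$-$ and showing that these \emph{must} admit an $\NP$-hardness reduction. This amounts to exhibiting, for any such $S$, a small gadget expressing the choice of a literal in a clause of unbounded width, which is where the distinction between Horn and definite/IHS-B$-$ really bites; the algebraic characterization of these classes via their polymorphisms is the natural tool, but turning that into an explicit gadget reduction is the step that requires the most care. The tractable algorithm for the IHS-B classes is the next most delicate piece, since one must prove not only that some candidate $E$ can be computed greedily but that its nonexistence certifies that no explanation exists at all.
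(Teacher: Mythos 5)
The paper itself gives no proof of this theorem---it is a survey statement cited to Creignou and Zanuttini---so your proposal can only be judged on its own merits. Its architecture (guess-and-check with an oracle of the appropriate power for the upper bounds, expressibility/implementation arguments for the lower bounds, direct algorithms for the tractable cases) is indeed the standard route, and the $\SigmaP2$ and $\NP$ membership arguments are fine as stated. But there are genuine gaps at the points that carry the whole trichotomy. First, the $\SigmaP2$-hardness for non-Schaefer $S$: Schaefer's expressibility theorem yields all relations as existentially quantified conjunctive $S$-formulas \emph{only when the constants $\false$ and $\true$ are available}. When $S$ is $\false$-valid or $\true$-valid (but not Schaefer), the constants cannot be simulated for free, yet the theorem claims $\SigmaP2$-completeness with no exception for these cases; your sketch says nothing about how to force constants inside an abduction instance, where the knowledge base may contain only applications of relations from $S$. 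One also has to check that the auxiliary existential variables introduced by implementations are kept out of $A$ and do not disturb the consistency and entailment conditions. Second, as you concede, the $\NP$-hardness in case~(2) is not actually argued: exhibiting, for every Horn $S$ that is neither definite Horn nor IHS-B$-$ nor bijunctive nor affine, a concrete implemented relation that supports a reduction from an $\NP$-complete problem is precisely where the boundary of the trichotomy is drawn, and ``the propagation structure is rich enough'' is not a proof.

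The tractable cases also need more than you supply. Your algorithms for bijunctive and affine $S$ describe how to \emph{check} a given $E$, but the problem quantifies existentially over exponentially many candidates $E \subseteq \Lits{A}$; what is missing is a small-explanation lemma, e.g.\ for bijunctive $\Gamma$ the observation that $\Gamma \cup E \cup \{\neg q\}$ is unsatisfiable only if already $\Gamma \cup E' \cup \{\neg q\}$ is unsatisfiable for some $E' \subseteq E$ with $|E'| \leq 2$ (read off the implication graph), which reduces the search to polynomially many candidates; an analogous canonical-explanation argument is needed for the affine case after Gaussian elimination. You flag this issue for the IHS-B classes but it applies equally to bijunctive and affine. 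So the proposal is a sound plan with correct upper bounds, but the three load-bearing steps---constant simulation in the non-Schaefer hardness, the Horn/non-definite-Horn gadget, and the small-explanation lemmas---are exactly the ones left open.
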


As for restricted sets $\Gamma$ the restriction of the query $q$ to a positive proposition does no longer come without loss of generality, 
variants of this problem have been studied, where the observation is a term (a conjunction of literals), a clause or an arbitrary formula.
For example, if one allows for the query to be a literal instead of a proposition, 
then the explanation existence problem for definite Horn relations becomes $\NP$-complete (apart from that, the classification of Theorem~\ref{thm:abduction-crza} remains valid).
For the case that the observation is a term, a trichotomy has been established in~\cite{noza05}:
here only the case that all relations are affine remains tractable, while Horn relations, dual Horn relations and 
relations expressible using either only $(\neg x_{i_1} \lor x_{i_2})$, or only  $(x_{i_1} \leftrightarrow x_{i_2})$ and $(x_{i_1} \lor x_{i_2})$, or only $(x_{i_1} \leftrightarrow x_{i_2})$ and $(\neg x_{i_1} \lor \neg x_{i_2})$ lead to $\NP$-complete fragments.
Further loosening the restriction on the observations to allow for an arbitrary formula leads to a dichotomic classification into $\NP$-complete fragments (if the set of relations is Schaefer) and $\SigmaP2$-complete fragments (in all other cases).
These results are summarized and extended to also cover clauses and arbitrary formulae and to several restrictions on the hypothesis in~\cite{noza08}.

Seeking further insights into the sources of complexity, abduction has also been studied for restricted sets of Boolean functions.
In~\cite{crstth10}, Creignou, Schmidt and Thomas completely classified the complexity of the \emph{explanation existence problem for $B$-formulae}, defined as the problem to decide, given a 
  given a set $\Gamma \subseteq \allFormulae(B)$
  a set $A \subseteq \Vars{\Gamma}$ of propositions, and 
  an observation $q \in \Vars{\Gamma} \setminus {A}$, 
whether there exists a set $E \subseteq \Lits{A}$ such that $\Gamma \cup E$ is consistent and $\Gamma \cup E \models q$.

\begin{theorem}[\cite{crstth10}]
  Let $B$ be a finite set of Boolean functions.
  Then the explanation existence problem for $B$-formulae is 
  \begin{enumerate} \itemsep 0pt
    \item $\SigmaP2$-complete if $\CloneS_{02} \subseteq [B]$ or $\CloneS_{12} \subseteq [B]$ or $\CloneD_1 \subseteq [B]$,
    \item $\NP$-complete if $\CloneS_{00} \subseteq [B]\subseteq \CloneM$ or $\CloneS_{10} \subseteq [B]\subseteq \CloneM$ or $\CloneD_2 \subseteq [B]\subseteq \CloneM$, and
    \item in $\P$ in all other cases (that is, if $[B] \subseteq \CloneE$ or $[B] \subseteq \CloneV$ or $[B] \subseteq \CloneL$).
  \end{enumerate}
\end{theorem}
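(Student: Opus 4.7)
The plan is to mirror the Post-lattice strategy already used for Theorems~\ref{thm:extension-existence} and~\ref{thm:ael-exp}: for each of the three complexity regions I would establish the upper bound uniformly, prove the matching lower bound at the smallest distinguished clone in the region, and then propagate both statements along the lattice via Lemma~\ref{lem:discard_true} and the standard logspace translations between bases of a clone.

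For the upper bounds, case~(1) is inherited directly from Theorem~\ref{thm:abd-eiter-gottlob}: nondeterministically guess $E\subseteq\Lits{A}$ and use one $\co\NP$-oracle call to check both consistency of $\Gamma\cup E$ and the implication $\Gamma\cup E\models q$; this scheme is uniform in $B$. In case~(2), $[B]\subseteq\CloneM$ makes every formula in $\Gamma$ monotone, so satisfiability and implication become tractable and the same guess-and-verify algorithm places the problem in $\NP$. Case~(3) splits into three tractable subcases treated separately: for $[B]\subseteq\CloneE$ the formulas behave as conjunctions and an explanation, if it exists, can be computed greedily from the variables forced by $q$; for $[B]\subseteq\CloneV$ the formulas behave as disjunctions and the problem reduces to a reachability-style propagation; for $[B]\subseteq\CloneL$ the problem reduces to linear algebra over $\mathrm{GF}(2)$.

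For the lower bounds, $\SigmaP2$-hardness is proved at the three base clones $\CloneS_{02}$, $\CloneS_{12}$, $\CloneD_1$. Since each of these, together with $\true$, generates $\CloneBF$, I would adapt the Eiter--Gottlob reduction so that the constructed $\Gamma$ uses only a fixed base of the target clone together with $\true$, and then discard $\true$ via Lemma~\ref{lem:discard_true}. The main obstacle is the $\NP$-hardness for the monotone region ($\CloneS_{00}$, $\CloneS_{10}$, $\CloneD_2$), because monotone $\Gamma$ alone seemingly cannot encode satisfiability. The crucial observation is that $E\subseteq\Lits{A}$ is allowed to contain negative literals even when $\Gamma$ is monotone: given a $3$-CNF formula $\varphi=\bigwedge_i C_i$ on variables $x_1,\dots,x_n$, I would introduce fresh propositions $x_1',\dots,x_n'$ intended to stand for $\neg x_1,\dots,\neg x_n$, set $A:=\{x_1,\dots,x_n,x_1',\dots,x_n'\}$, and construct a monotone $\Gamma$ that (i) forces each pair $\{x_i,x_i'\}$ to contain at least one true member in any extension, and (ii) uses the characteristic ternary monotone operation available at $\CloneS_{00}$, $\CloneS_{10}$, or $\CloneD_2$ to encode each clause $C_i$ in terms of the $x_j$ and $x_j'$, so that $\Gamma\cup E\models q$ holds precisely when $E$ encodes a satisfying assignment of $\varphi$, while the consistency requirement on $\Gamma\cup E$ prevents $E$ from selecting both $x_i$ and $x_i'$ as $\true$. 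Once these base cases are settled, closure properties of $[B]$ together with Lemma~\ref{lem:discard_true} lift the hardness to every clone in the corresponding region, completing the trichotomy.
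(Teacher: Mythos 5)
Your overall architecture (uniform upper bounds per region, hardness at the minimal clones, propagation along the lattice) is the right one, and since the survey only cites \cite{crstth10} for this theorem there is no in-paper proof to compare against; I therefore assess the proposal on its own terms. The upper bounds are essentially workable, with one imprecision worth flagging: implication between monotone formulae is $\co\NP$-complete in general, so for case~(2) you cannot simply invoke ``implication becomes tractable.'' What saves the $\NP$ upper bound is the special shape of the verification step: for monotone $\Gamma$, a literal set $E$ and a variable $q$, the set $\Gamma\cup E\cup\{\neg q\}$ is satisfiable iff the single maximal candidate assignment (everything $\true$ except $q$ and the variables negated in $E$) satisfies $\Gamma$, which puts the check in $\P$. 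That argument needs to be made explicitly.

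The genuine gap is in the $\SigmaP2$-hardness. Your premise that $\CloneS_{02}$, $\CloneS_{12}$ and $\CloneD_1$ each generate $\CloneBF$ together with $\true$ is false: all three clones lie inside $\CloneR_2\subseteq\CloneR_1$, so every function in $[B\cup\{\true\}]$ is still $\true$-reproducing and the closure stays strictly inside $\CloneR_1$ (for instance $[\CloneS_{02}\cup\{\true\}]=\CloneS_0$ and $[\CloneD_1\cup\{\true\}]=\CloneR_1$). What is true is $[B\cup\{\false,\true\}]=\CloneBF$, so the missing ingredient is a simulation of the constant $\false$, which Lemma~\ref{lem:discard_true} does not provide. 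The analogous step for circumscriptive model checking in this survey already requires the nontrivial gadget $\Gamma[\false/f,\true/t]\cup\{t,\ f\to\bigwedge\Vars{\Gamma}\}$, and for abduction one must additionally verify that the spurious nearly-all-$\true$ models such a gadget admits do not break either the consistency condition (they make $\Gamma'\cup E$ consistent for every positive $E$) or the implication condition; this is exactly where the real work of the hardness proof sits, and your sketch skips it. A smaller error of the same flavour occurs in your $\NP$-hardness argument: a constant-free monotone theory is satisfied by the all-$\true$ assignment, so $\Gamma\cup E$ is consistent for \emph{every} positive $E$, and consistency therefore cannot ``prevent $E$ from selecting both $x_i$ and $x_i'$.'' The doubled-variable reduction can still be made to work, but the both-$\true$ case must be excluded by a monotonicity argument (setting both $x_i$ and $x_i'$ to $\true$ never helps falsify the clause gadget under $q=\false$), not by inconsistency.
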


The complexity of this problem for observations represented by clauses, terms and $B$-formula was also classified.
The relaxation to clauses does not alter the complexity classification; however, the relaxation to observations represented by terms increases the complexity of the cases satisfying $[B \cup \{\false,\true\}] = \CloneV$ to $\NP$-completeness.
Lastly, if the observation is formalized as a $B$-formula then the classification becomes dichotomic with the clones above $\CloneE$, $\CloneV$ or $\CloneL$ being $\SigmaP2$-complete and all remaining clones being tractable; thus skipping the intermediate $\NP$ level.

The complexity of propositional abduction has thus been systematically studied and is well understood. 
The counting complexity of abduction was first studied by Hermann and Pichler~\cite{hermann-pichler:07,hepi08}.
The problems arising in this context are the problem to count the number of all explanations as well as the problem to count the number
of minimal explanations with respect to a given preference relation, \eg, the subset-minimal explanations or explanations of minimal cardinality. 
From the set of counting problems for abduction studied by Hermann and Pichler, we will consider the following two problems:

(\emph{Explanation counting})
Given a set $\Gamma \subseteq \allFormulae$, a set $A \subseteq \Vars{\Gamma}$, and a conjunction $q_1 \land \cdots \land q_n$ or propositions from $\Vars{\Gamma} \setminus {A}$, to count the number of all sets $E \subseteq A$ such that $\Gamma \cup E$ is consistent and $\Gamma \cup E \models q_1 \land \cdots \land q_n$. 

(\emph{Subset-minimal explanation counting})
Given a set $\Gamma \subseteq \allFormulae$, a set $A \subseteq \Vars{\Gamma}$, and a conjunction $q_1 \land \cdots \land q_n$ or propositions from $\Vars{\Gamma}\setminus {A}$, to count the number of all sets $E \subseteq A$ such that $\Gamma \cup E$ is consistent and $\Gamma \cup E \models q_1 \land \cdots \land q_n$.

\begin{theorem}[\cite{hermann-pichler:07}]
  The explanation counting problem and the subset-minimal explanation counting problem are $\SharpCoNP$-complete via subtractive reductions.
\end{theorem}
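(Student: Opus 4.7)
The plan is to separately establish $\SharpCoNP$-membership and matching $\SharpCoNP$-hardness, each under subtractive reductions.

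For the \emph{upper bound}, I would use that $E\subseteq\Lits{A}$ is an explanation for $(\Gamma,A,q_1\land\cdots\land q_n)$ iff $\Gamma\cup E\models q_1\land\cdots\land q_n$ and $\Gamma\cup E$ is consistent. Since an inconsistent $\Gamma\cup E$ entails every formula, the number of explanations satisfies
\[
  \#\text{explanations} = |\{E\subseteq\Lits{A}\,:\,\Gamma\cup E\models q_1\land\cdots\land q_n\}| - |\{E\subseteq\Lits{A}\,:\,\Gamma\cup E\text{ inconsistent}\}|.
\]
Both auxiliary counts have $\co\NP$ verification predicates (entailment, respectively unsatisfiability), so they lie in $\SharpCoNP$; the subtrahend set is moreover contained in the minuend set. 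This gives a strong subtractive reduction of explanation counting to $\SharpCoNP$, and closure of $\SharpCoNP$ under subtractive reductions then places the problem in $\SharpCoNP$. For the subset-minimal variant I would exploit the local characterization that $E$ is a subset-minimal explanation iff $E$ is an explanation and, for no $\ell\in E$, the set $E\setminus\{\ell\}$ is an explanation (using monotonicity: whenever $E'\subsetneq E$ is an explanation, so is every one-removed subset $E\setminus\{\ell\}$ for $\ell\in E\setminus E'$). From this, the count of subset-minimal explanations can be expressed as an iterated subtractive difference of $\SharpCoNP$ quantities.

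For the \emph{lower bound}, I would adapt Eiter and Gottlob's $\SigmaP2$-hardness reduction underlying Theorem~\ref{thm:abd-eiter-gottlob}. That reduction turns a QBF $\exists X\forall Y.\,\varphi(X,Y)$ into an abduction instance; I would instead start from the canonical $\SharpCoNP$-complete counting problem of counting $X$-assignments $\sigma$ for which $\forall Y.\,\varphi(\sigma,Y)$ holds, and engineer the gadgets so that the explanations of the produced $(\Gamma,A,q)$ stand in explicit bijection with such $X$-assignments. This upgrades the many-one reduction to a parsimonious, a fortiori subtractive, reduction. For the subset-minimal variant I would augment $\Gamma$ so that each hypothesis literal chosen in $A$ becomes indispensable (e.g.\ by pairing positive/negative literal copies of each $X$-variable and forcing exactly one to be selected by a consistency constraint), ensuring that every explanation of the produced instance is automatically subset-minimal, and so the same bijection still counts the correct objects.

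The main obstacle will be the subset-minimal case on both sides. For membership, ``$E$ is a subset-minimal explanation'' is $\PiP{2}$-hard to decide, so the naive single-step subtractive argument does not apply; one has to stage several layers of subtraction, using that ``$E$ has a proper explanation subset'' can itself be funneled through a difference of $\SharpCoNP$ counters exploiting the above local characterization. For hardness, the minimality-enforcing gadgets must be designed delicately so as not to introduce any spurious explanations (minimal or otherwise) that would destroy the bijection with the source counting problem. Managing these two issues simultaneously is the crux of Hermann and Pichler's construction.
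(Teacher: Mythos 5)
The survey states this theorem only as a cited result of Hermann and Pichler and gives no proof of its own, so there is no in-paper argument to compare against; judged on its merits, your sketch follows what is essentially the route taken in the literature and its overall architecture is sound. The upper bound for plain explanation counting via the difference between the number of $E$ with $\Gamma\cup E\models q$ and the number of $E$ with $\Gamma\cup E$ inconsistent is correct: both witness predicates are in $\co\NP$, the subtrahend set is contained in the minuend set, and $\SharpCoNP=\#\mathord{\cdot}\PiP1$ is closed under subtractive reductions. The hardness plan, namely making the Eiter--Gottlob gadget parsimonious starting from the canonical $\SharpCoNP$-complete problem of counting the $X$-assignments $\sigma$ with $\forall Y\,\varphi(\sigma,Y)$, also works, with one caveat: a consistency constraint can only force \emph{at most} one of each complementary pair of hypotheses into $E$; the ``at least one'' direction must be enforced on the entailment side, by making $q$ underivable unless every pair is represented in $E$. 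Once that is done, all explanations have the same cardinality and hence form an antichain, so the same bijection serves for the subset-minimal variant.

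One claim in your sketch is wrong, though harmlessly so: deciding whether a given $E$ is a subset-minimal explanation is not $\PiP{2}$-hard. By your own local characterization ($E$ is an explanation and no $E\setminus\{\ell\}$ with $\ell\in E$ entails $q$, consistency of the smaller sets being inherited from $\Gamma\cup E$), this predicate lies in $\mathrm{D}^\P$, and Theorem~\ref{thm:abd-eiter-gottlob} records that it is in fact $\mathrm{D}^\P$-complete. Consequently the membership argument for the subset-minimal count needs no ``several layers of subtraction'': a single strong subtractive step suffices. Take as minuend the set of all $E$ with $\Gamma\cup E\models q$, and as subtrahend the set of those $E$ with $\Gamma\cup E\models q$ for which, in addition, $\Gamma\cup E$ is inconsistent or some $E\setminus\{\ell\}$ with $\ell\in E$ entails $q$. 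Both predicates are $\co\NP$-recognizable (a disjunction of polynomially many $\co\NP$ conditions is again in $\co\NP$), the subtrahend is contained in the minuend, and their difference is exactly the set of subset-minimal explanations. With that correction, and with the hardness gadget spelled out so that no partial selection of hypotheses can entail $q$ (which is exactly the danger you correctly identify), the proposal is a faithful reconstruction of the Hermann--Pichler result.
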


However, if one restricts $\Gamma$ be be a set of applications of relations from a fixed set $S$ of available relations, then the counting complexity drops by at least one level of the counting polynomial hierarchy:
The explanation counting problem for sets of relations from $S$ is $\SharpP$-complete if $S$ is Horn or dual Horn or bijunctive; it is contained in $\FP$ if $S$ is affine and the explanations are allowed to contain literals instead of propositions.
On the other hand, the subset-minimal explanation counting problem for sets of relations from $S$ is $\SharpP$-complete in all of the previously mentioned cases.

In addition to these results, the complexity of counting all explanations in the case that 
$\Gamma$ is represented by a set of $B$-formulae and
the observation is a single proposition was also studied in~\cite{crstth10}.
There both variants, to count the number of (positive) explanations and to count the number of literal explanations, have been studied.

\begin{theorem}[\cite{crstth10}] \label{thm:abduction-counting-creigou}
  Let $B$ be a finite set of Boolean functions.
  Then the explanation counting problem for sets of $B$-formulae is 
  \begin{enumerate} \itemsep 0pt
    \item $\SharpCoNP$-complete if $\CloneS_{02} \subseteq [B]$ or $\CloneS_{12} \subseteq [B]$ or $\CloneD_1 \subseteq [B]$,
    \item $\SharpP$-complete if $\CloneV_{2} \subseteq [B]\subseteq \CloneM$ or $\CloneS_{10} \subseteq [B]\subseteq \CloneM$ or $\CloneD_2 \subseteq [B]\subseteq \CloneM$, and
    \item in $\FP$ in all other cases (that is, if $[B] \subseteq \CloneE$ or $[B] \subseteq \CloneL$),
  \end{enumerate}
  via subtractive reductions.
  If explanations are restricted to contain positive literals only, then the problem is contained in $\FP$ for $\CloneV_2 \subseteq [B] \subseteq \CloneV$ and only known to be in $\SharpP$ for $\CloneL_2 \subseteq [B] \subseteq \CloneL$. 
\end{theorem}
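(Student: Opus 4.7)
The plan is to carry out the classification in lockstep with the explanation existence theorem for the same clone partition, lifting the logspace many-one reductions from the decision setting to subtractive reductions in the counting setting. To cut down the number of cases, I would first apply Lemma~\ref{lem:discard_true} to eliminate the Boolean constant $\true$, so that it suffices to consider representative sets $B$ with $[B\cup\{\true\}]$ one of $\CloneBF,\CloneM,\CloneV,\CloneE,\CloneL,\CloneN,\CloneI$; the remaining extension across Post's lattice follows from the fact that any function of a larger clone can be represented by a $B$-formula of polynomial size, and this representability preserves solutions of the abduction problem up to a computable number of auxiliary witnesses.

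For the upper bounds I would rely on the complexity of verifying whether a candidate $E\subseteq\Lits{A}$ is an explanation of $(\Gamma,A,q)$, since the class $\#\mathord{\cdot}\mathcal{C}$ is determined exactly by this verification complexity. In case~(1), the verifier has to check both that $\Gamma\cup E$ is consistent and that $\Gamma\cup E\models q$; the former is in $\NP$, the latter in $\co\NP$, so the verifier is in $\co\NP$ and the counting problem lies in $\#\mathord{\cdot}\co\NP=\SharpCoNP$. In case~(2), all formulas in $\Gamma\cup E$ are monotone, and consistency together with implication of a single proposition is decidable in polynomial time by a monotone form of unit propagation, placing the counting problem in $\SharpP$. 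In case~(3), if $[B]\subseteq\CloneE$ each $\Gamma\cup E$ is essentially a conjunction of literals and the set of explanations admits a direct combinatorial description; if $[B]\subseteq\CloneL$ then $\Gamma\cup E$ defines an affine subspace of $\{\false,\true\}^{\Vars{\Gamma}}$ whose cardinality is computable by Gaussian elimination over $\mathrm{GF}(2)$. Both subcases yield an $\FP$ algorithm.

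For the lower bounds I would reuse the hardness constructions from the explanation existence classification and show that they can be made subtractive. In case~(1), the $\SigmaP2$-hardness reduction essentially encodes a quantified formula $\exists\vec x\,\forall\vec y\,\psi$ as an abduction instance; pairing this instance with a companion abduction instance whose explanations correspond exactly to the \emph{spurious} assignments introduced by the auxiliary variables gives a strong subtractive reduction, and its transitive closure yields subtractive reducibility from a canonical $\SharpCoNP$-complete problem. The Post-lattice transfer across all sets $B$ with $\CloneS_{02}\subseteq[B]$, $\CloneS_{12}\subseteq[B]$, or $\CloneD_1\subseteq[B]$ is then routine. In case~(2) one starts from a parsimonious or subtractive reduction from the problem of counting satisfying assignments of a 3CNF formula and adapts the $\NP$-hardness gadgets used for explanation existence; the three representative clones $\CloneV_2$, $\CloneS_{10}$, $\CloneD_2$ have to be treated individually, after which Post's lattice propagates the result to every base in the monotone interval.

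The main obstacle is case~(2): the monotone hardness reductions typically introduce auxiliary propositions, and one has to ensure that the surplus of explanations they create can be captured by the solution count of a second, explicitly constructible instance, as required by subtractive reducibility; balancing the encoding of 3CNF satisfying assignments against this controlled overcount is the delicate technical step. The final remark of the theorem, concerning positive-only explanations, is handled separately. For $\CloneV_2\subseteq[B]\subseteq\CloneV$ the disjunctive structure lets one reduce the positive-explanation count to a polynomial-time computation via efficient implication checks, yielding $\FP$; for $\CloneL_2\subseteq[B]\subseteq\CloneL$ each positive explanation is a specific $\{\false,\true\}$-vector that can be verified in polynomial time, giving $\SharpP$ membership, while the matching lower bound remains open and is therefore not claimed.
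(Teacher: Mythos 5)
The survey states this theorem without proof (it is cited from the literature), so there is no in-paper argument to compare against; judged on its own terms, your plan has one substantive gap and two smaller ones. The substantive one is in case (2): the classification is \emph{not} in lockstep with the explanation existence problem, and the mismatch occurs exactly at the clone you would need to handle first. For $\CloneV_2 \subseteq [B] \subseteq \CloneV$ the existence problem is in $\P$ (it falls under the tractable case of the existence theorem), so there are no ``$\NP$-hardness gadgets used for explanation existence'' to adapt; the $\SharpP$-hardness must come from a genuinely counting-theoretic reduction. The mechanism is that although a set of positive disjunctions is always satisfiable, the \emph{number} of consistent sets $E \subseteq \Lits{A}$ is hard to determine once $E$ may contain negative literals: clauses $x \lor y$ together with candidate assumptions $\neg x$, $\neg y$ let one encode, for instance, the number of vertex covers of a graph, which is $\SharpP$-hard --- and this is precisely why the positive-only variant in the final remark drops back to $\FP$ for $\CloneV$. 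Your proposal, as written, would only establish hardness for the clones where existence is already $\NP$-hard ($\CloneS_{10}$, $\CloneD_2$), and would leave the $\CloneV$ interval --- the one place where counting is strictly harder than deciding, and the reason the $\FP$ cases shrink from $\{\CloneE,\CloneV,\CloneL\}$ to $\{\CloneE,\CloneL\}$ --- unproven.

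The smaller issues: (i) for the $\SharpCoNP$ upper bound, the verifier for ``$E$ is an explanation'' is $\mathrm{D}^\P$-complete (Theorem~\ref{thm:abd-eiter-gottlob}), not in $\co\NP$ as you assert; membership in $\SharpCoNP$ then follows from the identity $\#\mathord{\cdot}\co\NP = \#\mathord{\cdot}\P^{\NP}$ quoted in the preliminaries rather than from a direct $\co\NP$ verifier. (ii) For $[B] \subseteq \CloneL$, Gaussian elimination computes the number of \emph{models} of $\Gamma \cup E$ for a fixed $E$, which only yields $\SharpP$ membership; the claimed $\FP$ bound requires counting the admissible sets $E$ themselves, a different linear-algebraic computation whose delicacy is underscored by the fact that the positive-only variant on the same clone interval is only known to be in $\SharpP$ and is explicitly left open. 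The remaining architecture --- eliminating constants via Lemma~\ref{lem:discard_true}, propagating along Post's lattice, and making the $\SigmaP2$-hardness reduction subtractive by pairing it with a companion instance that counts the spurious solutions --- is the standard and plausible route.
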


The open case in Theorem~\ref{thm:abduction-counting-creigou} is equivalent to the case where $\Gamma$ is restricted to a set of affine relations. This case was already left open in~\cite{hermann-pichler:07}.

\newcommand{\etalchar}[1]{$^{#1}$}

\end{document}